\definecolor{ForestGreen}{rgb}{0.1333,0.5451,0.1333}
\definecolor{Red}{rgb}{0.9,0,0}
\crefname{property}{property}{Property}
\crefname{equation}{eq}{Eq}
\newtheorem{theorem}{Theorem}
\newtheorem{lemma}{Lemma}[section]
\newtheorem{proposition}[lemma]{Proposition}
\newtheorem{corollary}[lemma]{Corollary}
\newtheorem{observation}[lemma]{Observation}
\newtheorem*{claim*}{Claim}
\newtheorem*{proposition*}{Proposition}
\newtheorem*{lemma*}{Lemma}
\newtheorem*{corollary*}{Corollary}
\newtheorem*{remark*}{Remark}
\theoremstyle{definition}
\newtheorem{definition}{Definition}
\newtheorem*{problem*}{Problem}
\newtheorem{mdalg}{Algorithm}
\newenvironment{Algorithm}{\begin{tcolorbox}\begin{mdalg}}{\end{mdalg}\end{tcolorbox}}
\newenvironment{algorithm}
{\begin{tabular}{|l|}\hline\begin{minipage}{1in}\normalfont\begin{tabbing}
        \quad\=\qquad\=\qquad\=\qquad\=\qquad\=\qquad\=\qquad\=\qquad\=\qquad\=\qquad\=\qquad\=\qquad\=\qquad\=\kill}
{\end{tabbing}\end{minipage}\\\hline\end{tabular}}
\definecolor{DarkRed}{rgb}{0.5,0.1,0.1}
\definecolor{Magenta}{rgb}{255,0,255}
\def\R{\mathcal{R}}
\def\S{\mathcal{S}}
\DeclareMathOperator{\disc}{disc}
\DeclareMathOperator{\tr}{tr}
\DeclareMathOperator{\herdisc}{herdisc}
\def\discv{\textnormal{disc}_v}
\def\disce{\textnormal{disc}_e}
\def\hdiscv{\textnormal{herdisc}_v}
\def\hdisc{\textnormal{herdisc}}
\def\hdisce{\textnormal{herdisc}_e}
\newcommand{\card}[1]{\left| #1 \right|}
\def\eps{\varepsilon}
\newcommand{\paren}[1]{\ensuremath{\left(#1\right)}\xspace}
\newcommand{\bracket}[1]{\left[#1\right]}
\newcommand{\expect}[1]{\mathbb{E}\bracket{#1}}
\newcommand{\Av}{\ensuremath{A^{(v)}}\xspace}
\newcommand{\Ae}{\ensuremath{A^{(e)}}\xspace}
\newenvironment{tbox}{\begin{tcolorbox}[
		enlarge top by=5pt,
		enlarge bottom by=5pt,
		 breakable,
		 boxsep=0pt,
                  left=4pt,
                  right=4pt,
                  top=10pt,
                  arc=0pt,
                  boxrule=1pt,toprule=1pt,
                  colback=white
                  ]
	}
{\end{tcolorbox}}
\newcommand{\norm}[1]{\ensuremath{\|#1\|}\xspace}
\newcommand{\rr}{\mathbb{R}}
\newcommand{\thickhline}{%
    \noalign {\ifnum 0=`}\fi \hrule height 1.3pt
    \futurelet \reserved@a \@xhline
}
\def\reals{\mathbb{R}}
\newcommand{\prob}[1]{\Pr\left[#1\right]}
\newcommand{\Oish}{\widetilde{O}}
\newcommand{\Omegaish}{\widetilde{\Omega}}
\newcommand{\Thetaish}{\widetilde{\Theta}}
\title{The Discrepancy of Shortest Paths}
\newcommand*\samethanks[1][\value{footnote}]{\footnotemark[#1]}
\author{Greg Bodwin\thanks{University of Michigan, \texttt{\{bodwin, garytho\}@umich.edu}. This work was supported by NSF:AF 2153680.
}
\and Chengyuan Deng\thanks{Rutgers University, \texttt{\{cd751, jg1555, jalaj.upadhyay, wc497\}@rutgers.edu}. Jalaj Upadhyay is funded by Decanal Research Grant number 302918. Deng and Gao have been partially supported by NSF through CCF-2118953, CCF-2208663, IIS-2229876, DMS-2220271 and CRCNS-2207440.}
\and Jie Gao\samethanks
\and Gary Hoppenworth\samethanks[1]
\and  Jalaj Upadhyay\samethanks
\and Chen Wang\samethanks}
\date{\vspace{-4ex}}
\begin{document}

\maketitle

\thispagestyle{empty}

\begin{abstract} 
The \textit{hereditary discrepancy} of a set system is a quantitative measure of the pseudorandom properties of the system.
Roughly speaking, hereditary discrepancy measures how well one can $2$-color the elements of the system so that each set contains approximately the same number of elements of each color.
Hereditary discrepancy has numerous applications in computational geometry, communication complexity and derandomization. More recently, the hereditary discrepancy of the set system of \textit{shortest paths} has found applications in differential privacy [Chen et al.~SODA 23].

The contribution of this paper is to improve the upper and lower bounds on the hereditary discrepancy of set systems of unique shortest paths in graphs. 
In particular, we show that any system of unique shortest paths in an undirected weighted graph has hereditary discrepancy $O(n^{1/4})$, and we construct lower bound examples demonstrating that this bound is tight up to $\text{polylog } n$ factors.
Our lower bounds hold even for planar graphs and bipartite graphs, and improve a previous lower bound of $\Omega(n^{1/6})$ obtained by applying the trace bound of Chazelle and Lvov [SoCG'00] to a classical point-line system of  Erd\H{o}s.

As applications, we improve the lower bound on the additive error for differentially-private all pairs shortest distances from $\Omega(n^{1/6})$ [Chen et al.~SODA 23] to $\widetilde{\Omega}(n^{1/4})$, and we improve the lower bound on additive error for the differentially-private all sets range queries problem to $\widetilde{\Omega}(n^{1/4})$, which is tight up to $\text{polylog } n$ factors [Deng et al. WADS 23].

\end{abstract}
\thispagestyle{empty}

\clearpage
\tableofcontents
\thispagestyle{empty}
\newpage
\setcounter{page}{1}
\setcounter{page}{1}

\section{Introduction}

In graph algorithms, a fundamental problem is to efficiently compute distance or shortest path information of a given input graph.
Over the last decade or so, the community has increasingly sought a principled understanding of the \textit{combinatorial structure} of shortest paths, with the goal to exploit this structure in algorithm design.
That is, in various graph settings, we can ask:

\begin{quote}
\textit{What notable structural properties hold for \textbf{shortest} path systems, that do not necessarily hold for \textbf{arbitrary} path systems?}
\end{quote}

\noindent The following are a few of the major successes of this line of work:

\begin{itemize}
\item An extremely popular strategy in the literature is to use \textit{hitting sets}, in which we (often randomly) generate a set of nodes $S$ and argue that it will hit a shortest path for every pair of nodes that are sufficiently far apart.
Hitting sets rarely exploit any structure of shortest paths, as evidenced by the fact that most hitting set algorithms generalize immediately to arbitrary set systems.
However, they have inspired a successful line of work into graphs of bounded \textit{highway dimension} \cite{abraham2010highway, bast2016route, blum2022sparse}; very roughly, these are graphs whose shortest paths admit unusually efficient hitting sets of a certain kind.

\item Shortest paths exhibit the notable structural property of \textit{consistency}, i.e., any subpath of a shortest path is itself a shortest path.
This fact is used throughout the literature on graph algorithms \cite{CE06, cormen2022introduction, Bodwin19}, including e.g.\ in the classic Floyd-Warshall algorithm for All-Pairs Shortest Paths.
A recent line of work has sought to characterize the additional structure exhibited by shortest path systems, beyond consistency \cite{Bodwin19, CE06, cizma2022geodesic, cizma2023irreducible, chudnovsky2023structure, amiri2020disjoint, AW23}.

\item Planar graphs have received special attention within this research program, and planar shortest path systems carry some notable additional structure.
For example, it is known that planar shortest paths have unusually efficient tree coverings \cite{balzotti2022non, chang2023covering}, and that their shortest paths can be compressed into surprisingly small space \cite{CGMW18, chang2022near}. 
Shortest path algorithms also often benefit from more general structural facts about planar graphs, such as separator theorems \cite{henzinger1997faster, har2011simple}.
\end{itemize}

The main result of this paper is a new structural separation between shortest path systems and arbitrary path systems, expressed through the lens of \textit{discrepancy theory}.
We will come to formal definitions of discrepancy in just a moment, but at a high level, discrepancy has been described as a quantitative measure of the combinatorial pseudorandomness of a discrete system~\cite{chung1989quasi}, and it has widespread applications in  discrete and computational geometry, random sampling and derandomization, communication complexity, and much more\footnote{We refer to the excellent textbooks of Alexander, Beck, and Chen~\cite{alexander200413}, Chazelle \cite{Chazelle2000-rd}, Matou{\v s}ek~\cite{Matousek1999-fz} for discussion and further applications. }.
We will show the following:
\begin{theorem} [Main Result, Informal]
The discrepancy of unique shortest path systems in weighted graphs is inherently smaller than the discrepancy of arbitrary path systems in graphs.
\end{theorem}

This separation between unique shortest paths and arbitrary paths is due to the structural property of  \textit{consistency} of unique shortest path systems, which is well-studied in the literature \cite{CE06, cormen2022introduction, Bodwin19}.


Our results can be placed within a larger context of prior work in computational geometry. A classical topic in this area is to determine the discrepancy of incidence structures between points and geometric range spaces such as axis-parallel rectangles, half-spaces, lines, and curves  
(cf.\ \cite[Section 1.5]{Chazelle2000-rd}). These results have been used to show lower bounds for geometric range searching~\cite{Toth2017-du,muthukrishnan2012optimal}. 

Indeed, systems of unique shortest paths in graphs capture some of the geometric range spaces studied in prior work. For instance, arrangements of straight lines in Euclidean space can be interpreted as systems of unique shortest paths in an  associated graph, implying a relation between the discrepancies of these two set systems. This connection has recently found applications in the study of differential privacy on shortest path distance and range query algorithms~\cite{ghazi2022differentially,deng2023differentially}.

More generally, discrepancy on graphs have also found applications in proving tight lower bounds on answering cut queries on graphs~\cite{eliavs2020differentially,liu2023optimal}. We provide a detailed literature review  for discrepancy on graphs in \Cref{apdx:graph-disc}. \Cref{apdx:curve-disc} further discusses the connection between our results and the discrepancy of arrangements of curves.


\subsection{Formal Definitions of Discrepancy}
We first collect the basic definitions needed to understand this paper. 
\begin{definition} [Edge and Vertex Incidence Matrices]
\label{def:disc-herdisc}
Given a graph $G = (V, E)$ and a set of paths $\Pi$ in $G$, the associated {\em vertex incidence matrix} is given by $A \in \rr^{|\Pi| \times |V|}$, where for each $v \in V$ and $\pi \in \Pi$ the corresponding entry is
\[A_{\pi, v} = \begin{cases}
1 & \text{if } v \in \pi\\
0 & \text{if } v \notin \pi.
\end{cases}
\]
The associated {\em edge incidence matrix} is given by $A \in \rr^{|\Pi| \times |E|}$, where for each $e \in E$ and $\pi \in \Pi$ the corresponding entry is
\[A_{\pi, e} = \begin{cases}
1 & \text{if } e \in \pi\\
0 & \text{if } e \notin \pi.
\end{cases}
\]
\end{definition}

\begin{definition} [Discrepancy and Hereditary Discrepancy]
Given a matrix $A \in \rr^{m \times n}$, its \textit{discrepancy} is the quantity
\[\disc(A) = \min_{x \in \{1, -1\}^n} \|Ax\|_\infty.\]
Its \textit{hereditary discrepancy} is the maximum discrepancy of any submatrix $A_Y$ obtained by keeping all rows but only a subset $Y \subseteq [n]$ of the columns; that is,
\[\hdisc(A) = \max_{Y \subseteq [n]} \discv(A_Y).\]
For a system of paths $\Pi$ in a graph $G$, we will write $\discv(\Pi), \hdiscv(\Pi)$ to denote the (hereditary) discrepancy of its vertex incidence matrix, and $\disce(\Pi), \hdisce(\Pi)$ to denote the (hereditary) discrepancy of its edge incidence matrix.
\end{definition}

For intuition, the vertex discrepancy of a system of paths $\Pi$ can be equivalently understood as follows.
Suppose that we color each node in $G$ either red or blue, with the goal to balance the red and blue nodes on each path as evenly as possible.
The discrepancy associated to that particular coloring is the quantity
\[\max_{\pi \in \Pi} \bigg| \ \left|\left\{v \in \pi \ \mid \ v \text{ colored red}\right\}\right| - \left|\left\{v \in \pi \ \mid \ v \text{ colored blue}\right\}\right| \ \bigg|.\]
The discrepancy of the system $\Pi$ is the minimum possible discrepancy over all colorings.
The hereditary discrepancy is the maximum discrepancy taken over all \textit{induced path subsystems} $\Pi'$ of $\Pi$; that is, $\Pi'$ is obtained from $\Pi$ by selecting zero or more vertices from $G$, deleting these vertices, and deleting all instances of these vertices from all paths.\footnote{In the coloring interpretation, hereditary discrepancy allows a different choice of coloring for each subsystem $\Pi'$, rather than fixing a coloring for $\Pi$ and considering the induced coloring on each $\Pi'$.}
We may delete nodes from the middle of some paths $\pi \in \Pi$, in which case $\Pi'$ may no longer be a system of paths in $G$, but rather a system of paths in some other graph $G'$ with fewer nodes and some additional edges. Nonetheless, its vertex incidence matrix and therefore $\hdiscv(\Pi')$ remain well-defined with respect to this new graph $G'$.
Edge discrepancy can be understood in a similar way, coloring edges rather than vertices.

\subsection{Our Results}

Our main result is an upper and lower bound on the hereditary discrepancy of unique shortest path systems in weighted graphs, which match up to hidden $\text{polylog } n$ factors.
\begin{theorem} [Main Result] \label{thm:intromain} ~
\begin{itemize}
\item \textbf{ \textup{(Upper Bound)}} For any $n$-node undirected weighted graph $G$ with a unique shortest path between each pair of nodes, there exists a polynomial-time algorithm that finds a coloring for the system of shortest paths $\Pi$ such that:
\[\hdiscv(\Pi) \le \Oish(n^{1/4}) \qquad \text{and} \qquad \hdisce(\Pi) \le \Oish(n^{1/4}).\]

\item \textbf{\textup{ (Lower Bound)}} There are examples of $n$-node undirected weighted graphs $G$ with a unique shortest path between each pair of nodes in which this system of shortest paths $\Pi$ has
$\hdiscv(\Pi) \ge \Omegaish(n^{1/4})$ \text{and}  $\hdisce(\Pi) \ge \Omegaish(n^{1/4}).$
In fact, in these lower bound examples we can take $G$ to be planar or bipartite.
\end{itemize}
\end{theorem}

This theorem has immediate applications in differential privacy; we refer to Theorem \ref{thm:intro-dp-apsd-lb} discussed below.  We can strengthen the hereditary discrepancy lower bound into a vertex (non-hereditary) discrepancy lower bound in the undirected and directed settings. 
We leave open whether our lower bound extends to (non-hereditary) edge discrepancy as well, and to vertex or edge discrepancy of planar graphs. 
We refer to \Cref{tab:res-summary} for a list of our results in these settings.

\begin{table}[ht!]
\renewcommand{\arraystretch}{1.5}
\centering
\caption{ Overview of vertex/edge (hereditary) discrepancy on general graphs and special families of graph: tree, bipartite and planar graphs. Here $n$ is the number of vertices of the graph and $m$ is the number of edges. $D$ is the graph diameter or the longest number of hops of paths considered.}

\vspace{0.5em}
\label{tab:res-summary}
\begin{adjustbox}{width=1\textwidth}
\begin{tabular}{c|c|c|c|c|c|c} 
\thickhline
 &  & Tree & Bipartite & Planar Graph & Undirected Graph & Directed Graph    \\ 
\thickhline

\multirow{2}{*}{\begin{tabular}[c]{@{}c@{}}Vertex \end{tabular}}   & \multirow{1}{*}{\begin{tabular}[c]{@{}c@{}}Discrepancy \end{tabular}} & \multirow{1}{*}{\begin{tabular}[c]{@{}c@{}}$\Theta(1)$ \end{tabular}} & \multirow{1}{*}{\begin{tabular}[c]{@{}c@{}}$\Theta(1)$ \end{tabular}} & $ O(n^{1/4})$ &  $  \widetilde{\Theta}(n^{1/4})$ & $\widetilde{\Theta}(n^{1/4})$  \\

& \multirow{1}{*}{\begin{tabular}[c]{@{}c@{}}Hereditary Disc \end{tabular}}  & \multirow{1}{*}{\begin{tabular}[c]{@{}c@{}}$\Theta(1)$ \end{tabular}} & \multirow{1}{*}{\begin{tabular}[c]{@{}c@{}}$\widetilde{\Theta}(n^{1/4})$ \end{tabular}}& $\widetilde{\Theta}(n^{1/4})$  & $\Omega(n^{1/6})\cite{Chazelle2000-ld} \rightarrow \widetilde{\Theta}(n^{1/4}) $ & $\widetilde{\Theta}(n^{1/4})$ \\

\hline

\multirow{2}{*}{\begin{tabular}[c]{@{}c@{}}Edge \end{tabular}}   & \multirow{1}{*}{\begin{tabular}[c]{@{}c@{}}Discrepancy \end{tabular}} & \multirow{1}{*}{\begin{tabular}[c]{@{}c@{}}$\Theta(1)$ \end{tabular}} & \multirow{1}{*}{\begin{tabular}[c]{@{}c@{}}$\Theta(1)$ \end{tabular}} & $ O(n^{1/4})$ &  $O(n^{1/4})$  & $  \min\left\{O(m^{1/4}), \widetilde{O}(D^{1/2})\right\}$  \\

& \multirow{1}{*}{\begin{tabular}[c]{@{}c@{}}Hereditary Disc \end{tabular}}  & \multirow{1}{*}{\begin{tabular}[c]{@{}c@{}}$\Theta(1)$ \end{tabular}} & \multirow{1}{*}{\begin{tabular}[c]{@{}c@{}}$\widetilde{\Theta}(n^{1/4})$ \end{tabular}}  & $\widetilde{\Theta}(n^{1/4})$  &  $\Omega(n^{1/6})\cite{Chazelle2000-ld} \rightarrow \widetilde{\Theta}(n^{1/4})$  & $ \widetilde{\Omega}(n^{1/4})$ \\

\thickhline

\end{tabular}
\end{adjustbox}
\end{table}

The upper bound in Theorem \ref{thm:intromain} is constructive and algorithmic; that is, we provide an algorithm that colors vertices (resp. edges) of the input graph to achieve vertex (resp. edge) discrepancy $\Oish(n^{1/4})$ on its shortest paths (or on a given subsystem of its shortest paths). Notably, Theorem \ref{thm:intromain} should be contrasted with the fact that the maximum possible discrepancy of \textit{any} simple path system of polynomial size in a general graph is known to be $\Thetaish(n^{1/2})$.\footnote{A path system is \textit{simple} if no individual path repeats nodes.  The upper bound of $\Oish(n^{1/2})$ follows by coloring the nodes randomly and applying standard Chernoff bounds.  The lower bound is nontrivial and follows from an analysis of the Hadamard matrix; see \cite{Chazelle2000-rd}, Section 1.5.} In fact, the lower bound on discrepancy (as well as hereditary discrepancy) for a grid graph for a polynomial number of \textit{simple} paths can be $\Omega(\sqrt{n})$ (see \Cref{apdx:discrepancy-path-inconsistent} for a proof and more discussion on grid graphs).
Thus, Theorem \ref{thm:intromain} represents a concrete separation between unique shortest path systems and general path systems.

The main open question that we leave in this work is on the hereditary edge discrepancy of shortest paths in \textit{directed} weighted graphs.
We show the following:
\begin{restatable}{theorem}{digraphopen}
\label{thm:digraph-open}
    For any $n$-node, $m$-edge directed weighted graph $G$ with a unique shortest path between each pair of nodes, the system of shortest paths $\Pi$ satisfies
\[\hdiscv(\Pi) \le O(n^{1/4}) \qquad \text{and} \qquad \hdisce(\Pi) \le O(m^{1/4}).\]
\end{restatable}

Lower bounds in the undirected setting immediately apply to the directed setting as well, and so this essentially closes the problem for directed hereditary \textit{vertex} discrepancy.
It is an interesting open problem whether the bound for directed hereditary \textit{edge} discrepancy can be improved to $\Oish(n^{1/4})$ as well.
\vspace{2mm}

\noindent
\textbf{Applications to  Differential Privacy.} 
One application of our discrepancy lower bound on unique shortest paths is in differential privacy (DP)~\cite{dwork2006calibrating, dwork2014algorithmic}. An algorithm is differentially private if its output distributions are relatively close regardless of whether an individual's data is present in the data set. More formally, for two databases $Y$ and $Y'$ that are identical except for one data entry, a randomized algorithm $\mathcal{M}$ is $(\eps,\delta)$ differentially private if for any measurable set $A$ in the range of $\mathcal M$, 
$\prob{\mathcal{M}(Y) \in A}\leq e^{\eps} \prob{\mathcal{M}(Y') \in A} +\delta.$ 

The topic of discrepancy of paths on a graph is related to two problems already studied in differential privacy: \textit{All Pairs Shortest Distances (APSD)} (\cite{ghazi2022differentially,fan2022breaking,sealfon2016shortest}) and \textit{All Sets Range Queries (ASRQ)} (\cite{deng2023differentially}), both assuming the graph topology is public.  In APSD problem, the edge weights are not publicly known. A query in APSD is a pair of vertices $(u,v) \in V \times V$ and the answer is the shortest distance between $u$ and $v$. In contrast, in ASRQ problem, the edge weights are assumed to be known, and every edge also has a private \textit{attribute}. Here, the range is defined by the shortest path between two vertices (based on publicly known edge weights). The answer to the query $(u,v) \in V \times V$ then is the sum of private attributes along the shortest path. 
In what follows,  we give a high-level argument for the lower bound on DP-APSD problem; the lower bound of $\Omegaish(n^{1/4})$ for the DP-ASRQ problem also follows nearly the same arguement (see \Cref{apdx:asrq} for details).

Chen \textit{et al.}~\cite{ghazi2022differentially} showed that DP-APSD can be formulated as a linear query problem. In this setting,  we are given a  vertex incidence matrix $A$ of the ${n\choose 2}$ shortest paths of a graph and a vector $x$ of length $n$ and asked to output $Ax$. 
They show that the hereditary discrepancy of the matrix $A$ provides a lower bound on the $\ell_{\infty}$ error for any $(\eps, \delta)$-DP mechanism for this problem. With this argument, our new discrepancy lower bound immediately implies:

\begin{theorem}[Informal version of Corollaries \ref{cor:dp-apsd-lb} and \ref{cor:dp-asrq-lb}]
\label{thm:intro-dp-apsd-lb}
The $(\eps, \delta)$-DP APSD problem and $(\eps, \delta)$-DP ASRQ problem  require additive error at least $\Omegaish(n^{1/4})$.
\end{theorem}

The best known additive error bound for the DP-ASRQ problem is $\widetilde O(n^{1/4})$~\cite{deng2023differentially}, which, by Theorem~\ref{thm:intro-dp-apsd-lb}, is tight up to a $\text{polylog}(n)$ factor. Prior to this work, the only known lower bounds for DP-ASRQ and DP-APSD  were from a point-line system with  hereditary discrepancy of $\Omega(n^{1/6})$~\cite{ghazi2022differentially}.
The best known additive error upper bound for DP-APSD is $\Oish(n^{1/2})$~\cite{ghazi2022differentially,fan2022breaking}.  Closing this gap remains an interesting open problem. 


In addition to differential privacy, our hereditary discrepancy results also have implications for matrix analysis. In short, we can show that the factorization norm of the shortest path incidence matrix is $\Thetaish(n^{1/4})$. We delay a detailed discussion to \Cref{apdx:app-factorization-norm}.

\subsection{Our Techniques}

We will overview our upper and lower bounds on discrepancy separately.
\vspace{2mm}

\noindent\textbf{Upper Bound Techniques.}
A folklore structural property of unique shortest paths is \textit{consistency}.
Formally, a system of undirected paths $\Pi$ is consistent if for any two paths $\pi_1, \pi_2$, their intersection $\pi_1 \cap \pi_2$ is a (possibly empty) contiguous subpath of each.
It is well known that, for any undirected graph $G = (V, E, w)$ with unique shortest paths, its system of shortest paths $\Pi$ is consistent.
An analogous fact holds for directed graphs.
Our discrepancy upper bounds will actually apply to \textit{any} consistent system of paths -- not just those that arise as unique shortest paths in a graph.

We give our upper bounds on the discrepancy of consistent systems in two steps.
First, we prove the existence of a low-discrepancy coloring using a standard application of \textit{primal shatter functions} (\Cref{def:shatter-function}).
For consistent paths, the primal shatter function has degree two in both directed and undirected graphs. This immediately gives us an upper bound of $O(n^{1/4})$ for vertex discrepancy and $O(m^{1/4})$ for edge discrepancy (since edge discrepancy is defined on a ground set of $m$ edges in the graph $G$).

When the graph is dense, this upper bound on edge discrepancy deteriorates, becoming trivial when $m = \Theta(n^2)$. 
We thus present a second proof of $\tilde{O}(n^{1/4})$ for \textit{both} vertex and edge discrepancy, which explicitly constructs a low-discrepancy coloring. This improves the bound for vertex discrepancy by polylogarithmic factors and edge discrepancy by polynomial factors.
The main idea in this construction is to adapt the \textit{path cover} technique, used in the recent breakthrough on shortcut sets \cite{kogan2022new}.
That is, we start by finding a small base set of roughly $n^{1/2}$ node-disjoint shortest paths in the distance closure of the graph. 
These paths have the property that any other shortest path $\pi$ in the graph contains at most $O(n^{1/2})$ nodes that are not in any paths in the base set.
We then color \textit{randomly}, as follows:
\begin{itemize}
\item For every node that is not contained in any path in the base set, we assign its color randomly.
Thus, applying concentration bounds, the contribution of these nodes to the discrepancy of $\pi$ will be bounded by $\pm \Oish(n^{1/4})$. 

\item For every path in the base set, we choose the color of the first node in the path at random, and then alternate colors along the path after that. Then we can argue that by consistency, the nodes in each base path randomly contribute  $+1$ or $-1$ (or $0$) to the discrepancy of $\pi$ (see \Cref{fig:intro-coloring} for a visualization). 
Since there are only $n^{1/2}$ paths in the base set, we may again apply concentration bounds to argue that the contribution to discrepancy from these base paths will only be $\pm \Oish(n^{1/4})$.
\end{itemize}

\begin{figure}[htp]
\centering
\includegraphics[width=0.55\linewidth]{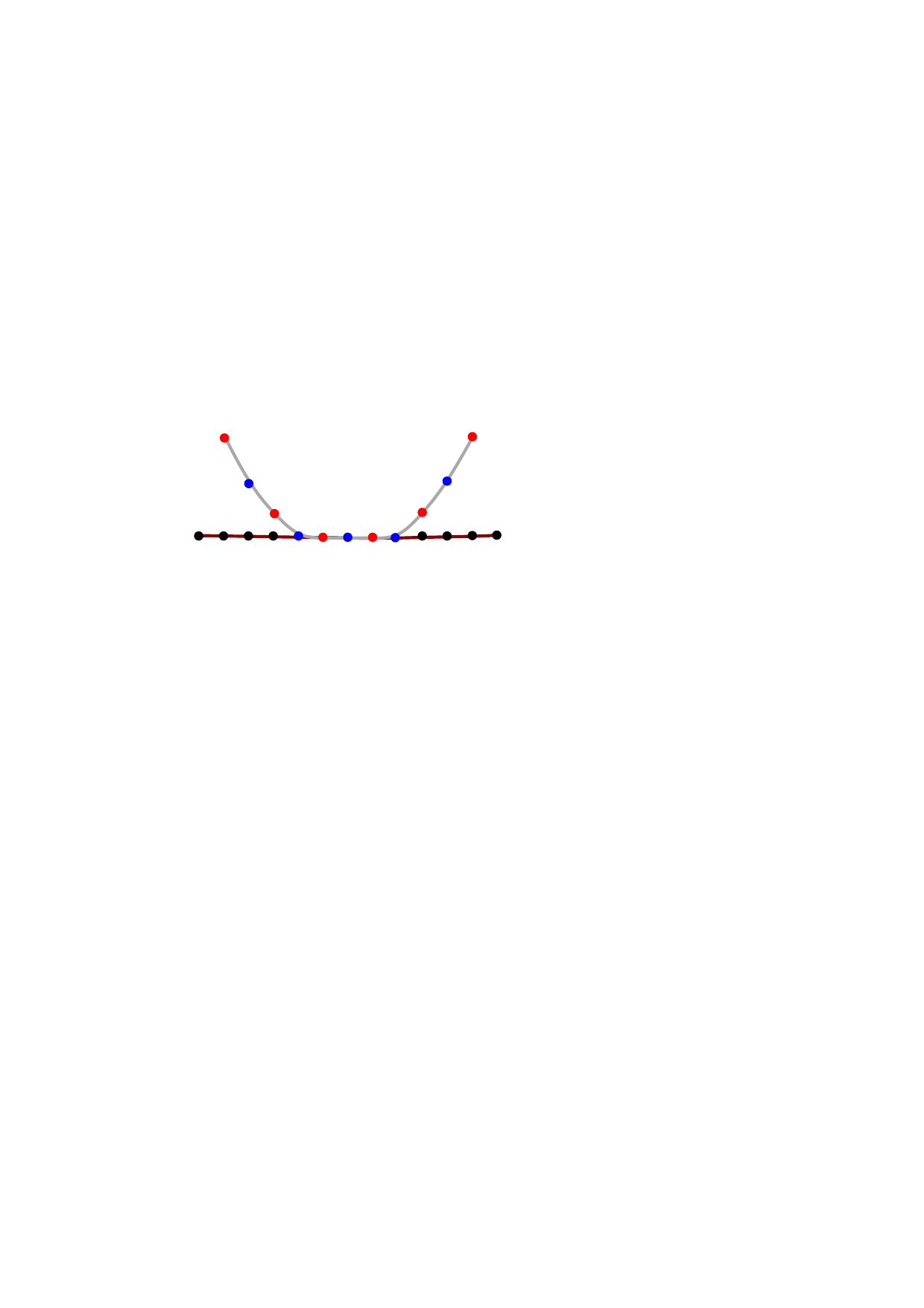}
\caption{If we color the nodes of a unique shortest path with alternating colors, then its nodes will contribute discrepancy $0$, $+1$, or $-1$ to all unique shortest paths that intersect it. }
\label{fig:intro-coloring}
\end{figure}

Summing together these two parts, we obtain a bound of $\Oish(n^{1/4})$ on discrepancy, which holds with high probability.
We can translate this to a bound on \textit{hereditary} discrepancy using the fact that consistency is a hereditary property of path systems.


\vspace{2mm}

\noindent 
\textbf{Lower Bound Techniques.}
For lower bounds, we apply the trace bound of \cite{Chazelle2000-ld} on hereditary discrepancy  together with an explicit graph construction~\cite{bodwin2023folklore} that was recently proposed as a lower bound against \textit{hopsets} in graphs.
An (exact) hopset of a graph $G$ with hopbound $\beta$ is a small set of additional edges $H$ in the distance closure of $G$, such that every pair of nodes has a shortest path in $G \cup H$  containing at most $\beta$ edges.

Until recently, the  state-of-the-art hopset lower bounds were achieved using a  point-line construction of Erd\H{o}s~\cite{Pach2011-nq}, which had $n$ points and $n$ lines in $\mathbb{R}^2$ with each point staying on $\Theta(n^{1/3})$ lines and each line going through $\Theta(n^{1/3})$ points.
This point-line system also implies tight lower bounds for the Szemer{\' e}di-Trotter theorem and the discrepancy of arrangements of lines in the plane \cite{Chazelle2000-ld}, 
as well as 
the previous state-of-the-art lower bound on the discrepancy of  unique shortest paths. 

This point-line construction can be associated with a graph that possesses useful properties derived from geometry. If  edges in this graph are weighted by Euclidean distance, then the paths in the graph corresponding to straight lines are unique shortest paths by design. On the other hand, two such shortest paths (along straight lines) only intersect at most once. 

Recently, a construction in~\cite{bodwin2023folklore} obtained stronger hopset lower bounds with a different geometric graph construction, which still took place in $\rr^2$ but  allowed shortest paths to have many vertices/edges in common.
We show that this construction can be repurposed to derive a stronger lower bound of $\Omegaish(n^{1/4})$ on vertex hereditary discrepancy, by applying the trace bound of \cite{Chazelle2000-ld}.  
Combined with our upper bounds, this substantially improves our understanding of the discrepancy of unique shortest paths.

The above upper and lower bounds are for general graphs.
Naturally, one can ask if we have better bounds for special families of graphs.
We further show that the lower bounds remain the same for two interesting families: planar graphs and bipartite graphs.
The lower bound construction mentioned above is not planar, and so this requires some additional work.
A natural attempt is to restore planarity by adding vertices to the construction wherever two edges cross.
However, this comes at a cost of an increase in the number of vertices, and also with a potential danger of altering the shortest paths.
In \Cref{sec:planar} we first show that the number of crossings is not too much higher than $n$.
Then, by carefully changing the weights of the edges and by exploiting the geometric properties of the construction, we show that the topology and incidence of shortest paths are not altered.
For bipartite graphs, although the vertex discrepancy can be made very low -- by coloring the vertices on one side $+1$ and vertices on the other side $-1$ -- the hereditary discrepancy can be as high as the general graph setting. Specifically, we show a 2-lift of any graph $G$ to a bipartite graph which essentially keeps the same hereditary discrepancy. Details can be found in \Cref{apdx:tree-bipartite}.

\section{Preliminaries}

A \textit{path system} is a pair $S = (V, \Pi)$ where $V$ is a ground set of nodes and $\Pi$ is a set of vertex sequences called \textit{paths}. Each path may contain at most one instance of each node. 
We now formally define consistency,  a structural property of unique shortest paths that will be useful. 

\begin{definition}
A path system $S = (V, \Pi)$ is \textit{consistent} if no two paths in $S$ intersect, split apart, and then intersect again later. Formally:
\begin{itemize}
\item In the undirected setting, consistency means that for all $u, v \in V$ and all $\pi_1, \pi_2 \in \Pi$ such that $u, v \in \pi_1 \cap \pi_2$, we have that $\pi_1[u, v] = \pi_2[u, v]$, i.e., the intersection of $\pi_1$ and $\pi_2$ is a \textit{contiguous} subpath (subsequence) of $\pi_1$ and $\pi_2$. 

\item In the directed setting, consistency means that for all $u, v \in V$ and all $\pi_1, \pi_2 \in \Pi$ such that $u$ \textbf{precedes} $v$ in both $\pi_1 $ and $\pi_2$, we have that $\pi_1[u, v] = \pi_2[u, v]$.
\end{itemize}
\end{definition}


In every weighted graph for which all pairs shortest paths exist (i.e. no negative cycles), 
we can represent all-pairs shortest paths using a consistent path system. 
In particular, if all shortest paths are \textit{unique}, then consistency is implied immediately. 

We will investigate the combinatorial discrepancy of path systems $(V, \Pi)$.  Usually, we will assume that $|V| = n$ and $|\Pi|$ is polynomial in $n$. We define a vertex coloring $\chi: V \mapsto \{-1, 1\}$ and define the \textit{discrepancy} of $\Pi$ as
\[
\disc(\Pi) = \min_{\chi} \chi(\Pi),\, \quad \mbox{where } \chi(\Pi)=\max_{\pi \in \Pi} \left| \chi(\pi)\right|,\, \chi(\pi)=\sum_{v \in \pi} \chi(v). 
\]

Using a random coloring $\chi$, we can guarantee that for all paths $\pi \in \Pi$~\cite{Chazelle2000-rd}:  $$|\chi(\pi)|\leq \sqrt{2|\pi|\ln (4|\Pi|)}.$$
This immediately provides a few observations. 

\begin{observation}\label{obs:random-coloring}
When $\Pi$ is a set of paths with size polynomial in $n$, then $\disc(\Pi)=O(\sqrt{n\log n})$. This bound is true even for paths that are possibly non-consistent. 
\end{observation}
\begin{observation}
 When the longest path in $\Pi$ has $D$ vertices we have $\disc(\Pi)=O(\sqrt{D\log n})$. Thus, for graphs that have a small diameter (e.g., small world graphs), the discrepancy of shortest paths is automatically small. 
\end{observation}

\textit{Hereditary discrepancy} is a more robust measure of the complexity of a path system $(V, \Pi)$, defined as $\herdisc(\Pi)=\max_{Y\subseteq V} \disc(\Pi|_Y)$, where $\Pi|_Y$ is the collection of sets of the form $\pi \cap Y$ with $\pi\in \Pi$.  Clearly, $\herdisc(\Pi)\geq \disc(\Pi)$. Sometimes the discrepancy of a set system may be small while the hereditary discrepancy is large~\cite{Chazelle2000-rd}. Thus in the literature, we often talk about lower bounds on the hereditary discrepancy.



Now that we have defined vertex and edge (hereditary) discrepancy, one may wonder if there is an underlying relationship between vertex and edge (hereditary) discrepancy since they share the same bounds in most settings presented in \Cref{tab:res-summary}. The following observation shows that vertex discrepancy bounds directly imply bounds on edge discrepancy. 
\begin{observation}
\label{obs:vertex-implies-edge}
Denote by $\disc(n)$ (and $\herdisc(n)$) the maximum discrepancy (minimum hereditary discrepancy, respectively) of a consistent path system of a (undirected or directed) graph of $n$ vertices.
We have that
\begin{enumerate}
\item Let $g(x)$ be a non-decreasing function. If $\hdiscv(n)\geq g(n)$, then $\hdisce(n) \geq g(n/2)$.
\item Let $f(x)$ be a non-decreasing function. If $\discv(n) \leq f(n)$, then $\disce(m) \leq f(m)$.
\end{enumerate}
\end{observation}

The proof of \Cref{obs:vertex-implies-edge} is deferred to \Cref{apdx:tech-prelim}.
We also use some technical tools from  discrepancy theory and statistics. For details please refer to \Cref{apdx:tech-prelim}.


\section{General Graphs: Upper Bound Existential Proof}
\label{subsec:vert_disc_ex}

This section collects the existential proof of the upper bounds on vertex- and edge-discrepancy for consistent path systems in (possibly) directed graphs. 
Our approach uses \Cref{thm:shatter-function}, which gives a discrepancy upper bound using the primal shatter function of a set system. This approach leads to the same upper bounds for undirected and directed graphs. In specifics, we show an upper bound of $O(n^{1/4})$ holds for vertex discrepancy, while the edge discrepancy is at most $O(m^{1/4})$. That is, we show an existential proof of \Cref{thm:digraph-open} (on directed graphs) by this approach. Note that for undirected graphs, we have achieved better edge discrepancy bounds using explicit constructions (as shown in \Cref{apdx:edge-disc-exp}). 

\digraphopen*
\begin{proof}
    We consider vertex discrepancy first. Let $S = (V, \Pi)$ be the path system containing all $|\Pi| = {n \choose 2}$ unique shortest paths in $G$ over vertex set $V$. We can interpret $S$ as a set system (e.g., by ignoring the ordering of vertices in paths $\pi \in \Pi$).
    
    We claim that the primal shatter function $\pi_S$ of $S$ is $\pi_S(x) = O(x^2)$.\footnote{The primal shatter function of a set system is defined in Appendix \ref{apdx:tech-prelim}.}  The intersection of any set $A \subseteq V$ of $|A| = x$ vertices with a path $\pi = \pi[s, t] \in \Pi$ is equal to $A \cap \pi[u, v]$ with $u$ and $v$ being the first and last vertex on path $\pi$ in set $A$, respectively. Then by the consistency of path system $S$, 
    $$
    |\{A \cap \pi \mid \pi \in \Pi\}| \leq |A|^2 = x^2, 
    $$
    and $\pi_S(x) = O(x^2)$, as claimed.  
    Since the size of the ground state is $|V| = n$, \Cref{thm:shatter-function} implies that the (non-hereditary) vertex discrepancy of the incidence matrix for a family of consistent paths on an $n$-node graph is at most $O(n^{1/4})$. An upper bound of $O(m^{1/4})$ for edge discrepancy on $m$-edge graphs follows from Observation \ref{obs:vertex-implies-edge}.

    Finally, to show the upper bound on  \emph{hereditary discrepancy}, we observe that
    for any subset $U\subseteq V$, we can define the system $\Pi[U]$ of the paths in $\Pi$ induced on the nodes in $U$. This path system $\Pi[U]$ will be consistent. Applying the above argument on $\Pi[U]$ will again give us an $O(n^{1/4})$ upper bound for the discrepancy of $\Pi[U]$, implying our desired  vertex hereditary discrepancy upper bound. A similar argument achieves an edge hereditary discrepancy upper bound of $O(m^{1/4})$.
\end{proof}

Notice that for a sparse graph (e.g., $m=O(n)$) this matches the bound on vertex discrepancy, but for a dense graph (e.g., $m=\Theta(n^2)$), the upper bound becomes $O(n^{1/2})$, which is no better than the upper bound by random coloring (\Cref{obs:random-coloring}).

In the next section, we will present a constructive proof of the vertex discrepancy upper bound. Additionally, we will present a constructive proof for edge discrepancy of $\Oish(n^{1/4})$ for undirected graphs and DAGs, which is a significant improvement over $O(m^{1/4})$, especially for dense graphs.

\section{Undirected Graphs: Lower Bound and Explicit Colorings \label{sec:genundir}}

We now discuss the main result (\Cref{thm:intromain}). We  first show in \Cref{subsec:undirected_lb} a hereditary discrepancy lower bound of $\Omega(n^{1/4}/\sqrt{\log n})$ for both edge and vertex discrepancy in general undirected graphs. Then in \Cref{subsec:vert-disc-exp} we present a vertex coloring achieving hereditary discrepancy of $\Oish(n^{1/4})$. Finally, we present an explicit edge coloring with the same hereditary discrepancy bound in \Cref{subsec:edge-disc-exp}.

\subsection{Lower Bound}
\label{subsec:undirected_lb}

As suggested by \Cref{obs:vertex-implies-edge}, we focus on the vertex hereditary discrepancy, and our goal is to prove the following statement (Theorem~\ref{thm:undi-hdisc-lb}).
In Theorem \ref{thm:discvlb}, given later in \Cref{apdx:disc-equal-herdisc}, we show that this theorem implies the same lower bound on (non-hereditary) vertex discrepancy as well.

\begin{theorem}
\label{thm:undi-hdisc-lb}
    There are examples of $n$-vertex undirected weighted graphs $G$ with a unique shortest path between each pair of vertices in which this system of shortest paths $\Pi$ has
$$\hdiscv(\Pi) \ge \Omega(n^{1/4}/\sqrt{\log n}). $$
\end{theorem}


To obtain the lower bound, we employ the new graph construction by \cite{bodwin2023folklore}, which shows that any exact hopset with $O(n)$ edges must have at least $\Omegaish(n^{1/2})$ hop diameter. Despite seeming unrelated, this construction also sheds light on our problem. Another technique we use to show the hereditary discrepancy lower bound is the trace bound \cite{larsen2017constructive} (and restated in \Cref{lem:larsen}). In the following proof section, we first summarize the construction related to our objective, then show the calculation using the trace bound that leads to our lower bound.

\begin{proof}
    The key properties of the graph construction in~\cite{bodwin2023folklore} (see also  \Cref{sec:bodwin2023folkloregraph}) that we need can be summarized in the following lemma.

\begin{lemma}[Lemma 1 of \cite{bodwin2023folklore}]
\label{lem:hopset_lemma}
For any $p \in [1, n^2]$, there is an infinite family of $n$-node undirected weighted graphs $G = (V, E, w)$ and sets $\Pi$ of $p$ paths in $G$ such that
\begin{itemize}
\item $G$ has $\ell = \Theta\left( \frac{n}{\sqrt{p \log n}} \right)$ layers. Each path in $\Pi$ starts in the first layer, ends in the last layer, and contains exactly one node in each layer. 

\item Each path in $\Pi$ is the unique shortest path between its endpoints in $G$.

\item For any two nodes $u, v \in V$, there are at most $ \frac{\ell}{h(u, v)} $ paths in $\Pi$ that contain both $u$ and $v$,  where $h(u, v)$ is the hopdistance (number of edges on the shortest path) between $u$ and $v$ in $G$ and $1 \leq h(u, v) \leq \ell$. 

\item Each node $v \in V$ lies on at most $O\left(\frac{\ell p}{n}\right)$ distinct paths in $\Pi$. 
\end{itemize}
\end{lemma}
    

We will make use of the shortest path vertex incidence matrix of this graph.
Recall that hereditary discrepancy considers the sub-incidence matrix induced by columns corresponding to a set of vertices. We select the set of vertices occurring in the paths in  $\Pi$, and show it leads to hereditary discrepancy at least $\Omega(n^{1/4}/\sqrt{\log n})$. Specifically, take $A$ as the incidence matrix such that each row corresponds to one path in $\Pi$. $A$ has dimension $p \times n$ where $n$ is the number of vertices in $G$ and the $(i,j)$-th entry of $A$ is $1$ is the vertex $j$ is in the path $i$. 

Now define $M=A^T A$. Recall that $\tr(M)$ is the number of $1$s in the matrix $A$. Since by construction, every path has length $\ell$, we have $\tr(M)=p\ell$. 
Furthermore, let $m_{ij}$ be the $(i, j)$-th element of matrix $M$, and observe that it is exactly the number of paths that contain vertices $i$ and $j$. (Note that $m_{ij}=m_{ji}$.) Additionally,  $\tr(M^2)$ is the number of length 4 closed walks in the bipartite graph representing the incidence matrix $A$. This implies that  
\begin{align}
\begin{split}
    \tr(M^2) = &\sum_{j=1}^{p}  \sum_{\substack{u, v\in P_j,\\ u\neq v}} m_{u, v} + \sum_{i=1}^{n} m_{ii}^2= \sum_{j=1}^{p} \sum_{i=1}^{\ell} \sum_{\substack{u, v\in P_j,\\ h(u, v)=i}} m_{u, v}+n\cdot O\left(\frac{p\ell}{n}\right)^2\\
    \leq &\sum_{j=1}^{p} \sum_{i=1}^{\ell} \ell \cdot \frac{\ell}{i} +O\left(\frac{p^2\ell^2}{n} \right) \leq  p\ell^2 \log \ell+O\left(\frac{p^2\ell^2}{n} \right) = {np\ell^2 \log(\ell) + O(p^2\ell^2) \over n}.
\end{split}
\label{eq:trace_Msquared}
\end{align}

By setting $p=n\log n$,  it follows that $\ell= \Theta(\sqrt{n}/\log n)$ and $\tr(M)=p\ell$. Further, 
\begin{align*}
n p \ell^2 \log\ell = O(n \cdot n\log{n} \cdot \frac{n}{\log^{2}{n}}\cdot \log{n}) = O(n^3) = O(p^2 \ell^2).
\end{align*}

By \cref{eq:trace_Msquared}, we have $\tr(M^2) =O(p^2\ell^2/n)$. Using this and  $\tr(M)=p\ell$ in \Cref{lem:larsen} gives us
\begin{align*}
\herdisc(A) & \overset{\text{(\Cref{lem:larsen})}}\geq  {(\tr(M))^2 \over 8e\min\{p,n\} \cdot\tr(M^2)} \sqrt{\frac{\tr(M)}{\max\{p,n\}}} = {(\tr(M))^2 \over 8en \cdot\tr(M^2)} \sqrt{\frac{\tr(M)}{p}}  \\ 
&\geq \Omega\left({p^2\ell^2 \over p^2 \ell^2} \sqrt{\ell}\right) = \Omega(\sqrt{\ell}) 
= \Omega\paren{\frac{n^{1/4}}{\sqrt{\log{n}}}},
\end{align*}
as claimed.

\end{proof}

\subsection{Vertex Discrepancy Upper Bound -- Explicit Coloring}
\label{subsec:vert-disc-exp}
In this subsection, we will upper bound the discrepancy $\chi(\Pi)$ of a consistent path system $(V, \Pi)$ with $|V| = n$ and $|\Pi| = \textrm{poly}(n)$. 
This will immediately imply an upper bound for the hereditary vertex discrepancy of unique shortest paths in undirected graphs. 

\begin{theorem}
    For a consistent path system $S = (V, \Pi)$ where $|V| = n$ and $|\Pi| = \textrm{\normalfont poly}(n)$,  there exists a labeling $\chi$ such that $\chi(\Pi) = O(n^{1/4}\log^{1/2}n).$
    \label{thm:disc_upper_vertex}
    Consequently, every $n$-vertex undirected graph has hereditary  vertex discrepancy $O(n^{1/4} \log^{1/2}n)$.
\end{theorem}

Let $S = (V, \Pi)$ be a consistent path system with $|V| = n$ and $|\Pi| = \textrm{poly}(n)$. As the first step towards constructing our labeling $\chi: V \mapsto \{-1, 1\}$, we will construct a collection of paths $\Pi'$ on $V$ that will have a useful covering property over the paths in $\Pi$. 

\vspace{2mm}

\noindent
\textbf{Constructing path cover $\Pi'$.} 
Initially, we let $\Pi' = \emptyset$. We define $V'$ to be the set of all nodes in $V$ belonging to a path in $\Pi'$, i.e., $V' := \bigcup_{\pi' \in \Pi'} \pi'.$ While $\left|\pi \setminus V' \right| \geq n^{1/2}$ for some $\pi \in \Pi$, find a (possibly non-contiguous) subpath of $\pi$ of length $n^{1/2}$ that is vertex-disjoint from all paths in $\Pi'$. Formally, find a subpath $\pi' \subseteq \pi$ such that $|\pi'| = n^{1/2}$ and $\pi' \cap V' = \emptyset$. Add path $\pi'$ to path cover $\Pi'$ and update $V'$. Repeatedly add paths to path cover $\Pi'$ in this manner until $|\pi \setminus V'| < n^{1/2}$ for all $\pi \in \Pi$.

\begin{proposition} 
\label{prop:path_cover} 
Path cover $\Pi'$ satisfies the following properties:
\begin{enumerate}
    \item for all $\pi \in \Pi'$, $|\pi| = n^{1/2}$, 
    \item the number of paths in $\Pi'$ is $|\Pi'| \leq n^{1/2}$, 
    \item {\em (Disjointness Property)}  The paths in $\Pi'$ are pairwise vertex-disjoint,
    \item {\em (Covering Property)} For all $\pi \in \Pi$, the number of nodes in $\pi$ that do not lie in any path in path cover $\Pi'$ is at most $n^{1/2}$. Formally, let $V' = \cup_{\pi' \in \Pi'} \pi'$. Then 
    $\forall \pi \in \Pi, 
    \left|\pi \setminus V' \right| \leq n^{1/2},
    $
    \item {\em (Consistency Property)} For all $\pi \in \Pi$ and $\pi' \in \Pi'$, the intersection $\pi \cap \pi'$ is a (possibly empty) contiguous subpath of $\pi'$.\footnote{Note that it may not be true that $\pi \cap \pi'$ is a contiguous subpath of $\pi$.} 
\end{enumerate}
\end{proposition}
\begin{proof}
    Properties 1, 3, and 4 follows from the construction of $\Pi'$. Property 2 follows from Properties 1 and 3 and the fact that $|V| = n$. The Consistency Property of $\Pi'$ is inherited from the consistency of  path system $S$. Specifically, by the construction of $\Pi'$, path $\pi' \in \Pi'$ is a subpath of a path $\pi'' \in \Pi$. Recall that by the consistency of path system $S$, the intersection $\pi \cap \pi''$ is a (possibly empty) contiguous subpath of $\pi''$.  Then $\pi \cap \pi'$ is a contiguous subpath of $\pi'$ since $\pi' \subseteq \pi''$. This concludes the proof.
\end{proof}

\noindent
\textbf{Constructing labeling $\chi$.}
Let $\pi' = (v_1, \dots, v_k) \in \Pi'$ be a path in our path cover. We will label the nodes of $\pi'$ using the following random process.  With probability $1/2$ we define $\chi: \pi' \mapsto \{-1, 1\}$ to be 
\[
\chi(v_i) = 
\begin{cases} 
      1 &   i \equiv 0 \mod 2  \text{ and } i \in [1, k] \\
      -1 &  i \equiv 1 \mod 2  \text{ and } i \in [1, k] 
   \end{cases},
\]
and  with probability $1/2$ we define $\chi: \pi' \mapsto \{-1, 1\}$ to be 
\[
\chi(v_i) = 
\begin{cases} 
      -1 &   i \equiv 0 \mod 2  \text{ and } i \in [1, k] \\
      1 &  i \equiv 1 \mod 2  \text{ and } i \in [1, k] 
   \end{cases}.
\]
The labels of consecutive nodes in $\pi'$ alternate between $1$ and $-1$, with vertex $v_1$ taking labels $1$ and $-1$ with equal probability. Since the paths in path cover $\Pi'$ are pairwise vertex-disjoint, the labeling $\chi$ is well-defined over $V' := \cup_{\pi' \in \Pi'} \pi'$. We choose a random labeling for all nodes in $V \setminus V'$, i.e., we independently label each node $v \in V \setminus V'$ with $\chi(v) = -1$ with probability $1/2$ and $\chi(v) = 1$ with probability $1/2$. An illustration can be found in \Cref{fig:undi-vertex-coloring}.

\begin{figure}[htp]
\centering
\includegraphics[width=0.75\linewidth]{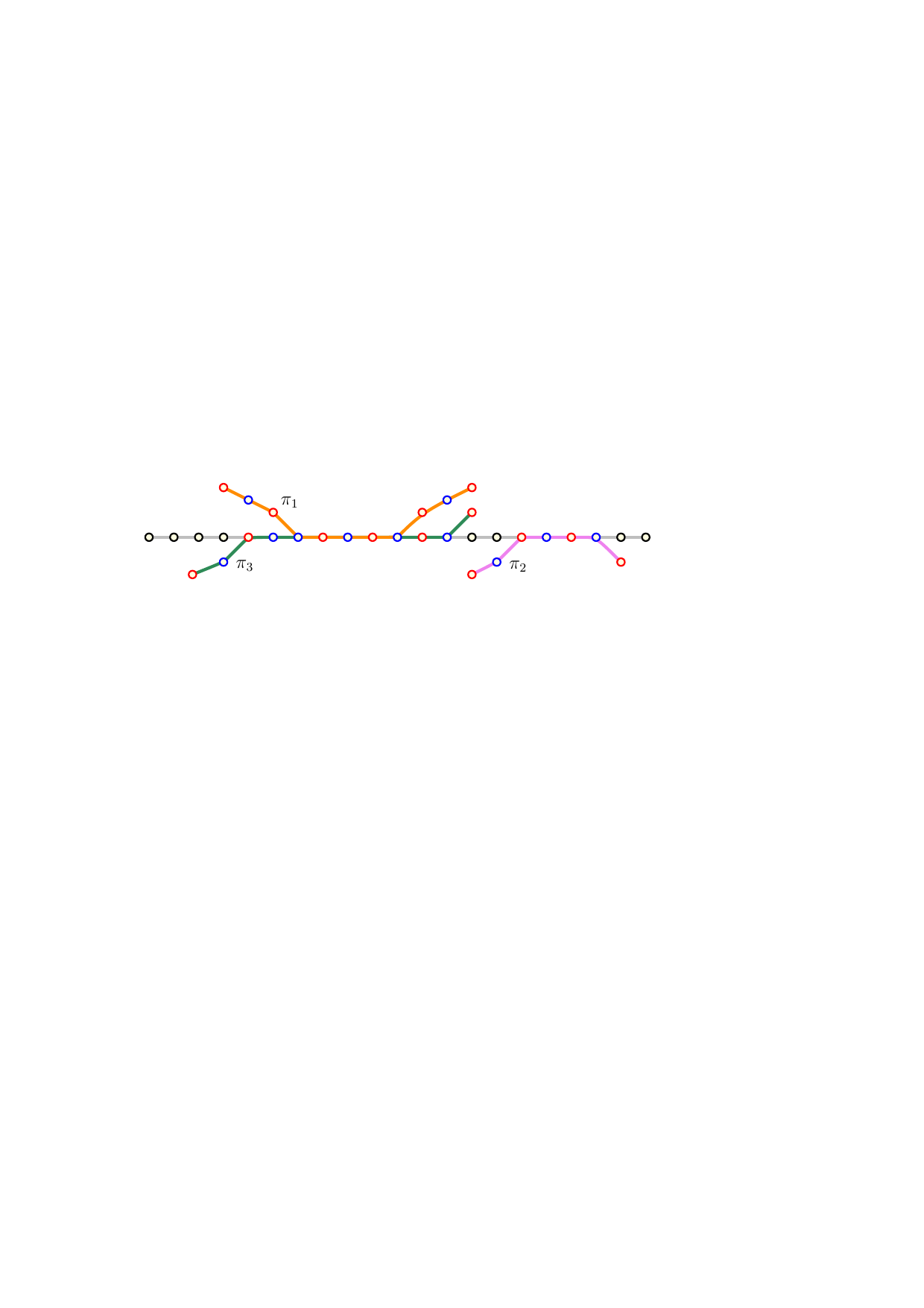}
\caption{
In this figure, paths $\pi_1, \pi_2, \pi_3 \in \Pi'$ from the path cover are intersecting a path $\pi \in \Pi$. Paths in the path cover are pairwise vertex-disjoint, and each path in the cover contributes discrepancy $0$, $-1$, or $+1$ to $\pi$.
}
\label{fig:undi-vertex-coloring}
\end{figure}

\noindent
\textbf{Bounding the discrepancy $\chi(\Pi)$.}
Fix a path $\pi \in \Pi$. We will show that $\left| \sum_{v \in \pi} \chi(v) \right| = O(n^{1/4} \log^{1/2} n)$ with high probability. \Cref{thm:disc_upper_vertex} will follow as $|\Pi| = \textrm{poly}(n)$. 

\begin{proposition}    \label{prop:small_disc}

    For each path $\pi'$ in path cover $\Pi'$,
    \[
    \sum_{v \in \pi \cap \pi'} \chi(v) \in \{-1, 0, 1\}.
    \]
    If $|\pi \cap \pi'| \equiv 0 \mod 2$, then  $\sum_{v \in \pi \cap \pi'} \chi(v) = 0$. 
    Moreover,
    \[
    \Pr\left[\sum_{v \in \pi \cap \pi'} \chi(v) = -1 \right] = \Pr\left[\sum_{v \in \pi \cap \pi'} \chi(v) = 1\right].
    \]
\end{proposition}
\begin{proof}
    By the Consistency Property of $\Pi'$ (as proven in Proposition \ref{prop:path_cover}), path $\pi \cap \pi'$ is a (possibly empty) contiguous subpath of $\pi'$. Then since consecutive nodes in $\pi'$ alternate between $-1$ and $1$, it follows that $\sum_{v \in \pi \cap \pi'} \chi(v) \in \{-1, 0, 1\}$.
    
    Now note that $\sum_{v \in \pi \cap \pi'} \chi(v) \neq 0$ iff $|\pi \cap \pi'|$ is odd. 
    Moreover, the first vertex of $\pi \cap \pi'$ takes labels $1$ and $-1$ with equal probability. This concludes the proof of \Cref{prop:small_disc}.    
\end{proof}

We are now ready to bound the discrepancy of $\pi$.

\begin{proposition}
\label{prop:vertexdiscrepancy}
With high probability,  $ \chi(\pi)  = O(n^{1/4} \log^{1/2} n)$.
\label{prop:disc_bound}
\end{proposition}
\begin{proof}
We  partition the nodes of $\pi$ into two sources of discrepancy that we will bound separately. Let $V' := \cup_{\pi' \in \Pi'} \pi'$. 

\noindent\textbf{Discrepancy of $\pi \cap V'$.} For each path $\pi' \in \Pi'$, let $X_{\pi'}$ be the random variable defined as 
\[
X_{\pi'} := \sum_{v \in \pi \cap \pi'} \chi(v). 
\]
We can restate the discrepancy of $\pi \cap V'$ as
\[
\left| \sum_{v \in \pi \cap V'} \chi(v) \right| = \left| \sum_{\pi' \in \Pi'} X_{\pi'}\right|.
\]

By Proposition \ref{prop:small_disc},  if $|\pi \cap \pi'| \equiv 0 \mod 2$, then  $X_{\pi'} = 0$, so we may assume without any loss of generality that $|\pi \cap \pi'|$ is odd for all $\pi' \in \Pi'$.  In this case,
    $
    \Pr\left[X_{\pi'} = -1 \right] = \Pr\left[X_{\pi'} = 1\right] = 1/2, 
    $ 
    implying that $\mathbb{E}[\sum_{\pi' \in \Pi'}X_{\pi'}] = 0$.
Then  by Proposition \ref{prop:path_cover} and Chernoff, it follows that for any constant $c \geq 1$,
\[
\Pr\left[ \left|\sum_{\pi' \in \Pi'} X_{\pi'} \right| \geq c \cdot n^{1/4} \log^{1/2} n \right] \leq e^{-c^2 \frac{n^{1/2} \log n}{2|\Pi'|}} \leq e^{-c^2 /(2 \cdot \log (n))} = n^{-c^2/2}.
\]

\noindent
\textbf{Discrepancy of $\pi \setminus V'$.} Note that by the Covering Property of the path cover (as proven in Proposition \ref{prop:path_cover}), $|\pi \setminus V'| \leq n^{1/2}$. Moreover, the  nodes in $V \setminus V'$ are labeled independently at random,   implying that $\mathbb{E}[\sum_{v \in \pi \setminus V'} \chi(v) ] = 0$. Then we may apply a Chernoff bound to argue that for any constant $c \geq 1$, 
\[
\Pr\left[  \left|\sum_{v \in \pi \setminus V'} \chi(v) \right|  \geq  c  \cdot n^{1/4} \log^{1/2} n \right]  \leq e^{-c^2 \frac{n^{1/2} \log n}{2|\pi \setminus V'|}} \leq e^{-c^2 /(2 \cdot \log (n))} = n^{-c^2/2}.  
\]
We have shown that with high probability, the discrepancy of our labeling is $O(n^{1/4} \log^{1/2} n)$ for $\pi \cap V'$ and $O(n^{1/4} \log^{1/2} n)$ for $\pi \setminus V'$, so we conclude that the total discrepancy of $\pi$ is $O(n^{1/4} \log^{1/2} n)$, completing the proof of \Cref{prop:vertexdiscrepancy}. 
\end{proof}

\noindent\textbf{Extending to hereditary discrepancy.}
Let $A$ be the vertex incidence matrix of a path system $S = (V, \Pi)$ on $n$ nodes, and let $A_Y$ be the submatrix of $A$ obtained by taking all of its rows but only a subset $Y$ of its columns. Then there exists a subset $V_Y \subseteq V$ of the nodes in $V$ such that $A_Y$ is the vertex incidence matrix of the path system $S[V_Y]$ (path system $S$ induced on $V_Y$). Moreover, if path system $S$ is consistent, then $S[V_Y]$ is also consistent. Then we may apply our explicit vertex discrepancy upper bound to $S[V_Y]$. We conclude that the hereditary vertex discrepancy of $S$ is $O(n^{1/4} \log^{1/2} n)$.

\subsection{Edge Discrepancy Upper Bound -- Explicit Coloring}
\label{apdx:edge-disc-exp}
\label{subsec:edge-disc-exp}

By \Cref{thm:shatter-function} and the discussion in \Cref{subsec:vert_disc_ex}, the edge discrepancy of a (possibly directed) graph on $m$ edges is $O(m^{1/4})$. 
However, in the case of undirected graphs and DAGs, we can improve the edge discrepancy to $O(n^{1/4}\log^{1/2} n)$, where $n$ is the number of vertices in the graph, by modifying the explicit construction for vertex discrepancy in \Cref{subsec:vert-disc-exp}.
Our proof strategy will follow the same framework as the explicit construction for vertex discrepancy, but with some added complications in the construction and analysis.

We first introduce some new notation that will be useful in this section. Given a path $\pi$ and nodes $u, v \in \pi$, we say that $u <_{\pi} v$ if $u$ occurs before $v$ on path $\pi$. Additionally, given a path system $S = (V, \Pi)$,
we define the edge set $E \subseteq V \times V$ of the path system as the set of all pairs of nodes $u, v \in V$ that appear consecutively in some path in $\Pi$. Likewise, for any path $\pi$ over the vertex set $V$, we define the edge set of $\pi$, $E(\pi) \subseteq \pi \times \pi$, as the set of all pairs of nodes $u, v \in \pi$ such that $u, v$ appear consecutively in $\pi$ and $(u, v) \in E$.
Note that if path system $S$ corresponds to paths in a  graph $G$, then $E$ will be precisely the edge set of $G$. 

Recall that we wish to construct an edge labeling $\chi: E \mapsto \{-1, 1\}$ so that 
\[\chi(\Pi) = \max_{\pi \in \Pi} \left| \sum_{e \in E(\pi)} \chi(e)\right| \]
is minimized. 
We will upper bound the discrepancy $\chi(\Pi)$ of  consistent path systems such that $|V| = n$ and $|\Pi| = \textrm{ poly}(n)$. This will immediately imply an upper bound on the edge discrepancy of unique shortest paths in undirected graphs. 

\begin{theorem}
    For all consistent path systems $S = (V, \Pi)$ where $|V| = n$ and $|\Pi| = \textrm{\normalfont poly}(n)$ with edge set $E$,  there exists a labeling $\chi:E \to \{-1,1\}$ such that \[\chi(\Pi) = O(n^{1/4}\log^{1/2}n).\]
    \label{thm:disc_upper}
    Consequently, every $n$-vertex undirected graph  has hereditary edge discrepancy $O(n^{1/4} \log^{1/2}n)$.
\end{theorem}

Let $S = (V, \Pi)$ be a consistent path system with $|V| = n$ and $|\Pi| = \textrm{poly}(n)$. As the first step towards constructing our labeling $\chi: E \mapsto \{-1, 1\}$, we will construct a collection of paths $\Pi'$ on $V$ that will have a useful covering property over the paths in $\Pi$. 

\paragraph{Constructing path cover $\Pi'$.} 
Initially, we let $\Pi' = \emptyset$. We define $V'$ to be the set of all nodes in $V$ belonging to a path in $\Pi'$, i.e., \[V' := \bigcup_{\pi' \in \Pi'} \pi'.\]

While there exists a path $\pi \in \Pi$ such that $\left|\pi \setminus V' \right| \geq n^{1/2}$, find a (possibly non-contiguous) subpath of $\pi$ of length $n^{1/2}$ that is vertex-disjoint from all paths in $\Pi'$. 
Specifically, let $\pi' \subseteq \pi$ be a (possibly non-contiguous) subpath of $\pi$ containing exactly the first $n^{1/2}$ nodes in $\pi \setminus V'$.
Add path $\pi'$ to path cover $\Pi'$ and update $V'$. Repeatedly add paths to path cover $\Pi'$ in this manner until $|\pi \setminus V'| < n^{1/2}$ for all $\pi \in \Pi$. 

Note that our path cover $\Pi'$ is very similar to the path cover used in the explicit vertex discrepancy upper bound. Indeed, path cover $\Pi'$ inherits all 
properties of the path cover defined in Subsection \ref{subsec:vert-disc-exp}.
The key difference here is that we require  subpaths $\pi' \subseteq \pi$ in $\Pi'$  to contain the \textit{first} $n^{1/2}$ nodes in $\pi \setminus V'$. This will imply an additional property of our path cover, which we call the No Repeats Property.


\begin{figure}[htp]
\centering
\includegraphics[width=0.45\linewidth]{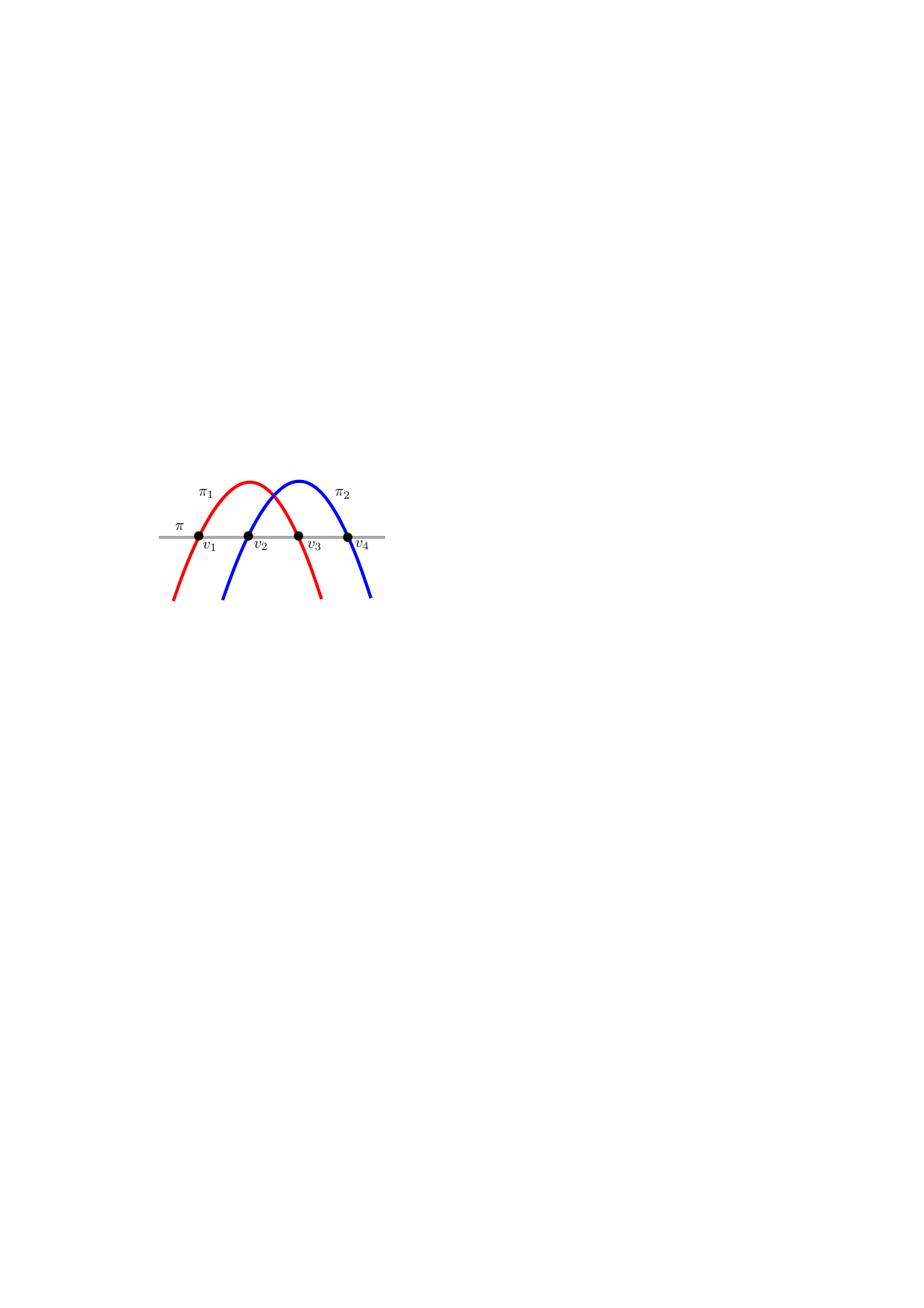}
\caption{In this figure, paths $\pi_1, \pi_2 \in \Pi'$ are intersecting a path $\pi \in \Pi$. This arrangement of paths is forbidden by the No Repeats Property of Proposition \ref{prop:edge_path_cover}. }
\label{fig:no_repeats}
\end{figure}

\begin{proposition} 
\label{prop:edge_path_cover} 
Path cover $\Pi'$ satisfies all properties of Proposition \ref{prop:path_cover}, as well as the following additional properties:
\begin{itemize}
    \item {\em (Edge Covering Property)} 
    For all $\pi \in \Pi$, the number of edges in $\pi$ that
    are not incident to any node lying in a path in path cover $\Pi'$ is at most $n^{1/2}$.  Formally, let $V' = \cup_{\pi' \in \Pi'} \pi'$. For all $\pi \in \Pi$, 
    $$
    \left| \{(u, v) \in E(\pi) \mid u \not \in V' \text{ and } v \not \in V'  \}\right| \leq n^{1/2},
    $$
    \item {\em (No Repeats Property)} For all paths $\pi \in \Pi$, $\pi_1, \pi_2 \in \Pi'$, and nodes $v_1, v_2, v_3, v_4 \in \pi$ such that $v_1, v_3 \in \pi_1$ and $v_2, v_4 \in \pi_2$, the following ordering of the vertices in $\Pi$ is impossible:
    $$
    v_1 <_{\pi} v_2  <_{\pi} v_3  <_{\pi} v_4, 
    $$
    where $x <_{\pi} y$ indicates that node $x$ occurs in $\pi$ before node $y$.
\end{itemize}
\end{proposition}
\begin{proof}
All properties from Proposition \ref{prop:path_cover} follow from an identical argument as in the original proof. The Edge Covering Property follows immediately from the Covering Property of Proposition \ref{prop:path_cover}.  What remains is to prove the No Repeats Property. 

Suppose for the sake of contradiction that there exist paths $\pi \in \Pi$, $\pi_1, \pi_2 \in \Pi'$, and nodes $v_1, v_2, v_3, v_4 \in \pi$ such that $v_1, v_3 \in \pi_1$ and $v_2, v_4 \in \pi_2$, where $ v_1 <_{\pi} v_2  <_{\pi} v_3  <_{\pi} v_4$. We will assume that path $\pi_1$ was added to $\Pi'$ before path $\pi_2$ (the case where $\pi_2$ was added to $\Pi'$ first is symmetric). By the construction of $\Pi'$, path $\pi_1 \in \Pi'$ is a (possibly non-contiguous) subpath of a path $\pi_1'' \in \Pi$ that it was constructed from. Additionally, by the consistency of the path system $S$, the intersection $\pi \cap \pi_1''$ is a  contiguous subpath of $\pi$. Then $v_2 \in \pi \cap \pi_1''$, and specifically, $v_2 \in \pi_1''$.

We assumed that $v_2 \in \pi_2$, which implies that $v_2 \not \in \pi_1$, since paths in $\Pi'$ are pairwise vertex-disjoint. Since path $\pi_1$ was added to $\Pi'$ before path $\pi_2$, this means that when $\pi_1$ was added to $\Pi'$, node $v_2$ did not belong to any path in $\Pi'$ (i.e., $v_2$ was not in $V'$). Recall that in our construction of $\Pi'$, we constructed subpath $\pi_1 \subseteq \pi_1''$ so that it contained exactly the \textit{first} $n^{1/2}$ nodes in $\pi_1'' \setminus V'$. 
However, $v_2 \not \in \pi_1$, but $v_3 \in \pi_1$, and $v_2$ comes before $v_3$ in $\pi_1''$.
This contradicts our construction of path $\pi_1$  in path cover $\Pi'$. 
\qedhere
\end{proof}

The Edge Covering Property of \Cref{prop:edge_path_cover} does not provide any upper bound on the number of edges $(u, v) \in E(\pi)$ that are incident to one or more vertices in $V'$. This is the main source of complications in the analysis of our edge discrepancy upper bound.

\paragraph{Constructing labeling $\chi$.} 
Let $\pi' \in \Pi'$ be a path of length $k$ in our path cover. Let $e_1, \dots, e_k \in E(\pi')$ be the edges in $\pi'$ listed in the order they appear in $\pi'$.
Note that since $\pi'$ is a possibly non-contiguous subpath of a path in $\Pi$,  pairs of nodes $u, v \in V$ that appear consecutively in $\pi$ do not necessarily correspond to edges in edge set $E$.

We will label the edges in $E(\pi')$ using the following random process.  With probability $1/2$ we define $\chi: E(\pi')\mapsto \{-1, 1\}$ to be 
\[
\chi(e_i) = 
\begin{cases} 
      1 &   i \equiv 0 \mod 2  \text{ and } i \in [1, k] \\
      -1 &  i \equiv 1 \mod 2  \text{ and } i \in [1, k] 
   \end{cases},
\]
and  with probability $1/2$ we define $\chi: E(\pi') \mapsto \{-1, 1\}$ to be 
\[\chi(e_i) = 
\begin{cases} 
      -1 &   i \equiv 0 \mod 2  \text{ and } i \in [1, k] \\
      1 &  i \equiv 1 \mod 2  \text{ and } i \in [1, k] 
   \end{cases}.
\]
Note that the labels of consecutive edges $e_i, e_{i+1}$ in $\pi'$ alternate between $1$ and $-1$, with edge $e_1$ taking labels $1$ and $-1$ with equal probability.

Since the paths in path cover $\Pi'$ are pairwise vertex-disjoint, the labeling $\chi$ is well-defined over $E' := \cup_{\pi' \in \Pi'} E(\pi')$. We take a random labeling for all edges in $E \setminus E'$, i.e., we independently label each edge $e \in E \setminus E'$ with $\chi(e) = -1$ with probability $1/2$ and $\chi(e) = 1$ with probability $1/2$.

\paragraph{Bounding the discrepancy $\phi$.}
Fix a path $\pi := \pi[s, t] \in \Pi$. We will show that $$\left| \sum_{e \in E(\pi)} \chi(e) \right| = O(n^{1/4} \log^{1/2} n)$$ with high probability. This will complete the proof of Lemma \ref{thm:disc_upper} since $|\Pi| = \textrm{poly}(n)$. The proof of the following proposition follows from an argument identical to Proposition \ref{prop:small_disc} and hence omitted. 

\begin{proposition}
    For each path $\pi'$ in path cover $\Pi'$,
    \[\sum_{e \in E(\pi) \cap E(\pi') } \chi(e) \in \{-1, 0, 1\}.
    \]
    If $|E(\pi) \cap E(\pi')| \equiv 0 \mod 2$, then  $\sum_{e \in E(\pi) \cap E(\pi')} \chi(e) = 0$. 
Moreover,
    \[\Pr\left[\sum_{e \in E(\pi) \cap E(\pi')} \chi(e) = -1 \right] = \Pr\left[\sum_{e \in E(\pi) \cap E(\pi')} \chi(e) = 1\right].
    \]
    \label{prop:edge_small_disc}
\end{proposition}
    

We are now ready to bound the edge discrepancy of $\pi$. Define \[V' := \bigcup_{\pi' \in \Pi'} \pi'  \quad \text{and} \quad E' := \bigcup_{\pi' \in \Pi'} E(\pi').\]

We partition the edges of the path $\pi$ into three sources of discrepancy that we will bound separately. Specifically, we  split $E(\pi) \subseteq \pi \times \pi$ into the following sets $E_1, E_2, E_3$:
\begin{itemize}
    \item $E_1 := E(\pi) \cap E'$,
    \item $E_2 := E(\pi) \cap  ((V \setminus V') \times (V \setminus V') )$, and
    \item $E_3 := E(\pi) \setminus (E_1 \cup E_2)$.
\end{itemize}
Sets $E_1$ and $E_2$ roughly correspond to the two sources of discrepancy considered in the vertex discrepancy upper bound, while set $E_3$ corresponds to a new source of discrepancy that will require new arguments to bound. We begin with set $E_1$. 
\begin{proposition}[Discrepancy of $E_1$]
With high probability,  $\left| \sum_{e \in E_1} \chi(e) \right| = O(n^{1/4} \log^{1/2} n)$.
\label{prop:edge_disc_e1}
\end{proposition}
\begin{proof}
The proposition follow from an argument similar to Proposition \ref{prop:disc_bound}. 
For each path $\pi' \in \Pi'$, let $X_{\pi'}$ be the random variable defined as 
\[X_{\pi'} := \sum_{e \in E(\pi) \cap E(\pi') } \chi(e). 
\]
We can restate the discrepancy of $E_1$ as
\[\left| \sum_{e \in E_1} \chi(e) \right| = \left|\sum_{\pi' \in \Pi'} X_{\pi'}\right|.\]

By Proposition \ref{prop:edge_small_disc},  if $|E(\pi) \cap E(\pi')| \equiv 0 \mod 2$, then  $X_{\pi'} = 0$, so without any loss of generality, we may assume that $|E(\pi) \cap E(\pi')|$ is odd for all $\pi' \in \Pi'$.
 In this case,
    $$
    \Pr\left[X_{\pi'} = -1 \right] = \Pr\left[X_{\pi'} = 1\right] = 1/2, 
    $$
    implying that $\mathbb{E}[\sum_{\pi' \in \Pi'}X_{\pi'}] = 0$.
    Then by Proposition \ref{prop:edge_path_cover} and the Chernoff bound, it follows that for any constant $c \geq 1$,
\[\Pr\left[\left|\sum_{\pi' \in \Pi'} X_{\pi'}\right| \geq c \cdot n^{1/4} \log^{1/2} n \right] \leq e^{-c^2 \frac{n^{1/2} \log n}{2|\Pi'|}} \leq e^{-c^2 /2 \cdot \log n} \leq n^{-c^2/2}.\]
\end{proof}
\noindent We now bound the discrepancy of $E_2$. 
\begin{proposition}[Discrepancy of $E_2$]
With high probability,  $\left| \sum_{e \in E_2} \chi(e) \right| = O(n^{1/4} \log^{1/2} n)$.
\label{prop:edge_disc_e2}
\end{proposition}
\begin{proof}
The proposition follow from an argument similar to Proposition \ref{prop:disc_bound}. 
Note that by the Edge Covering Property of the path cover (Proposition \ref{prop:edge_path_cover}),  \[
    |E_2| = \left| \{(u, v) \in E(\pi)  \mid u, v \not \in V'  \}\right| \leq n^{1/2}.
    \]
    Moreover, the  edges in $E \setminus E'$ are labeled independently at random, so we may apply a Chernoff bound  to argue that for any constant $c \geq 1$, 
\[\Pr\left[  \left|\sum_{e \in E_2} \chi(e) \right|  \geq  c  \cdot n^{1/4} \log^{1/2} n \right]  \leq e^{-c^2 \frac{n^{1/2} \log n}{2|E_2|}} \leq e^{-c^2 /2 \cdot \log n} \leq n^{-c^2/2}.  
\]
completing the proof.
\end{proof}


Finally, we upper bound the discrepancy of $E_3$.

\begin{proposition}[ Discrepancy of $E_3$]
With high probability,  $\left| \sum_{e \in E_3} \chi(e) \right| = O(n^{1/4} \log^{1/2} n)$.
\label{prop:edge_disc_e3}
\end{proposition}
\begin{proof}
   Let 
\[k := \left| \{ \pi' \in \Pi' \mid \pi \cap \pi' \neq \emptyset  \} \right|
\] 
denote the number of paths in our path cover that intersect $\pi$. We define a function $f:\mathbb{Z}_{\geq 0} \mapsto \mathbb{Z}_{\geq 0}$ such that $f(\phi)$ equals the largest possible value of $|E_3|$ when $\phi = k$. Note that $f$ is well-defined since $0 \leq |E_3| \leq |E|$. We will prove that
$f(\phi) \leq 4\phi,$ by recursively decomposing path $\pi$.

When $\phi=1$, there is only one path $\pi' \in \Pi'$ that intersects $\pi$. Then the only edges in $E_3$ are of the form
\[
E(\pi) \cap ((V' \times (V \setminus V')) \cup ((V \setminus V') \times V') ) =E(\pi) \cap ((\pi' \times (V \setminus \pi')) \cup ((V \setminus \pi') \times \pi') ).
\]
By the Consistency Property of Proposition \ref{prop:edge_path_cover}, path $\pi'$ can intersect $\pi$ and then split apart at most once. Then
\[
f(1) = |E_3| = |E(\pi) \cap ((\pi' \times (V \setminus \pi')) \cup ((V \setminus \pi') \times \pi') )| \leq 2.
\]
When $\phi > 1$, we will split our analysis into the  two cases: 
\begin{itemize}  
    \item \textbf{Case 1.} There exists paths $\pi_1', \pi_2' \in \Pi'$  and nodes $v_1, v_2, v_3 \in \pi$ such that $v_1, v_3 \in \pi_1'$ and $v_2 \in \pi_2'$ and $v_1 <_{\pi} v_2 <_{\pi} v_3$.  In this case, we can assume without any loss of generality that $\pi[v_1, v_3] \cap \pi_1' = \{v_1, v_3\}$ (e.g., by choosing $v_1, v_3$ so that this equality holds). Let $x$ be the node immediately following $v_1$ in $\pi$, and let $y$ be the node immediately preceding $v_3$ in $\pi$. Recall that $s$ is the first node of $\pi$ and $t$ is the last node of $\pi$. It will be useful for the analysis to split $\pi$ into three subpaths:
    \[    
    \pi = \pi[s, v_1] \circ \pi[x, y] \circ \pi[v_3, t],\]
    where $\circ$ denotes the concatenation operation. Let \[\phi_1 :=  |\{\pi' \in \Pi' \mid \pi[x, y] \cap \pi' \neq \emptyset \}| \text{\qquad and \qquad} \phi_2 := |\{\pi' \in \Pi' \mid (\pi[s, v_1] \circ \pi[v_3, t]) \cap \pi' \neq \emptyset \}|.\]

We claim that $\phi_1 < \phi$, $\phi_2 < \phi$, and $\phi_1 + \phi_2 = \phi$. We will use these facts to establish a recurrence relation for $f$. By our assumption that $\pi[v_1, v_3] \cap \pi_1' = \{v_1, v_3\}$, it follows that $\pi[x, y] \cap \pi_1' = \emptyset$, and so $\phi_1 < \phi$. Likewise, by the No Repeats Property of Proposition \ref{prop:edge_path_cover}, 
\[(\pi[s, v_1] \circ \pi[v_3, t]) \cap \pi_2' = \emptyset,\]
so $\phi_2 < \phi$. 
Finally, observe that more generally, if there exists a path $\pi' \in \Pi'$ such that $\pi' \cap \pi[x, y] \neq \emptyset$ and $\pi' \cap (\pi[s, v_1] \circ \pi[v_3, t]) \neq \emptyset$, then  the No Repeats Property of Proposition \ref{prop:edge_path_cover} is violated. We conclude that $\phi_1 + \phi_2 = \phi$. 

Now $|E_3|$ can be upper bounded by the following inequality:
\[|E_3| \leq |E_3 \cap E(\pi[x, y])| + |E_3 \cap E(\pi[s, v_1] \circ \pi[v_3, t])| +2.
\]
Then using the observations about $\phi_1, \phi_2$, and $\phi$ in the previous paragraph, we obtain the following recurrence for $f$:
\[f(\phi) \leq f(\phi_1) + f(\phi_2) + 2 = f(i) + f(\phi - i) + 2,
\]
where $0 < i < \phi$.


    
    \item \textbf{Case 2.} There exists a path $\pi' \in \Pi'$ and $v_1, v_2 \in \pi$ such that $\pi \cap \pi' = \pi[v_1, v_2] \cap V'$. Let $x$ be the node immediately preceding $v_1$ in $\pi$, and let $y$ be the node immediately following $v_2$ in $\pi$.  Again, we split $\pi$ into three subpaths:
    \[
    \pi[s, t] = \pi[s, x] \circ \pi[v_1, v_2] \circ \pi[y, t].
    \]
    Let \[\phi_1 :=  |\{\pi' \in \Pi' \mid \pi[v_1, v_2] \cap \pi' \neq \emptyset \}| \text{\qquad and \qquad} \phi_2 := |\{\pi' \in \Pi' \mid (\pi[s, x] \circ \pi[y, t]) \cap \pi' \neq \emptyset \}|.\]
    By our assumption in Case 2, it follows that $\phi_1 = 1$ and $\phi_2 = \phi - 1$.
Since $|E_3|$ can be upper bounded by the inequality
\[|E_3| \leq |E_3 \cap E(\pi[v_1, v_2])| + |E_3 \cap E(\pi[s, x] \circ \pi[y, t])| +2,
\] we immediately obtain the recurrence
\[    f(\phi) \leq f(\phi_1) + f(\phi_2) + 2 \leq f(1) + f(\phi - 1) + 2.
\]
\end{itemize}  
Taking our results from Case 1 and Case 2 together, we obtain the recurrence relation
\[f(\phi) \leq \begin{cases}
    \max\left\{f(i) + f(\phi - i) + 2,\text{ } f(1) + f(\phi - 1) + 2\right\} & \phi > 1 \text{ and } 1 < i < \phi \\
    2 & \phi = 1
\end{cases}
\]
Applying this recurrence $\leq \phi$ times, we find that
\[f(\phi) \leq \phi\cdot f(1) + 2\phi \leq 4\phi.
\]
Finally, since $k \leq |\Pi'| \leq n^{1/2}$ and we defined $f$ so that $f(k)$ equals the largest possible value of $|E_3|$, we conclude that 
\[|E_3| \leq f(k) \leq f(n^{1/2}) = O(n^{1/2}).
\]
Since the edges in $E_3 \subseteq E \setminus E'$ are labeled independently at random, we may apply a Chernoff bound as in Proposition \ref{prop:edge_disc_e2} to argue that $\chi(E_3) = O(n^{1/4} \log^{1/2}n)$ with high probability.
\end{proof}


We have shown that with high probability, the discrepancy of our edge labeling is $O(n^{1/4} \log^{1/2} n)$ for $E_1$, $E_2$, and $E_3$,
so we conclude that the total discrepancy of $\pi$ is $O(n^{1/4} \log^{1/2} n)$.

\paragraph{Extending to hereditary discrepancy.}
Let $A$ be the edge incidence matrix of a path system
$S = (V, \Pi)$ on $n$ nodes, and let $A_Y$ be the submatrix of $A$ obtained by taking all of its rows but
only a subset $Y$ of its columns. We can rephrase the problem of bounding the discrepancy of $A_Y$ as the following edge coloring problem.

Let $E$ be the edge set of path system $S$, and let $E_Y \subseteq E$ be the set of edges in $E$ associated with the set of columns $Y$ in matrix $A$. Then bounding the discrepancy of $A_Y$ is equivalent to constructing an edge labeling $\chi:E_Y \mapsto \{-1, 1\}$ so that $\chi(E_Y) := \max_{\pi \in \Pi}\left| \sum_{e \in E(\pi) \cap E_Y} \chi(e) \right|$ is minimized. 

In this setting, for each $\pi \in \Pi$, we redefine $E(\pi) \subseteq \pi \times \pi$ to be the set of nodes $u, v \in \pi$ such that $u, v$ appear consecutively in $\pi$ and $(u, v) \in E_Y$. With this new definition of $E(\pi)$, we can construct an edge labeling $\chi$ using the exact same procedure described in this section. Indeed, we claim that this will yield the same discrepancy bound of $\chi(E_Y) = O(n^{1/4} \log^{1/2} n)$. 

The key observation needed to confirm this claim is that the consistency of path system $S$ extends to the setting where $S$ is restricted to edge set $E_Y$. Formally, for all paths $\pi_1, \pi_2 \in \Pi$, the set $E(\pi_1) \cap E(\pi_2)$ is a (possibly empty) contiguous subsequence of the edge sets $E(\pi_1)$ and $E(\pi_2)$. Using this observation in place of the notion of consistency, the discrepancy bounds on edge sets $E_1, E_2, E_3 \subseteq E(\pi)$ follow from arguments identical to those of Propositions \ref{prop:edge_disc_e1}, \ref{prop:edge_disc_e2}, and \ref{prop:edge_disc_e3}. We conclude that the hereditary edge discrepancy of $S$ is $O(n^{1/4} \log ^{1/2} n)$.

\section{Planar Graphs}\label{sec:planar}

In this section, we will extend our  hereditary vertex discrepancy lower bound for unique shortest paths in undirected graphs to the planar graph setting. 
\begin{theorem}
There exists an $n$-vertex undirected planar graph with hereditary vertex discrepancy at least $\Omega(n^{1/4}/ \log^{2}n)$.
\label{thm:planar_lb}
\end{theorem}
To prove this theorem, we will first give an abbreviated presentation of the graph construction in \cite{bodwin2023folklore} that we used implicitly to obtain the  $\Omega(n^{1/4}/\sqrt{\log n})$ hereditary vertex discrepancy lower bound in Theorem \ref{thm:undi-hdisc-lb}. Then we will describe a simple procedure to make this graph planar and argue that the shortest path structure of this planarized graph remains unchanged.

\subsection{Graph Construction of \cite{bodwin2023folklore}}
\label{sec:bodwin2023folkloregraph}






Take $n$ to be a large enough positive integer, and take $p =n \log n$. We will describe an $n$-node  weighted undirected graph $G = (V, E, w)$ originally constructed in \cite{bodwin2023folklore}.



\paragraph{Vertex Set $V$.}

We will use $\ell = \Theta\left( \frac{n^{1/2}}{ \log n} \right)$ as a positive integer parameter for our construction. The graph $G$ we create will consist of $\ell$ layers, denoted as $L_1, \dots, L_{\ell}$. Each layer will have $n/\ell$ nodes, arranged from $1$ to $n/\ell$. Initially, we will assign a tuple label $(i, j)$ to the $j$th node in the $L_i$ layer. We will interpret the node labeled $(i, j)$ as a point in $\mathbb{R}^2$ with integral coordinates. The vertex set $V$ of graph $G$ is made up of these $n$ nodes distributed across $\ell$ layers.

    Next we will randomize the node labels in $V$. For each layer $L_i$, where $i$ ranges from 1 to $\ell$, we randomly and uniformly pick a real number in the interval $(0, 1)$  and we call it $\psi_i$. After that, for each node in layer $L_i$ of the graph $G$ that is currently labeled $(i, j)$, we relabel it as 
    \[      
    \left(i, j+\sum_{k=1}^j \psi_k\right).
    \]
    These new labels for the nodes in $V$ are also treated as points in $\mathbb{R}^2$. We can imagine this process as adding a small epsilon of structured  noise to the points corresponding to the nodes in the graph. The purpose of this noise is technical, but serves the purpose of achieving `symmetry breaking' (see Section 2.4 of \cite{bodwin2023folklore} for details). 



\paragraph{Edge Set $E$.} 

All edges will be between subsequent layers $L_i, L_{i+1}$ within $G$.
It will be helpful to think of the edges in $G$ as directed from $L_i$ to $L_{i+1}$, although in actuality $G$ will be undirected. 
We represent the set of edges in $G$ between layers $L_i$ and $L_{i+1}$ as $E_i$. For any edge $e = (v_1, v_2) \in E$, the edge $e$ will be associated with the specific vector $\Vec{u}_e := v_2 - v_1$. The 2nd coordinate of $\Vec{u}_e$ will be labeled as $u_e$. Hence, for all $e$ found in $E$, $\Vec{u}_e$ is written as  $(1, u_e)$. 

    For each $i \in [1, \ell - 1]$, let 
    \[C_i := \{ (1, \psi_{i+1} + x): {x \in [0, n/\ell^2]} \}.\]
      We will refer to the vectors in $C_i$ as \emph{edge vectors}.  
    For each $v \in L_i$ and edge vector $\Vec{c} \in C_i$, if $v + \Vec{c} \in V$, then add edge $(v, v+ \Vec{c})$ to $E_i$.  After adding these edges to $E_i$, we will have that
    \[C_i = \{\Vec{u}_e \mid e \in E_i\}.
    \]
Finally, for each $e \in E$, if $\Vec{u}_e = (1, u_e)$, then we assign edge $e$ the weight $w(e) := u_e^2$. 
This completes the construction of our graph $G = (V, E, w)$.




\begin{proposition}
\label{prop:planar_draw}
Consider the graph drawing of graph $G$ where the nodes $v$ in $V$ are drawn as points at their associated coordinates in $\mathbb{R}^2$ and the edges $(u, v)$ in $E$ are drawn as straight-line segments from $u$ to $v$. This graph drawing has $O(n \log^6 n)$ edge crossings. 
\end{proposition}
\begin{proof}
First note that if edges $e_1, e_2 \in E$ cross in our graph drawing of $G$, then edges $e_1$ and $e_2$ are between the same two layers of $G$  (i.e., $e_1, e_2 \in L_i \times L_{i+1}$ for some $i \in [1, \ell - 1]$). Additionally, all edges between $L_i$ and $L_{i+1}$ are from the $j$th vertex in $L_i$ to the $(j+k)$th vertex in $L_{i+1}$, where $j \in [1, n/\ell]$ and $k \in [0, \Theta(\log^2 n)]$.

Now fix an edge $(u, v) \in E \cap (L_i \times L_{i+1})$ for some $i \in [1, \ell - 1]$. If an edge $(u', v') \in E \cap (L_i \times L_{i+1})$ crosses  $(u, v)$, then $|u - u'| \leq \log^2n$. Then there are at most $O(\log^2 n)$ nodes incident to edges that cross $(u, v)$ in our drawing.  
Since each node in $G$ has degree $O(\log^2n)$, this implies that at most $O(\log^4n)$ edges cross $(u, v)$ in our drawing. Since $|E| = O(n\log^2 n)$, we conclude that our graph drawing has $O(n \log^6 n)$ edge crossings. 
\end{proof}

\paragraph{Direction Vectors  and  Paths $\Pi$.}
\label{subsec:critical_paths}
Our next step is to generate 
a set of unique shortest paths $\Pi$.
The paths $\Pi$ are identified by first constructing a set of vectors $D \subseteq \mathbb{R}^2$ called \textit{direction vectors}, which are defined next.


 Let $q = \Theta\left(\frac{\ell}{\log n}\right)=\Theta\left(\frac{n^{1/2}}{\log^2 n}\right)$ be an integer.  We choose our set of direction vectors $D$ to be 
    \[
    D := \left\{ \left(1, \text{ } x + \frac{y}{q} \right) \quad \text{such that} \quad x \in \left[ 1,  \frac{n}{4\ell^2} - 1 \right] \text{ and } y \in [0, q]  \right\}. 
    \]
    Note that adjacent direction vectors in $D$ differ only by $1/q$ in their second coordinate. Each of our  paths $\pi$ in $\Pi$ will have an associated direction vector $\Vec{d} \in D$, and for all $i \in [1, \ell - 1]$, path $\pi$ will take an edge vector in $C_i$ that is closest to $\Vec{d}$ in some sense. 
    



\paragraph{ Paths $\Pi$.}
We first define a set $S \subseteq L_1$ containing half of the nodes in the first layer $L_1$ of $G$:
\[S := \left\{ (1, j + \psi_1) \in L_1  \quad \text{such that} \quad j \in \left[1, \frac{n}{2 \ell}\right] \right\}.
\]
We will define a set of pairs of nodes $P$ so that $P \subseteq S \times L_{\ell}$.
For every node $s \in S$ and direction vector $\Vec{d} \in D$, we will identify a pair of endpoints $(s, t) \in S \times L_{\ell}$ and a corresponding unique shortest path $\pi_{s, t}$ to add to $\Pi$.

Let $v_1 \in S$, and let $\Vec{d} = (1, d) \in D$.
The associated path $\pi$ has start node $v_1$.
We iteratively grow $\pi$, layer-by-layer, as follows.
Suppose that currently $\pi = (v_1, \dots, v_i)$, for $i<\ell$, with each $v_i \in L_i$.
To determine the next node $v_{i+1} \in L_{i+1}$, let $E_i^{v_i} \subseteq E_i$ be the edges in $E_i$ incident to $v_i$, and let
\[e_i := \text{argmin}_{e \in E_i^{v_i}}(|u_e - d|).\]
By definition, $e_i$ is an edge whose first node is $v_i$; we define $v_{i+1} \in L_{i+1}$ to be the other node in $e_i$, and we append $v_{i+1}$ to $\pi$.  After this process terminates, we will have a path 
$
\pi = (v_1, \dots, v_{\ell}).
$
Denote $\pi$ as $\pi_{v_1, v_{\ell}}$ and add path $\pi_{v_1, v_{\ell}}$ to $\Pi$. Repeating for all $v_1 \in S$ and $\vec{d} \in D$ completes our construction of $\Pi$. Note that although we did not prove it, each path $\pi_{s, t} \in \Pi$ is a unique shortest $s \leadsto t$ path in $G$ by Lemma 2 of \cite{bodwin2023folklore}.






Lemma 1 of \cite{bodwin2023folklore} summarizes the key properties of $G, \Pi$ that are needed to prove the hereditary vertex discrepancy lower bound for unique shortest paths in undirected graphs in Theorem \ref{thm:undi-hdisc-lb}. We restate this key lemma in \Cref{lem:hopset_lemma} of  \Cref{sec:genundir}.





To obtain a $\tilde{\Omega}(n^{1/4})$ lower bound for hereditary vertex discrepancy of unique shortest paths in \textit{planar} graphs, we need to convert the graph $G$ into a planar graph while ensuring that the unique shortest path structure of the graph remains unchanged.

\subsection{Planarization of Graph $G$}
In the previous subsection, we outlined the construction of the graph $G = (V, E, w)$ and set of paths $\Pi$ from \cite{bodwin2023folklore}. This graph has an associated graph drawing with $\tilde{O}(n)$ edge crossings, by Proposition \ref{prop:planar_draw}. We will now `planarize' graph $G$ by embedding it within a larger planar graph $G'$. We will use the standard strategy of replacing each edge crossing in our graph drawing of $G$ with a new vertex, causing each crossed edge to be subdivided into a path. 

\paragraph{Planarization Procedure:}
\begin{enumerate}
    \item  We start with the current non-planar graph $G = (V, E, w)$ with the associated graph drawing described in Proposition \ref{prop:planar_draw}.
    \item  For every edge crossing in the drawing of $G$, letting point $p \in \mathbb{R}^2$ be the location of the crossing, draw a vertical line in the plane through   $p$.  Add a new node to graph $G$ at every point where this vertical line intersects the drawing of an edge. This step may blow up the number of nodes in the graph by quite a lot, but the resulting graph will be planar, and additionally it will be layered.
    \item We re-set all edge weights in the graph as follows. For each edge $(u, v)$ in the graph, letting $p_u, p_v \in \mathbb{R}^2$ be the locations of nodes $u, v \in V$ in the drawing, we re-set the weight of edge $(u, v)$ to be the squared Euclidean distance between $p_u$ and $p_v$, i.e., \[w((u, v)) = \|p_u - p_v\|^2.\]
    

    \item Finally, we remove excess nodes added to the graph in step 2. For each node $v$ of degree $2$ in the resulting graph, we perform the following operation. Let $(x, v)$ and $(v, y)$ be the two edges incident to $v$. Add edge $(x, y)$ to the graph and assign it weight $w((x, y)) = w((x, v)) + w((v, y))$. Remove node $v$ and edges $(x, v)$ and $(v, y)$ from the graph. Note that the graph will remain planar after this operation.
\end{enumerate}
Denote the planar graph resulting from this procedure as $G' = (V', E', w')$. 

\begin{proposition}
    Graph $G'$ is planar and has $O(n \log^6 n)$ nodes.
\label{prop:planar_small_size}
\end{proposition}
\begin{proof}
    Follows immediately from Proposition \ref{prop:planar_draw} and the planarization procedure.  
\end{proof}

\paragraph{Unique Shortest Paths in $G'$}

Each edge $e = (u, v) \in E$ in graph $G$ is the preimage of a $u \leadsto v$ path $\pi_e$ in graph $G'$ resulting from our planarization procedure. Likewise, each  path $\pi \in \Pi$ is the preimage of a path $\pi'$ in $G'$ obtained by replacing each edge $e \in \pi$ with path $\pi_e$.
Let the set $\Pi'$ of paths in $G'$ denote the image of the set of paths $\Pi$ in $G$ under our planarization procedure. As a final step towards proving Theorem \ref{thm:planar_lb}, we need to argue that the unique shortest path structure of $G$ is unchanged by our planarization procedure.

\begin{lemma}
Each path in $\Pi'$ is the unique shortest path between its endpoints in $G'$.
    \label{lem:planar_usp}
\end{lemma}

We now verify that graph $G'$ and paths $\Pi'$ have the unique shortest path property as stated in \Cref{lem:planar_usp}. We will require the following proposition about the construction of graph $G'$ from \cite{bodwin2023folklore} that we state without proof.
\begin{proposition}[c.f. Proposition 1 of \cite{bodwin2023folklore}]
With probability $1$, for every $i \in [1, \ell -1]$ and every direction vector $\Vec{d} = (1, d) \in D$, there is a unique vector $(1, c) \in C_i$ that minimizes $|c - d|$ over all choices of $(1, c) \in C_i$. 
\label{prop:distinct_obj_dir}
\end{proposition}


\noindent
Additionally, our unique shortest paths argument will make use of the following technical proposition also proven in  \cite{bodwin2023folklore}.


\begin{proposition}[c.f. Proposition 3 of \cite{bodwin2023folklore}]
\label{prop:squared_prop}
    Let $b, x_1,  \dots, x_k \in \mathbb{R}$. Now consider $\hat{x}_1,  \dots, \hat{x}_k$ such that 
    \begin{itemize}
        \item $|\hat{x}_i - b| \leq |x_i - b|$ for all $i \in [1, k]$, and
        \item $\sum_{i=1}^k x_i = \sum_{i=1}^k \hat{x}_i$. 
    \end{itemize}
    Then 
    \[
    \sum_{i=1}^k x_i^2 \geq \sum_{i=1}^k \hat{x}_i^2,
    \]
    with equality only if  $|\hat{x}_i - b| = |x_i - b|$ for all $i \in [1, k]$.
\end{proposition}

Using Propositions \ref{prop:distinct_obj_dir} and \ref{prop:squared_prop}, we can now prove Lemma \ref{lem:planar_usp}.

\begin{proof}[Proof of Lemma \ref{lem:planar_usp}]

As an immediate step toward proving Lemma \ref{lem:planar_usp}, we will argue that we can make two assumptions about $G'$ without loss of generality. 

First, we may assume that $G'$ is layered in the following sense:  $V'$ can be partitioned into $k$ layers (for some $k > 0$) such that each path $\pi \in \Pi'$ begins in the first layer, ends in the last layer, and has exactly one node in each layer. Observe that after step 2 of the planarization procedure, graph $G'$ is layered with respect to paths $\Pi'$ in this sense. Moreover, step 4 of the planarization procedure does not change the structure of the set of paths $\Pi'$. Thus we can safely assume $G'$ is layered with respect to paths $\Pi'$. 

Second, we can assume, without loss of generality, that $G'$ is a directed graph and that all edges in $L_i \times L_{i+1}$ in $G'$ are directed from $L_i$ to $L_{i+1}$. This assumption can be made using a blackbox reduction that is standard in the area (see Section 4.6 of \ref{sec:bodwin2023folkloregraph} for details).

Fix an $s \leadsto t$ path $\pi' \in \Pi'$ in graph $G'$, and let path $\pi \in \Pi$ in $G$ be the associated preimage of $\pi'$. Let $(1, x) \in D$ be the direction vector  associated with path $\pi$. Note that by Proposition \ref{prop:distinct_obj_dir}, for each layer $L_i$, there is a unique vector $(1, c) \in C_i$ that minimizes $|c-d|$ over all choices of $(1, c) \in C_i$. By our construction of the paths in $\Pi$, path $\pi$ will travel along an edge with edge vector $(1, c)$. 

In graph $G'$, there are additional layers between layers $L_i$ and $L_{i+1}$, due to step 2 of our planarization procedure. If path $\pi$ traveled along an edge with edge vector $(1, c)$ from $L_i$ to $L_{i+1}$ in $G$, then in each layer $L'$ in $G'$ between $L_i$ and $L_{i+1}$, graph $G'$ will take an edge vector $(\alpha, c)$, where $0 < \alpha \leq 1$. Moreover, again by Proposition \ref{prop:distinct_obj_dir}, this edge vector $(\alpha, c)$ will be the unique edge vector from layer $L'$ minimizing $|c - d|$.

Let $\ell'$ be the number of layers in $G'$. 
Let $\hat{x}_1, \dots, \hat{x}_{\ell' - 1} \in \mathbb{R}$ be  real numbers such that the $i$th edge of $\pi'$ has the corresponding vector $(\alpha_i, \hat{x}_i)$ for $i \in [1, \ell' - 1]$ and $0 < \alpha_i \leq 1$. 
Now consider an arbitrary $s \leadsto t$ path $\pi^*$ in $G$, where $\pi^* \neq \pi'$. Since all edges in $G$ are directed from $L_i$ to $L_{i+1}$, it follows that $\pi^*$ has $\ell' - 1$ edges. Let $x_1, \dots, x_{\ell' - 1} \in \mathbb{R}$ be real numbers such that the $i$th edge of $\pi^*$ has the corresponding vector $(\alpha_i, x_i) \in C_i$ for $i \in [1, \ell - 1]$ and $0 < \alpha_i \leq 1$. 
Now observe that since $\pi^*$ and $\pi'$ are both $s \leadsto t$ paths, it follows that \[\sum_{i=1}^{\ell' - 1} \hat{x}_i = \sum_{i=1}^{\ell' - 1} x_i.\] Additionally, by our construction of $\pi'$, it follows that  \[
|\hat{x}_i - x | \leq |x_i - x|
\]
for all $i \in [1, \ell - 1]$. In particular, since $\pi^* \neq \pi'$, there must be some $j \in [1, \ell' - 1]$ such that $\hat{x}_j \neq x_j$, and so by Proposition \ref{prop:distinct_obj_dir},  $|\hat{x}_j - x | < |x_j - x|$ with probability 1.  
Then by Proposition \ref{prop:squared_prop},
\[w(\pi') = \sum_{e \in \pi'}w(e) = \sum_{i=1}^{\ell - 1} \hat{x}_i^2 < \sum_{i=1}^{\ell - 1} x_i^2 = \sum_{e \in \pi^*} w(e) = w(\pi^*).
\]
This implies that the path $\pi'$ is a unique shortest $s \leadsto t$ path in $G'$, as desired.  
\end{proof}


\paragraph{Finishing the Proof}

\begin{lemma}[c.f. Lemma 1 of \cite{bodwin2023folklore}]
\label{lem:planar_construction_lemma}
There is an infinite family of $\Theta(n \log^6 n)$-node planar undirected weighted graphs $G' = (V', E', w')$ and sets $\Pi'$ of $|\Pi'|= n \log n$ paths in $G'$ with the following properties:
\begin{itemize}

\item Each path in $\Pi'$ is the unique shortest path between its endpoints in $G$. 


\item Let $G$ be the $n$-node undirected weighted graph and let $\Pi$ be the set of $|\Pi| = n \log n$ paths described in \Cref{lem:hopset_lemma} when $p = n \log n$. Then $\Pi$ is an induced path subsystem of $\Pi'$. 
\end{itemize}
\label{lem:final_planar}
\end{lemma}
\begin{proof}
This follows immediately from Proposition \ref{prop:planar_small_size}, Lemma \ref{lem:planar_usp}, and the above discussion about the set of paths $\Pi'$ in $G'$. 
\end{proof}

Let $N := \Theta(n \log^6 n)$ be the number of nodes in $G'$. By Lemma \ref{lem:final_planar}, 
$
\hdiscv(\Pi') \geq \hdiscv(\Pi).
$
Likewise, by the proof of Theorem \ref{thm:undi-hdisc-lb}, $\hdiscv(\Pi) \geq \Omega(n^{1/4})$. We conclude that
\[\hdiscv(\Pi') \geq \hdiscv(\Pi)\geq \Omega(n^{1/4}/\sqrt{\log n}) = \Omega\left({N^{1/4} \over \log^{2}N}\right).\]

\section{Trees and Bipartite Graphs}
\label{apdx:tree-bipartite}
For graphs with simple topology such as line, tree and bipartite graphs, both of the vertex and edge discrepancy are constant. However, a distinction can be observed on hereditary discrepancy for bipartite graphs. Formally, we have the following results.

\begin{lemma}
\label{lem:tree-graph}
Let $T=(V,E,w)$ be a undirected tree graph, the hereditary discrepancy of the shortest path system induced by $T$ is $\Theta(1)$.
\end{lemma}
\begin{proof}
To start with, it is obvious that a lower bound of $\Omega(1)$ on both edge and vertex (hereditary) discrepancy always holds for any family of graphs. We therefore first focus on the $O(1)$ discrepancy upper bound for bipartite graphs 
\end{proof}

\begin{lemma}
    \label{lem:bipartite-graph}
    Let $G=(V,E,w)$ be a general bipartite graph, then it has $\Theta(1)$ discrepancy, but $\tilde{\Theta}(n^{1/4})$ hereditary discrepancy.
\end{lemma}

\begin{proof}
    To start with, it is obvious that a lower bound of $\Omega(1)$ on both edge and vertex (hereditary) discrepancy always holds for any family of graphs. We therefore first focus on the $O(1)$ discrepancy upper bound for bipartite graphs (including trees).
    
\paragraph{Analysis of discrepancy.} We start with the vertex discrepancy. For a bipartite graph $G = (L \cup R, E)$, a simple scheme achieves constant vertex discrepancy: assign coloring `$+1$' to every $v \in L$ and `$-1$' to every $u \in R$. Observe that every shortest path either has length of 1, or alternates between $L$ and $R$, thus summing up assigned colors along the shortest path gives $+1$ vertex discrepancy at most $1$. Finally, we apply \Cref{obs:vertex-implies-edge} to argue that the edge discrepancy is also $O(1)$.
    

\paragraph{Analysis of hereditary discrepancy.}
We prove this statement by showing that we can reduce the hereditary discrepancy of bipartite graphs to general graphs by the 2-lift construction. Concretely, suppose we are given a path system that is characterized by $G=(V,E, w)$ and matrix $A$, such that the hereditary discrepancy is at least $f(n)$, and let the set of columns that attains the maximum hereditary discrepancy be $Y$. We will construct a new $n'$-vertex graph $G'$ with a new matrix $A'$, in which we have a set of columns $Y'$ induces at least $f(n'/2)$ discrepancy. Such a graph is a valid instance of the family of the bipartite graphs, and an $\Omega(n^{1/4})$ hereditary discrepancy on $G$ would imply an $\Omega(n'^{1/4})$ hereditary discrepancy on $G'$.

We now describe a detailed algorithm, Algorithm \ref{alg:cc-sub-space}, for the $2$-lift graph construction as follows. In the procedure, we slightly abuse the notation to interchangeably use the set with one element and the element itself, i.e., we use $\{a\}$ to denote $a$ when the context is clear.

\begin{Algorithm}\label{alg:cc-sub-space}
{\textbf{Construction 2-Lift: an algorithm to construct bipartite graph with high hereditary discrepancy.} }
\smallskip

\textbf{Input:} A consistent path system characterized by a general undirected graph $G=(V,E,w)$ and matrix $A$ with hereditary discrepancy at least $f(n)$;\\
\textbf{Output:} A consistent path system characterized by bipartite graph graph $G'=(V',E',w')$ and matrix $A'$ with hereditary discrepancy at least $f(n'/2)$;

\smallskip

\textbf{Procedure:}
\begin{enumerate}[label=\roman*).]
\item Vertices $V'$: each vertex $v\in V$, make two copies of vertices $v_{L}, v_{R}\in V'$. 
\item Edges $E'$: 
\begin{enumerate}[label=\alph*).]
\item Maintain a ``side indicator'' $s\in \{L,R\}$, and initialize $s=L$.
\item For each vertex $v\in V$ with an arbitrary order: 
\begin{itemize}
\item Add \emph{all} edges $(v_{s}, u_{\{L,R\}\setminus s})$ such that $u\in N(v)$. 
\item Delete $v$ from $N(u)$ for all $u\in N(v)$. 
\item Switch the side indicator, i.e., $s\gets \{L,R\}\setminus s$.
\end{itemize}
\end{enumerate}
\item Path system $A'$: for each row $a$ of $A$, starting from the the first vertex with $1$, add $1$ to the row of $A'$ to the vertex whose degree is not $0$.
\end{enumerate}
\end{Algorithm}
Note that for any vertex $v\in V$, only one of $(v_L, v_R)$ is used in the matrix $A'$. We now argue that $A'$ is a valid collection of path systems. Note that for a single path $P$ in $A$, we can always follow the vertices with non-zero degree, and connect the edges to a valid path in $A'$. Furthermore, two paths would conflict with each other only if there exists an edge that ``shortcut'' an even-sized path, i.e., both $(v_1,v_2,\cdots,v_{2k})$ and $(v_1, v_{2k})$ are in the path system. However, this would violate the consistency property of $A$. As such, all the rows in $A'$ can find a valid path in $G'$.

Let $Y$ be the columns that attains the $f(n)$ discrepancy on $G$, and we slightly abuse the notation to use $f(n)$ to denote both the indices of the columns in $A$ and the vertex set $A\subseteq V$. Since we have an bijective mapping between the vertices in $Y$ and the vertices we account for in $G'$, we have the hereditary discrepancy to be at least $f(n)=f(n'/2)$, as desired.
\end{proof}

\section{Applications to Differential Privacy}
\label{sec:app}

In light of our new unique shortest path hereditary discrepancy lower bound result,
significant progress can be made towards closing the gap in the error bounds for the problem of Differentially Private All Pairs Shortest Distances (APSD)~\cite{sealfon2016shortest, ghazi2022differentially, fan2022breaking}. Likewise, the problem of  Differentially Private All Sets Range Query (ASRQ)~\cite{deng2023differentially} now has a tight error bound (up to logarithmic factors). We present the DP-APSD problem formally and show the proof of the new lower bound corresponding to \Cref{thm:intro-dp-apsd-lb}. Details on the DP-ASRQ problem are deferred to \Cref{apdx:asrq}.


\subsection{All Pairs Shortest Distances}
Given a weighted undirected graph $G = (V,E, w)$ of size $n$, the private mechanism is supposed to output an $n$ by $n$ matrix $D'=\mathcal{M}(G)$ of approximate all pairs shortest paths distances in $G$, and the privacy guarantee is imposed on two sets of edge weights that are considered `neighboring', i.e., with $\ell_1$ difference  at most $1$. Our goal is to minimize the maximum additive error of any entry in the APSD matrix, i.e., the $\ell_{\infty}$ distance of $D'-D$ where $D$ is the true APSD matrix. This line of work was initiated by \cite{sealfon2016shortest}, where an algorithm was proposed with $O(n)$ additive error. Recently, concurrent works \cite{ghazi2022differentially, fan2022breaking} breaks the linear barrier by presenting an upper bound of $\Oish(n^{1/2})$. Meanwhile, the only known lower bound is $\Omega(n^{1/6})$, due  to ~\cite{ghazi2022differentially}, using a hereditary discrepancy lower bound based on the point-line system of ~\cite{Chazelle2000-ld}. With our improved hereditary discrepancy lower bound, we are able to show an $\Omega(n^{1/4}/\sqrt{\log n})$ lower bound on the additive error of the DP-APSD problem. 

\begin{corollary}
\label{cor:dp-apsd-lb}
 Given an $n$-node undirected graph, for any $\beta \in (0,1)$ and $\varepsilon, \delta>0$, no $(\varepsilon, \delta)$-DP algorithm for APSD has additive error of $o(n^{1/4}/\sqrt{\log n})$ with probability $1-\beta$.
\end{corollary}

The connection between the APSD problem and the shortest paths hereditary discrepancy lower bound was shown in~\cite{ghazi2022differentially}, which implies that simply plugging in the new exponent gives the result above. For the sake of completeness, we give the necessary definition to formally define the DP-APSD problem, and show the main arguments towards proving \Cref{cor:dp-apsd-lb}.

\begin{definition}[Neighboring weights~\cite{sealfon2016shortest}]
\label{def:neighboring-weights}
    For a graph $G = (V,E)$, let $w, w':E\rightarrow \mathbb{R}^{\geq 0}$ be two weight functions that map any $e \in E$ to a non-negative real number, we say $w, w'$ are neighboring, denoted as $w \sim w'$ if $
        \sum_{e\in E}|w(e)-w'(e)| \leq 1.$
\end{definition}

\begin{definition}[Differentially Private APSD~\cite{sealfon2016shortest}]
    Let $w, w':E\rightarrow R^{\geq 0}$ be weight functions, and $\mathcal{A}$ be an algorithm taking a graph $G = (V,E)$ and $w$ as input. The algorithm $A$ is $(\varepsilon, \delta)$-differentially private on $G$ if for any neighboring weights $w \sim w'$ (See \Cref{def:neighboring-weights}) and all sets of possible output $\mathcal{C}$, we have:        $\Pr[\mathcal{A}(G, w)\in \mathcal{C}] \leq e^{\varepsilon}\cdot \Pr[\mathcal{A}(G, w') \in \mathcal{C} ]+\delta.$
    
    We say the private mechanism $\mathcal{A}$ is $\alpha$-accurate if the $\ell_\infty$ norm of $|\mathcal{A}(G,w)-f(G,w)|$ is at most $\alpha$, where $f$ indicates the function returning the ground truth shortest distances.
\end{definition}

\begin{proof}
[Proof of \Cref{cor:dp-apsd-lb}]
    First, suppose $\mathbf{A} \in \mathbb{R}^{{n \choose 2} \times n}$ is the shortest path vertex incidence matrix on the graph $G$. Previous work~\cite{ghazi2022differentially} has shown that the linear query problem on $\mathbf{A}$ can be reduced to the DP-APSD problem, formally stated as follows.

\begin{lemma}[Lemma 4.1 in~\cite{ghazi2022differentially}]
\label{lem:apsd-linearquery}
    Let $(V, \Pi)$ be a shortest path system with incidence matrix $\mathbf{A}$, if there exists an $(\varepsilon, \delta)$ DP algorithm that is $\alpha$-accurate for the APSD problem with probability $1-\beta$ on a graph of size $2|V|$, then there exists an $(\varepsilon, \delta)$ DP algorithm that is $\alpha$-accurate for the $\mathbf{A}$-linear query problem with probability $1-\beta$.
\end{lemma}

Now all we need to show is that the $\mathbf{A}$-linear query problem has a lower bound of $\Omega(n^{1/4}/\sqrt{\log(n)})$. We note the following result by~\cite{muthukrishnan2012optimal}.

\begin{lemma}
\label{lem:mn-disc-lb}
    For any $\beta \in (0,1)$, there exists $\varepsilon, \delta$ such that for any $\mathbf{A}$, no $(\varepsilon,\delta)$-DP algorithm is $\herdisc(\mathbf{A})/2$-accurate for the $\mathbf{A}$-query problem with probability $1-\beta$.
\end{lemma}

Combining \Cref{lem:apsd-linearquery}  and \ref{lem:mn-disc-lb}, we find that additive error needed for the DP-APSD problem is at least the hereditary discrepancy of its vertex incidence matrix, implying  \Cref{cor:dp-apsd-lb}. 
The lower bound for ASRQ also follows using the same argument (see \Cref{apdx:asrq}).
\end{proof}

\section{Conclusion and Open Problems}
This paper reported new upper and lower bounds on the hereditary discrepancy of set systems of unique shortest paths in graphs. 
One  natural problem left open by our work is to improve our edge discrepancy upper bound in directed graphs. Standard techniques in discrepancy theory imply an upper bound of $\min\{O(m^{1/4}), O(D^{1/2})\}$ for this problem, leaving a gap with our $\Omega(n^{1/4}/\sqrt{\log n})$ lower bound when $m = \omega(n)$. Unfortunately, we were not able to extend our low-discrepancy edge and vertex coloring arguments for undirected graphs to the directed setting, due to the pathological example in \Cref{fig:dir-coloring}.



\begin{figure}[htp]
\centering
\includegraphics[width=0.35\linewidth]{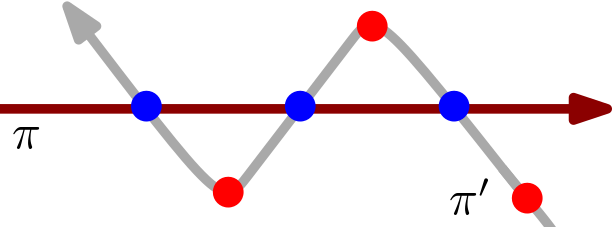}
\caption{
An example in directed graphs that demonstrates how coloring unique shortest paths with alternating colors can fail to imply low discrepancy. The nodes of path $\pi'$ contribute discrepancy $\pm 3$ to path $\pi$; this should be contrasted with what we observe in  \Cref{fig:intro-coloring}. 
}
\label{fig:dir-coloring}
\end{figure}

\bibliographystyle{alphaurl}
\bibliography{discrepancy}

\appendix

\section{Related Work}

\subsection{Discrepancy on graphs}
\label{apdx:graph-disc}
In graph theory, the {\em discrepancy} of a graph introduced by Erd{\H o}s~\cite{erdHos1963combinatorial} is defined as follows: \[\max_{U\subset V} \left|e(G[U])-p{{|U|} \choose 2}\right|,\]  where $e(G[U])$ is the number of edges of the induced subgraph $G[U]$ on vertices $U$ and $p=|E|/{n \choose 2}$ is the density of edges. If we consider a complete graph and randomly color each edge with probability $p$, the above definition of discrepancy quantifies the deviation of induced subgraphs of $G$ from their expected size. 
Erd{\H o}s and Spencer~\cite{erdiis1972imbalances} showed that the graph discrepancy is $\Theta(n^{3/2})$ when $p=1/2$.
This definition and related definitions (e.g., positive discrepancy, dropping the absolute operator) have applications to quasi-randomness~\cite{chung1989quasi}, graph cuts and edge expansion~\cite{alon1997edge,bollobas2006discrepancy,raty2023positive}. There is also study of multicolor discrepancy~\cite{gishboliner2022discrepancies,freschi2021note} that we skip here.

Of particular relevance to our work, Balogh et al.~\cite{balogh2020discrepancies} studied edge discrepancy (as defined in this paper) of (spanning) trees, paths and Hamilton cycles of a graph $G$. In particular, they showed that, for any labeling of edges of $G$, there is a path with discrepancy $\Omega(n)$, even when the graph is a grid.  Prior to this, either probabilistic construction exhibiting such a lower bound was known~\cite{beck1983size, dudek2015alternative,letzter2016path} or an explicit construction of linear size non-planar graphs~\cite{alon1988explicit}, resolving a question by Erd{\H o}s~\cite{erdHos1963combinatorial}. The  construction for the planar graph in~\cite{balogh2020discrepancies} can be extended to coloring of vertices such that there is a path with vertex discrepancy $\Omega(n)$ when there are exponentially many paths and $\Omega(\sqrt{n})$ when there are polynomially many paths. (see \Cref{apdx:discrepancy-path-inconsistent} for details).

{Discrepany of paths in directed graphs has also been studied. Reimer~\cite{reimer2002ramsey} showed that, if a directed graph has discrepancy $\Omega(n)$, then the graph must have $\Omega(n^2)$ edges. In the case when we do not allow antiparallel edges, Ben-Eliezer et al.~\cite{ben2012size} showed that there is a directed graph with $\Theta(n^2 \log^2(n))$ edges such that any mapping $\chi: E \to \{-1,1\}$ will either have a path of length $\Omega(n)$ and all edges mapped to $-1$ or a path of length $\Omega(n\log(n))$ with all edges mapped to $+1$.}

\subsection{Connection with curve discrepancy}
\label{apdx:curve-disc}
A classical topic in computational geometry is to study upper and lower bounds of the discrepancy/hereditary discrepancy of the incidence matrix of geometric objects and a set of points. For example, for a set of $n$ points and $n$ halfplanes in $\reals^2$, the $n$ by $n$ incidence matrix (with rows corresponding to halfplanes and  columns corresponding to points) has discrepancy of $
\Omega(n^{1/4})$. For $n$ points and $n$ lines in the plane, the discrepancy of the incidence matrix is $\Omega(n^{1/6})$~\cite{Chazelle2000-ld}. In general the discrepancy of such incidence matrix is related to the `complexity' of the geometric shapes. In our setting of a graph, the set of all pairs shortest paths defines a set system on the vertices. When the graph is planar, the shortest paths are essentially simple curves in the plane.

We would like to compare our results with discrepancy of curves and points in the geometric setting.
Using the classification in~\cite{Pach1998-ib}, a family of simple curves have $k$ \emph{degree of freedom} and \emph{multiplicity type} $s$
if for any $k$ points there are at most $s$ curves passing through all of them, and any pair of curves intersect in at most $s$ points. Lines in the plane have degree of $2$ and multiplicity of $1$. A set of curves with degree of $2$ and multiplicity of $1$ is called \emph{pseudolines} -- two pseudolines have at most one intersection. 
For $n$ points and $n$ lines, the discrepancy is upper bounded by $O(n^{1/6}(\log n)^{2/3})$~\cite{Chazelle2000-rd} which is nearly tight by a polylogarithmic factor. The proof uses the standard partial coloring argument with the Szerem\'{e}di-Trotter bound on point-line incidence -- for any $n$ points and $m$ lines there are at most $O(m^{2/3}n^{2/3}+m+n)$ point-line incidences~\cite{Szemeredi1983-tv}. The Szerem\'{e}di-Trotter bound can be extended to a set of pseudolines~\cite{Pach1998-ib,Szekely1997-yp}. Therefore the same proof and upper bound hold for the discrepancy of pseudolines. 

For a consistent set of shortest paths, two shortest paths will only intersect at a contiguous segment, which may have multiple vertices/points. Thus using the curve classification criterion, a consistent family of shortest paths in the plane has degree of $2$ but multiplicity $s$ that is possibly higher than a constant. In fact, our discrepancy lower bound construction in the planar graph setting uses a design with $s$ possibly as high as $n^{1/2}$. This is the major difference of shortest paths in a planar graph with pseudolines, which allows the discrepancy of shortest paths to go beyond the pseudoline upper bound of $\Oish(n^{1/6})$\footnote{Using the incidence upper bound for $k=2$ and $s=n^{1/2}$ from~\cite{Szekely1997-yp} and partial coloring, one can obtain a discrepancy upper bound of $\Oish(n^{1/3})$ for our path construction. In contrast, we obtained nearly tight bound of $\Thetaish(n^{1/4})$.}. 

\section{Technical Preliminaries}
\label{apdx:tech-prelim}

\subsection*{Technical Observations for Discrepancy on Consistent Path Systems}
We first observe that, using a random coloring $\chi$, we can guarantee that for all paths $\pi \in \Pi$~\cite{Chazelle2000-rd}:  $$|\chi(\pi)|\leq \sqrt{2|\pi|\ln (4|\Pi|)}.$$
This immediately provides a few observations. 

\begin{observation}\label{obs:random-coloring}
When $\Pi$ is a set of paths with size polynomial in $n$, then $\disc(\Pi)=O(\sqrt{n\log n})$. This bound is true even for paths that are possibly non-consistent. 
\end{observation}
\begin{observation}
 When the longest path in $\Pi$ has $D$ vertices we have $\disc(\Pi)=O(\sqrt{D\log n})$. Thus, for graphs that have a small diameter (e.g., small world graphs), the discrepancy of shortest paths is automatically small. 
\end{observation}



We have defined vertex and edge (hereditary) discrepancy; one may wonder if there is an underlying relationship between vertex and edge (hereditary) discrepancy since they share the same bounds in most scenarios as presented in \Cref{tab:res-summary}. The following observation shows that vertex discrepancy directly implies bounds on edge discrepancy. 
\begin{observation}
Denote by $\disc(n)$ (and $\herdisc(n)$) the maximum discrepancy (minimum hereditary discrepancy, respectively) of a consistent path system of a (undirected or directed) graph of $n$ vertices.
We have that
\begin{enumerate}
\item\label{line:vertex-implies-edge-lb} Let $g(x)$ be a non-decreasing function. If $\hdiscv(n)\geq g(n)$, then $\hdisce(n) \geq g(n/2)$.
\item\label{line:vertex-implies-edge-ub} Let $f(x)$ be a non-decreasing function. If $\discv(n) \leq f(n)$, then $\disce(m) \leq f(m)$.
\end{enumerate}
\end{observation}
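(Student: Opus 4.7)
The plan is to give two complementary reductions: a weighted-line-graph construction for \cref{line:vertex-implies-edge-ub} (vertex upper bound implies edge upper bound), and a vertex-splitting gadget for \cref{line:vertex-implies-edge-lb} (vertex lower bound implies edge lower bound).

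For \cref{line:vertex-implies-edge-ub}, I would take a graph $G$ with $m$ edges and consider its weighted line graph $G'$: let $V(G') = E(G)$, and for each pair $e, e' \in E(G)$ sharing an endpoint, add an edge $(v_e, v_{e'})$ in $G'$ of weight $\tfrac{1}{2}(w(e)+w(e'))$. After an infinitesimal tie-breaking perturbation, each unique shortest path $\pi = e_1 \cdots e_k$ in $G$ lifts to a unique shortest path $v_{e_1} \cdots v_{e_k}$ in $G'$. Under the identification $\chi_E(e) = \chi_V(v_e)$ between edge colorings of $G$ and vertex colorings of $G'$, the edge-discrepancy of $\pi$ in $G$ equals the vertex-discrepancy of its lift in $G'$. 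Since every $G$-path lifts to some $G'$-path, invoking the vertex hypothesis on $G'$ (which has $m$ vertices) gives $\disce(G) \le \discv(G') \le f(m)$.

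For \cref{line:vertex-implies-edge-lb}, given $G_v$ on $n_v = n/2$ vertices with $\hdiscv(G_v) \ge g(n_v)$, I would construct $G_e$ on $2 n_v = n$ vertices by a vertex-splitting gadget. Replace each $v \in V(G_v)$ with a pair of ports $(v^-, v^+)$ joined by a distinguished \emph{split edge} $e_v$ of weight $1$, and replace each original edge $(u,v) \in E(G_v)$ by a connector edge between appropriate ports (with weights and orientations chosen so that, after symbolic perturbation for uniqueness, the unique shortest paths in $G_e$ are in bijection with those of $G_v$, and each $G_e$-path traverses the split edge $e_{v_i}$ of every vertex $v_i$ on its corresponding $G_v$-path). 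A hereditary witness $V' \subseteq V(G_v)$ lifts to $E' = \{e_v : v \in V'\} \subseteq E(G_e)$. Under the bijection $\chi_V(v) \leftrightarrow \chi_E(e_v)$ between $\{\pm 1\}$-colorings of $V'$ and $E'$, each corresponding pair of paths has identical discrepancy on $V'$ and $E'$, so any edge coloring of $E'$ produces a path with edge-discrepancy at least $g(n_v) = g(n/2)$, yielding $\hdisce(n) \ge g(n/2)$.

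The main obstacle is designing the vertex-splitting gadget in the undirected setting so that shortest paths in $G_e$ genuinely traverse every split edge rather than short-cutting through port-to-port connectors that bypass the split. This is handled by tuning the weights so that no alternative path of the form $u^\pm \to v^\mp \to v^\pm \to \cdots$ is shorter than the one that steps through each split, with standard symbolic perturbation enforcing both the one-to-one correspondence with $G_v$'s shortest path system and the uniqueness of shortest paths in $G_e$. Once the gadget is in place, the rest of the argument is a direct matching of colorings.
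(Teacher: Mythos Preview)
Your two reductions --- the line-graph map for \cref{line:vertex-implies-edge-ub} and the vertex-splitting gadget for \cref{line:vertex-implies-edge-lb} --- are exactly the constructions the paper uses. The difference is only in what you demand of the lifted system. The paper works purely at the level of \emph{consistent path systems}: it lifts the path incidence matrix, checks that consistency is preserved (two original paths sharing a contiguous vertex subpath lift to paths sharing a contiguous sequence of split edges, respectively of edge-vertices), and invokes the hypothesis directly. No weights are assigned in the new graph and no shortest-path property is ever claimed, so your ``main obstacle'' for \cref{line:vertex-implies-edge-lb} (forcing undirected shortest paths in $G_e$ through every split edge rather than short-cutting through connectors) simply does not arise in the paper's argument.

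Your stronger framing --- that the lifts are \emph{unique shortest paths} in the new graph --- is unnecessary, and in the line-graph case it is also not obviously true: a path in $G'$ corresponds to an edge sequence in $G$ where consecutive edges share a vertex, which need not trace out a simple walk in $G$, so comparing $G'$-lengths to $G$-distances takes real work, and an ``infinitesimal tie-breaking perturbation'' does not by itself rule out a strictly shorter $G'$-route. The fix is to drop this claim entirely: consistency of the lifted system follows immediately from vertex-consistency of the original (the shared edge set of two consistent paths is automatically contiguous), and that is all the hypothesis on $\discv$ requires. With that simplification your proof and the paper's coincide.
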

\begin{proof}
We prove \Cref{line:vertex-implies-edge-lb} by showing that if graph $G=(V,E,w)$ with the consistent path matrix $\Av$ has hereditary discrepancy at least $g(n)$, we can obtain another graph $G'=(V', E', w')$ and matrix $\Ae$ as the (consistent path) edge incidence matrix with hereditary discrepancy at least $g(n/2)$. The construction is as follows. 
\begin{enumerate}
    \item [(a)] We first split each vertex $v\in V$ in $G$ to two vertices $(v_{in}, v_{out})$ to obtain $V'$.
    \item [(b)] For every $v\in V$, add a single edge $(v_{in}, v_{out})$ to $E'$.
    \item [(c)] For any $v\in V$ and each edge $(u,v)\in E$ (with the fixed $v$), add edges $(u_{out}, v_{in})$ and $(u_{in}, v_{out})$ to $E'$.
\end{enumerate}

The path incident matrix $\Ae$ is defined as follows: for each path as a row $a$ of $\Av$, construct a new path in $G'$ by following the order of $u_{out} \rightarrow v_{in}\rightarrow v_{out} \rightarrow w_{in}$ for a $u \rightarrow v \rightarrow w$ sequence. For each row in $\Av$, we mark the used edges as $1$ in $\Ae$ with the path constructed by the above process. Note that the new path system defined by $\Ae$ remains consistent: for any intersection between the two paths $P_1 \cap P_2 = (u_{1}, u_{2}, \cdots, u_{\ell})$, the intersection remains a single path of $(u_{1, in}, u_{1, out},\cdots, u_{\ell,in}, u_{\ell, out})$ in $\Ae$.

Let $Y$ be the columns that induces the $g(n)$ discrepancy on $G$, i.e., \[\min_{x\in \{-1,+1\}^{\card{Y}}}\norm{\Av_{Y} x}_{\infty} = g(n).\] 
Now, observe that, for each row in $\Ae$, an edge $(v_{in}, v_{out})$ is marked as $1$ if and only if $v$ is marked as $1$ in $\Av$. Therefore, we can take the new set $Y'$ as the edges corresponding to $Y$, and there is 
\[\min_{x\in \{-1,+1\}^{\card{Y'}}}\norm{\Ae_{Y'} x}_{\infty} = \min_{x\in \{-1,+1\}^{\card{Y}}}\norm{\Av_{Y} x}_{\infty} = g(n).\] 
Finally, since graph $G'$ has $n' = 2n$ vertices, we have the hereditary discrepancy to be at least $g(n'/2)$, as desired.


For \Cref{line:vertex-implies-edge-ub}, we show that the hereditary edge discrepancy of $G$ is at most $f(m)$, which implies the discrepancy upper bound. For a graph with $n$ vertices, $m$ edges, and a path incident matrix $\Ae$, suppose $Y$ is the set of columns (edges) that attain the hereditary discrepancy. We can add a vertex $v_{e}$ for each $e\in Y$ and construct a new path incident matrix $\Av$, which is a matrix with $\card{Y}$ rows. Concretely, for each row of $\Av$, we simply let vertices $v_{e}\in Y$ be $1$ if the corresponding edge is used in $\Ae$. By the consistency of $\Ae$, the new path incident matrix also characterizes a consistent path system (we can think of the underlying graph as the complete graph on vertex set of $Y$). 
Note that since there are at most $m$ vertices in $\Ae$, we can get  $f(m)$  discrepancy for the path system characterized by $\Ae$. This implies a $f(m)$  hereditary edge discrepancy on the original path system, which in turn implies the desired discrepancy upper bound.

Finally, note that the argument remains valid when the graph is directed, which means the results hold for both undirected and directed graphs.
\end{proof}


\subsection*{Known Results in Discrepancy Theory}
The first result that we discuss is the one that gives an upper bound on discrepancy of a set system in terms of {\em primal shatter function}.

\begin{definition}[Primal Shatter Function]
\label{def:shatter-function}
Let $(X,\R)$ be a set system, i.e., $X$ is a ground state and $S=\{S_1, S_2, \cdots, S_\ell\}$ with $S_i \subseteq X$ for all $1\leq i \leq \ell$. Let
 $s$ be a positive integer. The \textit{primal shatter function}, denoted as $\pi_{\R}(s)$, is defined as $\max_{A \subseteq X: \, |A|=s}\card{\{A\cap S \mid S\in\R\}}$.
\end{definition}

The following is a well known result in discrepancy theory.

\begin{proposition}[Theorem 1.2 in~\cite{Matousek1995-as}]
\label{thm:shatter-function}
Given a set system $(X,\R)$, the discrepancy of a range space $\R$ whose primal shatter function is bounded by $\pi_{\R}(x)=cx^d$, for some constant $c>0$, $d>1$, is 
\[O(n^{1/2-1/(2d)}),\]
where $n$ is the size of the ground state, and $O(\cdot)$ hides the dependency on $c$ and $d$. 
\end{proposition}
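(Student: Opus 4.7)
The plan is to use the classical \emph{partial coloring} method combined with iteration, following Beck's entropy technique as refined by Matou\v{s}ek. First prove a one-shot partial coloring lemma, then iterate on the uncolored residual. The partial coloring lemma asserts: for any set system $(X,\R)$ on $n$ points with $\pi_\R(s) \leq cs^d$, there exists a map $\chi : X \to \{-1,0,+1\}$ with at least $n/2$ nonzero values and $\max_{S \in \R} |\chi(S)| \leq K\, n^{1/2 - 1/(2d)}$, where $K = K(c,d)$. Iterating this lemma on the residual uncolored set (whose induced shatter function is no larger than $\pi_\R$) yields colorings whose discrepancies form a geometric series: at stage $i$ the residual has at most $n/2^i$ points and contributes $O((n/2^i)^{1/2 - 1/(2d)})$; since $d > 1$ makes the exponent $1/2 - 1/(2d)$ strictly positive, the total telescopes to $O(n^{1/2 - 1/(2d)})$.

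For the partial coloring lemma I would use Beck's entropy method. Consider a uniformly random sign vector $\chi \in \{-1,+1\}^X$. For each $S \in \R$, discretize $\chi(S)$ into buckets of width $\Theta(K n^{1/2 - 1/(2d)})$ (with a separate, coarser treatment for very small sets, handled directly by a Chernoff-type random coloring bound), producing a signature $\sigma(\chi)$. If the Shannon entropy $H(\sigma) \leq n - O(1)$, then many of the $2^n$ sign vectors share some signature, and the (half-)difference of two such vectors is a $\{-1,0,+1\}$-valued partial coloring with at least $n/2$ nonzero coordinates and with $|\chi(S)|$ bounded by one bucket width on every $S \in \R$.

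The main obstacle is bounding $H(\sigma)$, and this is exactly where the primal shatter function enters. I would dyadically decompose $\R$ into size classes $\R_k = \{S : 2^k \leq |S| < 2^{k+1}\}$. The bound $\pi_\R(s) \leq c s^d$, combined with the Haussler packing lemma, implies that the pseudo-metric space $(\R_k, |\cdot \triangle \cdot|)$ admits a $\delta$-packing of size $O((2^k/\delta)^d)$. A chaining argument then bounds the entropy of $\sigma$ restricted to $\R_k$ by integrating packing numbers across geometric scales, exploiting the fact that sets close in symmetric difference produce highly correlated values of $\chi(S)$ and thus collapse to the same bucket with high probability. With bucket width $\Theta(K n^{1/2 - 1/(2d)})$, each dyadic size class contributes only $O(n/\log n)$ entropy, and summing over the $O(\log n)$ size classes keeps $H(\sigma) \leq n - 1$ provided $K$ is a sufficiently large constant depending on $c$ and $d$. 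Calibrating the chaining integral so that the excess entropy stays below $n$ for the target discrepancy scale is the delicate step, and it is essentially the technical core of Matou\v{s}ek's theorem; once this calibration is verified, the iteration in the first paragraph immediately delivers the claimed bound.
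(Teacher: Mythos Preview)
The paper does not give its own proof of this proposition: it is stated as a known result quoted from Matou\v{s}ek~\cite{Matousek1995-as} and used as a black box. So there is no ``paper's own proof'' to compare against.

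Your sketch is essentially the standard argument from Matou\v{s}ek's original paper: partial coloring via Beck's entropy method, entropy control through Haussler's packing lemma applied to the primal shatter bound, chaining over dyadic size classes, and then iterating on the uncolored residual with the geometric series summing because $1/2 - 1/(2d) > 0$. As an outline this is correct and matches the cited source.
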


For the lower bound on hereditrary discrepancy, one general result is the {\em trace bound}~\cite{Chazelle2000-ld}. 

\begin{lemma}[Trace Bound~\cite{Chazelle2000-ld}]
\label{lem:trace-bound}
    If $A$ is an $m$ by $n$ incidence matrix and $M=A^TA$, then 
    \[\herdisc(A)\geq \frac{1}{4} c^{n\cdot  \tr(M^2)/(\tr(M))^2}\sqrt{\frac{\tr(M)}{n}},\] 
where $0<c<1$ is a constant. 
\end{lemma}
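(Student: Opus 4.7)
The plan is to lower-bound $\herdisc(A)$ by reducing to a determinant bound on a square submatrix of $A$, then translate this into $\tr(M)$ and $\tr(M^2)$ through Cauchy--Binet and a symmetric-function inequality on the spectrum of $M$.

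First I would invoke the classical \emph{determinant lower bound on hereditary discrepancy}: for any $k\times k$ submatrix $B$ of $A$, $\herdisc(A) \geq c_0\,|\det B|^{1/k}$ for an absolute constant $c_0>0$. This follows from $\mathrm{lindisc}(B)\leq 2\herdisc(B)\leq 2\herdisc(A)$ combined with a volume-packing estimate on the parallelepiped $B[-1,1]^{k}$, which forces $|\det B|\leq (O(\herdisc(A)))^{k}$. Picking any $k$-subset $S\subseteq[n]$ of columns and applying Cauchy--Binet,
\[
\det(M_S)\;=\;\det(A_S^{\top} A_S)\;=\;\sum_{T\subseteq[m],\,|T|=k}\det(A_{T,S})^{2},
\]
yields some $k\times k$ submatrix $A_{T,S}$ with $|\det(A_{T,S})|^{2}\geq \det(M_S)/\binom{m}{k}$, so that $\herdisc(A) \geq c_0\bigl(\det(M_S)/\binom{m}{k}\bigr)^{1/(2k)}$.

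The remaining task is to exhibit a principal submatrix $M_S$ with large determinant in spectral terms. I would use the identity $\sum_{|S|=k}\det(M_S)=e_k(\lambda_1,\ldots,\lambda_n)$, where $\lambda_i$ are the eigenvalues of $M$, so that averaging gives some $S$ with $\det(M_S)\geq e_k(\lambda)/\binom{n}{k}$. Setting $p_1:=\tr(M)$, $p_2:=\tr(M^{2})$ and $k^{\star}:=p_1^{2}/p_2\leq n$ (the ``effective rank'' of $M$), a Cauchy--Schwarz tail argument shows at least $k^{\star}/4$ of the $\lambda_i$ exceed $p_1/(2n)$: the small eigenvalues with $\lambda_i<p_1/(2n)$ contribute at most $p_1/2$ to $\sum\lambda_i$, so $\sum_{\text{large}}\lambda_i\geq p_1/2$, and then $(\sum_{\text{large}}\lambda_i)^{2}\leq|\text{large}|\cdot p_2$ forces $|\text{large}|\geq k^{\star}/4$. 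Consequently, for $k\leq k^{\star}/4$, one gets $e_k(\lambda)\geq (p_1/(2n))^{k}$ by choosing any $k$ of the large eigenvalues.

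Finally, I would plug these together with $k$ a fixed constant fraction of $k^{\star}$. Using $\binom{n}{k}^{1/k}\leq en/k$ and $\binom{m}{k}^{1/k}\leq em/k$, the two binomial losses contribute factors at the level of discrepancy comparable to $\sqrt{k/n}$ and $\sqrt{k/m}$. Absorbing these along with the $k^{\star}$-dependence into an absolute constant $c<1$ yields
\[
\herdisc(A) \;\geq\; \tfrac{1}{4}\,c^{\,n\,\tr(M^{2})/\tr(M)^{2}}\sqrt{\tr(M)/n}.
\]
The main obstacle is this final accounting: absorbing the binomial losses from Cauchy--Binet and from averaging over principal submatrices of $M$ into the exponential factor $c^{n\tr(M^{2})/\tr(M)^{2}}$. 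This absorption is precisely what forces $c<1$ and dictates the choice of $k\asymp k^{\star}$; the $m$-dependence is swallowed into the exponential because $\log m$ is at most polynomial in $n$ for incidence matrices of interest.
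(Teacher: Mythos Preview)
The paper does not prove this lemma; it is quoted as a known result from Chazelle and Lvov in the technical preliminaries, so there is no in-paper argument to compare against. Your outline should therefore be measured against the original source.

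Your plan is essentially the Chazelle--Lvov argument: the Lov\'asz--Spencer--Vesztergombi determinant lower bound, Cauchy--Binet to pass from square submatrices of $A$ to principal minors of $M$, the identity $\sum_{|S|=k}\det(M_S)=e_k(\lambda)$, and a Cauchy--Schwarz tail showing that roughly $k^\star=\tr(M)^2/\tr(M^2)$ eigenvalues of $M$ are at least of order $\tr(M)/n$. Choosing $k$ proportional to $k^\star$ and verifying that the resulting polynomial loss $\Theta(k^\star/n)$ is dominated by $c^{\,n/k^\star}$ for every $n/k^\star\geq 1$ recovers the stated inequality in the square case $m=n$.

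The one genuine gap is your last sentence. The Cauchy--Binet step over row subsets costs a factor $\binom{m}{k}^{-1/(2k)}\approx\sqrt{k/m}$, so after assembly your bound carries $k^\star/\sqrt{mn}$ rather than $k^\star/n$. Writing ``$\log m$ is at most polynomial in $n$'' does not absorb this: already $m=n^{2}$ costs an extra $\sqrt{n}$ that cannot be hidden in $c^{\,n/k^\star}$ when $n/k^\star=O(1)$. The resolution is that Chazelle and Lvov state their theorem for \emph{square} $n\times n$ matrices; the $m\times n$ phrasing here is a mild liberty of restatement, and indeed the improved Larsen trace bound quoted immediately afterward explicitly carries both $m$ and $n$. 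Your sketch is correct for the square case, but you should not claim it for arbitrary $m$ without additional work.
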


Recently, this trace bound has been improved in a series of works culminating in the following bound in Larsen~\cite{larsen2017constructive} which we also use in our work:
\begin{lemma}
[Trace Bound~\cite{larsen2017constructive}]
\label{lem:larsen}
    If $A$ is an $m$ by $n$ incidence matrix and $M=A^TA$, then \[\herdisc(A)\geq {(\tr(M))^2 \over 8e\tr(M^2) \cdot \min\{m,n\}}\sqrt{\frac{\tr(M)}{\max\{m,n\}}}.\] 
\end{lemma}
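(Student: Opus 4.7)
The plan is to derive this trace bound from Larsen's main spectral lower bound on hereditary discrepancy and then to pick a single ``good'' index in the spectrum of $M = A^\top A$ to convert the spectral quantity into the trace quantities $\tr(M)$ and $\tr(M^2)$.

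\textbf{Step 1: Invoke the spectral lower bound.} The starting point is the spectral form of Larsen's inequality from~\cite{larsen2017constructive}, which for every $m \times n$ matrix $A$ and every $1 \le k \le \min\{m,n\}$ gives
\[
\herdisc(A) \;\geq\; \frac{1}{2\sqrt{e}}\,\sqrt{\frac{k \cdot \lambda_k}{\max\{m,n\}}},
\]
where $\lambda_1 \geq \lambda_2 \geq \cdots \geq \lambda_{\min\{m,n\}} \geq 0$ are the eigenvalues of $M = A^\top A$. (Larsen's proof of this goes through the factorization/$\gamma_2$ norm and ultimately reduces the discrepancy lower bound to a determinantal/Cauchy--Schwarz argument on a well-chosen submatrix.) From here the task is purely to choose $k$ well so that the right-hand side dominates the target.

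\textbf{Step 2: Choose $k$ by an averaging argument on the spectrum.} Write $T_1 = \tr(M) = \sum_i \lambda_i$, $T_2 = \tr(M^2) = \sum_i \lambda_i^2$, and $N = \min\{m,n\}$. The natural scale is the \emph{effective rank} $r^\ast = T_1^2 / T_2$, which by the Cauchy--Schwarz inequality satisfies $r^\ast \leq N$. The claim one wants is: there exists $k$ with $k \geq \tfrac12\, r^\ast$ and $\lambda_k \geq \Omega\!\big(T_1 / N\big)$. Such an index is found by showing that the tail of the spectrum cannot carry too much of $T_1$: applying Cauchy--Schwarz to $\sum_{i > r^\ast/2} \lambda_i$ and using $T_2 = T_1^2/r^\ast$ bounds the tail by a constant fraction of $T_1$, so by monotonicity of $\lambda_i$ the remaining mass forces some $\lambda_k$ with $k \leq r^\ast$ to be at least $\Omega(T_1/N)$.

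\textbf{Step 3: Substitute and collect constants.} Plugging the chosen $k$ and the lower bound on $\lambda_k$ into the spectral inequality in Step 1, and using $r^\ast = T_1^2/T_2$, gives
\[
\herdisc(A) \;\geq\; \frac{1}{2\sqrt{e}}\,\sqrt{\frac{(r^\ast/2)\cdot (T_1/N)}{\max\{m,n\}}}\,\cdot\,(\text{correction from averaging}),
\]
which after bookkeeping and isolating the factor $r^\ast/N = T_1^2/(T_2 N)$ becomes
\[
\herdisc(A) \;\geq\; \frac{T_1^2}{8e\, T_2\, N}\,\sqrt{\frac{T_1}{\max\{m,n\}}},
\]
as desired. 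The constant $8e$ is the product of the $2\sqrt{e}$ arising from Larsen's spectral bound and the $O(1)$ loss incurred by the averaging argument, optimized over free parameters.

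\textbf{Main obstacle.} The delicate part is the averaging step: one must exhibit a single index $k$ at which both $k$ and $\lambda_k$ are simultaneously large enough, and in which the product $k \lambda_k$ is controlled purely in terms of $T_1$, $T_2$, and $N$. A sloppy choice yields the weaker Chazelle--Lvov form of the trace bound (\Cref{lem:trace-bound}), with a $c^{n \tr(M^2)/\tr(M)^2}$ factor replacing the polynomial factor $\tr(M)^2/(\tr(M^2)\min\{m,n\})$; tracking the Cauchy--Schwarz tail carefully is what converts the exponential dependence into the polynomial one.
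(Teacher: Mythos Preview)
The paper does not prove this lemma; it is quoted as a black-box result from \cite{larsen2017constructive} and used without derivation, so there is no in-paper proof to compare against.

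As for your sketch on its own: the overall architecture --- reduce to Larsen's spectral inequality $\herdisc(A)\ge \tfrac{1}{2\sqrt e}\sqrt{k\lambda_k/\max\{m,n\}}$ and then choose $k$ near the effective rank $r^\ast=T_1^2/T_2$ --- is the standard route and does recover the stated trace form. However, Step~2 as written applies Cauchy--Schwarz to the wrong half of the spectrum. Bounding the tail $\sum_{i>r^\ast/2}\lambda_i$ from above via Cauchy--Schwarz gives only $\sqrt{(N-r^\ast/2)\,T_2}$, which in general exceeds $T_1$ and yields nothing. What works is the opposite direction: Cauchy--Schwarz on the \emph{head} gives
\[
\sum_{i\le r^\ast/2}\lambda_i \;\le\; \sqrt{(r^\ast/2)\,T_2}\;=\;T_1/\sqrt{2},
\]
so the tail carries mass at least $(1-1/\sqrt{2})T_1$; averaging this over its at most $N$ terms then produces an index $k>r^\ast/2$ with $\lambda_k\ge\Omega(T_1/N)$, which is exactly the pair you wanted. (Equivalently: take $k$ maximal with $\lambda_k\ge T_1/(2N)$; then $\sum_{i\le k}\lambda_i\ge T_1/2$, and Cauchy--Schwarz forces $k\ge r^\ast/4$.) With this correction, plugging $k\lambda_k\ge\Omega(T_1^3/(T_2 N))$ into the spectral bound gives $\herdisc(A)\ge\Omega\big(\sqrt{T_1^3/(T_2 N\max\{m,n\})}\big)$, which implies the displayed inequality because $T_1^2\le T_2 N$ by Cauchy--Schwarz.
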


We  give various interpretations of $\tr(M)$ and $\tr(M^2)$ in \Cref{lem:larsen} that would be useful later on. 
Algebraically, $\tr(M)$ is the sum of its eigenvalue while $\tr(M^2)$ is the sum of square of the eigenvalues. Combinatorially, $\tr(M)$ is the number of ones in $A$ and $\tr(M^2)$ is the number of rectangles of all ones in $A$. Geometrically, $\tr(M)$ is the count of point/region incidences and $\tr(M^2)$ is number of pairs of points in all the pairwise intersections of regions. Finally, if $A$ is the incidence matrix for shortest path, $\tr(M^2)$ is the number of length $4$ cycles in the underlying graph. Based on the algebraic interpretation, it means that the trace bound is non-trivial whenever all the eigenvalues of $A$ are fairly uniform. This can be seen by noticing that, if $\{\lambda_1, \cdots, \lambda_n\}$ are eigenvalues of $A$,  then 
\[
{\tr(M)^2 \over \tr(M)} = n \cos^2(\theta),
\] 
where $\theta$ is the angle between the vector $(\lambda_1, \cdots, \lambda_n)$ and the all one-vector.

\subsection*{Concentration Inequalities}
We use the following standard variants of the Chernoff-Hoeffding bound in our paper. 

\begin{proposition}[Chernoff bound]\label{prop:chernoff}
	Let $X_1,\ldots,X_n$ be $n$ independent random variables with support on $\{0,1\}$. Define $X := \sum_{i=1}^{n} X_i$. Then, for every $\delta >0$, there is 
	\[
	\Pr\paren{X\geq (1+\delta)\cdot \expect{X}} \leq \exp\paren{-\frac{\delta^2}{\delta+2}\cdot \expect{X}}.
	\]
	In particular, when $\delta\in (0,1]$, there is
	\[
		\Pr\paren{\card{X - \expect{X}} > \delta\cdot \expect{X}} \leq 2 \cdot \exp\paren{-\frac{\delta^{2} \expect{X}}{3}}. 
	\]
\end{proposition}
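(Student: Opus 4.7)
The plan is to prove the Chernoff bound by the exponential moment method. First I would apply Markov's inequality to $e^{tX}$ for a parameter $t > 0$ to be chosen later, yielding $\prob{X \geq (1+\delta)\mu} \leq e^{-t(1+\delta)\mu}\cdot \expect{e^{tX}}$, where $\mu := \expect{X}$. By independence of the $X_i$, the moment generating function factorizes as $\expect{e^{tX}} = \prod_{i=1}^n \expect{e^{tX_i}}$, which reduces the problem to controlling the MGF of a single Bernoulli.

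For each $X_i \in \{0,1\}$ with $p_i := \expect{X_i}$, we have $\expect{e^{tX_i}} = 1 + p_i(e^t - 1) \leq \exp\paren{p_i(e^t - 1)}$, using $1+x \leq e^x$. Multiplying over $i$ and substituting back gives
\[
\prob{X \geq (1+\delta)\mu} \leq \exp\paren{\mu(e^t - 1) - t(1+\delta)\mu}.
\]
Optimizing in $t$ by setting $t = \ln(1+\delta)$ produces the classical form $\paren{e^\delta / (1+\delta)^{1+\delta}}^\mu$, i.e.\ $\exp\paren{\mu\bracket{\delta - (1+\delta)\ln(1+\delta)}}$.

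To recover the stated expression $\exp\paren{-\delta^2 \mu/(\delta+2)}$, I would invoke the analytic inequality $\delta - (1+\delta)\ln(1+\delta) \leq -\delta^2/(\delta+2)$, valid for every $\delta > 0$. This can be checked by differentiating both sides and comparing at $\delta = 0$. When $\delta \in (0,1]$, the same expression is further bounded by $-\delta^2/3$, which gives the one-sided tail in the ``in particular'' clause. The matching lower-tail inequality $\prob{X \leq (1-\delta)\mu} \leq \exp(-\delta^2\mu/2) \leq \exp(-\delta^2\mu/3)$ follows by an analogous argument applied to $-X$ (equivalently, working with $1 - X_i$ and choosing $t > 0$, or running the same calculation with $t < 0$). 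A union bound over the two one-sided tails accounts for the factor of $2$ in the final inequality.

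The main obstacle, if one can call it that, is the analytic inequality $\delta - (1+\delta)\ln(1+\delta) \leq -\delta^2/(\delta+2)$: it is elementary but slightly finicky, and the choice of denominator $\delta + 2$ (as opposed to the loose $2$ or $3$ common in textbooks) is exactly what allows the proposition to hold uniformly for all $\delta > 0$ rather than only in the regime $\delta \leq 1$. The rest is a standard Cram\'er--Chernoff calculation.
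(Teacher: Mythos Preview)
Your proposal is correct and is the standard Cram\'er--Chernoff argument. The paper itself does not give a proof of this proposition at all: it is listed under ``Concentration Inequalities'' in the technical preliminaries as a standard fact and is simply quoted without argument. So there is nothing to compare against; your write-up supplies exactly the textbook derivation one would expect, including the key calculus lemma $\ln(1+\delta)\ge 2\delta/(2+\delta)$ (equivalently $\delta-(1+\delta)\ln(1+\delta)\le -\delta^2/(\delta+2)$), which is indeed what pins down the denominator $\delta+2$ and makes the bound valid for all $\delta>0$.
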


\begin{proposition} [Additive Chernoff bound]\label{prop:additive-chernoff}
	Let $X_1,\ldots,X_n$ be $n$ independent random variables with support in $[0,1]$. Define $X := \sum_{i=1}^{n} X_i$. Then, for every $t>0$,
	\[
	    \Pr\paren{\card{X - \expect{X}}>t} \leq 2\cdot 
	    \exp\paren{-\frac{2t^2}{n}}.
	\]
\end{proposition}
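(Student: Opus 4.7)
The plan is to invoke the standard exponential moment method (Chernoff's technique) together with Hoeffding's lemma to bound the moment generating function of a bounded, centered random variable. Since the statement is two-sided, I would first prove the one-sided tail $\Pr[X - \E[X] > t] \leq \exp(-2t^2/n)$ and then apply the same argument after reflecting each $X_i \mapsto 1 - X_i$ to obtain the matching lower-tail bound; a union bound supplies the factor of $2$.

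First, for any $\lambda > 0$, Markov's inequality applied to the exponential yields
\[
\Pr\!\bracket{X - \E[X] > t} \;\leq\; e^{-\lambda t} \cdot \expect{e^{\lambda(X - \E[X])}}.
\]
By independence, the MGF factorizes as $\prod_{i=1}^n \expect{e^{\lambda(X_i - \E[X_i])}}$, so the problem reduces to bounding the MGF of a single mean-zero random variable supported in an interval of width at most $1$ (since $X_i \in [0,1]$ forces $X_i - \E[X_i] \in [-\E[X_i], 1 - \E[X_i]]$).

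The core ingredient is Hoeffding's lemma: if $Y$ has mean zero and is supported in an interval of width $w$, then $\expect{e^{\lambda Y}} \leq \exp(\lambda^2 w^2/8)$. I would derive it by using convexity of $y \mapsto e^{\lambda y}$ to linearly interpolate $e^{\lambda Y}$ between its endpoint values, taking expectations (exploiting $\E[Y]=0$ to kill the linear term), and then controlling the logarithm of the resulting expression via a Taylor expansion in $\lambda$; the standard calculation shows its second derivative in $\lambda$ is bounded by $w^2/4$, which integrates to the claimed exponential. Applied with $w=1$ to each summand, the factorized MGF is at most $\exp(n\lambda^2/8)$.

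Combining gives $\Pr[X - \E[X] > t] \leq \exp(n\lambda^2/8 - \lambda t)$, and optimizing the free parameter by choosing $\lambda = 4t/n$ minimizes the right-hand side to $\exp(-2t^2/n)$. The main obstacle is the proof of Hoeffding's lemma itself; everything else is either routine manipulation or a one-dimensional quadratic optimization. Symmetrizing via the lower-tail argument and union-bounding produces the stated factor of $2$ in front of the exponential.
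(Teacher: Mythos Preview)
Your proposal is correct and is precisely the classical proof of Hoeffding's inequality via the exponential moment method and Hoeffding's lemma. The paper does not supply its own proof of this proposition --- it is stated as a standard concentration inequality in the preliminaries --- so there is nothing to compare against; your argument is the textbook derivation.
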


\section{Discrepancy Bounds for Paths Without Consistency}
\label{apdx:discrepancy-path-inconsistent}
In a graph when we consider simple paths without the consistency requirement, there is a strong lower bound on both vertex and edge discrepancy. In particular, we have the following theorem.

\begin{theorem} 
There is a planar graph $G=(V, E)$ such that the following is true for any coloring $f:  V \mapsto \{-1, 1\}$ of vertices $V$:
\begin{enumerate}
    \item\label{line:arbitrary-lb-exp} There is a family $\Pi$ of simple paths with $|\Pi|=O(\exp(n))$ and vertex discrepancy of $\Omega(n)$.
    \item\label{line:arbitrary-lb-poly} There is a family $\Pi$ of simple paths with $|\Pi|=O(n)$ and vertex discrepancy of $\Omega(\sqrt{n})$.
\end{enumerate}
The same claim holds true for edge discrepancy as well. 
\end{theorem}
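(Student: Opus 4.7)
The plan is to instantiate $G$ as the $\sqrt{n}\times\sqrt{n}$ grid graph, which is planar on $n$ vertices, and to reuse the path-discrepancy construction of Balogh, Csaba, Jing, and Pluh\'ar~\cite{balogh2020discrepancies}. For part~\ref{line:arbitrary-lb-exp}, I would adapt their main theorem (stated for edge colorings) to vertex colorings, obtaining that for every $f:V\to\{-1,+1\}$ some simple path of the grid has vertex discrepancy $\Omega(n)$. Taking $\Pi$ to be the collection of all simple paths of $G$ gives $|\Pi|=2^{O(n)}$, as required. The edge-to-vertex adaptation can be done either by replaying their argument---which averages the discrepancy over a carefully chosen distribution of Hamiltonian-type paths---with vertex sums in place of edge sums, or by a subdivision reduction together with a small-scale modification of the graph so that the adversary cannot use the coloring of the ``original'' (non-subdivision) vertices to cancel the colored subdivision vertices.

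For part~\ref{line:arbitrary-lb-poly}, I would extract a polynomial-size subfamily from the exponential family of part~\ref{line:arbitrary-lb-exp} by probabilistic selection. The key intermediate claim is: for every fixed $f$, a uniformly random simple path of $G$ has discrepancy $\Omega(\sqrt n)$ with probability at least some absolute constant $\varepsilon>0$. Assuming the claim, sample $N=\Theta(n)$ paths uniformly and independently; for each fixed coloring, the probability that none achieves discrepancy $\Omega(\sqrt n)$ is at most $(1-\varepsilon)^N=e^{-\Omega(n)}<2^{-2n}$. A union bound over the $2^n$ possible vertex colorings then produces a deterministic family $\Pi$ of $O(n)$ simple paths that works for every coloring simultaneously.

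The edge-discrepancy analogue of both parts follows by subdividing every edge once: any vertex coloring of the subdivided graph, restricted to subdivision vertices, is naturally identified with an edge coloring of the original, and every simple path lifts, preserving the discrepancy bound up to lower-order terms.

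The main obstacle is establishing the ``constant fraction'' statement used in part~\ref{line:arbitrary-lb-poly}. If the Balogh et al.\ argument already produces, for each coloring, an $\Omega(n)$ lower bound on the \emph{expected} discrepancy of a uniformly random Hamiltonian path, then the claim is immediate from Markov's inequality applied to the nonnegative quantity $n-|X|$, where $X$ denotes the signed discrepancy of such a random path; otherwise a separate second-moment / concentration argument is needed to bound the variance of $X$ under a fixed coloring. The subdivision reduction used to lift edge discrepancy to vertex discrepancy also requires care, because a naive subdivision lets the adversary use the original vertices to cancel the subdivision-vertex sum; blowing up each original vertex into a small even-sized planar gadget isolates the subdivision-vertex contribution and resolves this issue.
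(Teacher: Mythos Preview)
Your plan for part~\ref{line:arbitrary-lb-exp} is close in spirit to the paper's, but note that a direct replay of the Balogh et al.\ argument with vertex sums fails: on any bipartite planar graph (in particular the grid), the checkerboard $2$-coloring forces every path to alternate signs, so every simple path has vertex discrepancy $O(1)$. You are aware of this and gesture at a ``subdivision plus gadget'' fix; the paper resolves it cleanly by passing to a \emph{companion graph} $G'$ (essentially the line graph of the $n\times 2$ grid): vertices of $G'$ are the grid edges, two are adjacent if the edges share an endpoint, and any snake path in $G$ lifts to a path in $G'$. Then edge discrepancy in $G$ equals vertex discrepancy in $G'$, and the Balogh et al.\ bound transfers verbatim.

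For part~\ref{line:arbitrary-lb-poly} there is a genuine gap. Your probabilistic scheme hinges on the claim that for \emph{every} coloring a uniformly random simple (or Hamiltonian) path has discrepancy $\Omega(\sqrt n)$ with probability bounded away from $0$. The Balogh et al.\ result is purely existential (one path per coloring), so it does not give the average-case statement you need; your Markov argument presupposes $\mathbb{E}|X|=\Omega(n)$, which is not what their proof yields, and the fallback ``second-moment argument'' is not supplied. Without this, the union bound over $2^n$ colorings cannot be executed. The paper takes a completely different, constructive route: it embeds the rows of $A=\tfrac12(H+J)$ (with $H$ an $n\times n$ Hadamard matrix) as $2n$ explicit snake paths on the $n\times 2$ grid, adds the two horizontal boundary paths, and shows by a short case analysis that any edge coloring gives some path among these $2n+2$ with sum $\Omega(\sqrt n)$, directly inheriting the $\Omega(\sqrt n)$ discrepancy of $A$. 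The vertex version then follows via the same companion-graph reduction as in part~\ref{line:arbitrary-lb-exp}. This explicit Hadamard embedding is the missing idea in your proposal.
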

\begin{proof}
The first claim follows from Proposition 1.6 in~\cite{balogh2020discrepancies}, which says that the edge discrepancy of paths on a $k \times \ell$ grid graph is at least $\Omega(k\ell)$. To prove vertex discrepancy, we make two additional remarks about this construction. First, for our purpose, it is sufficient to consider only an $n \times 2$ grid graph $G$. The set of paths in the construction of~\cite{balogh2020discrepancies} consists of all paths that start from the top left corner and bottom left corner going to the right and possibly taking a subset of the vertical edges in the grid graph. The number of paths is $O(2^n)$. Second, for vertex discrepancy, we define a companion graph $G'$. Specifically, for each grid edge $e$ in $G$, we place a vertex $v$ of $G'$ on $e$. We connect two vertices in $G'$ if and only if the corresponding edges in $G$ share a common vertex. The graph $G'$ is still planar. In additional, a path $P$ in $G$ maps to a corresponding path $P'$ in $G'$ where vertices on $P'$ follow the same order of the corresponding edges on $P$. See \Cref{fig:grid} for an example. Therefore the edge discrepancy in $G$ and the vertex discrepancy of $G'$ are the same.  
\begin{figure}[htp]
\centering
\includegraphics[width=0.65\linewidth]{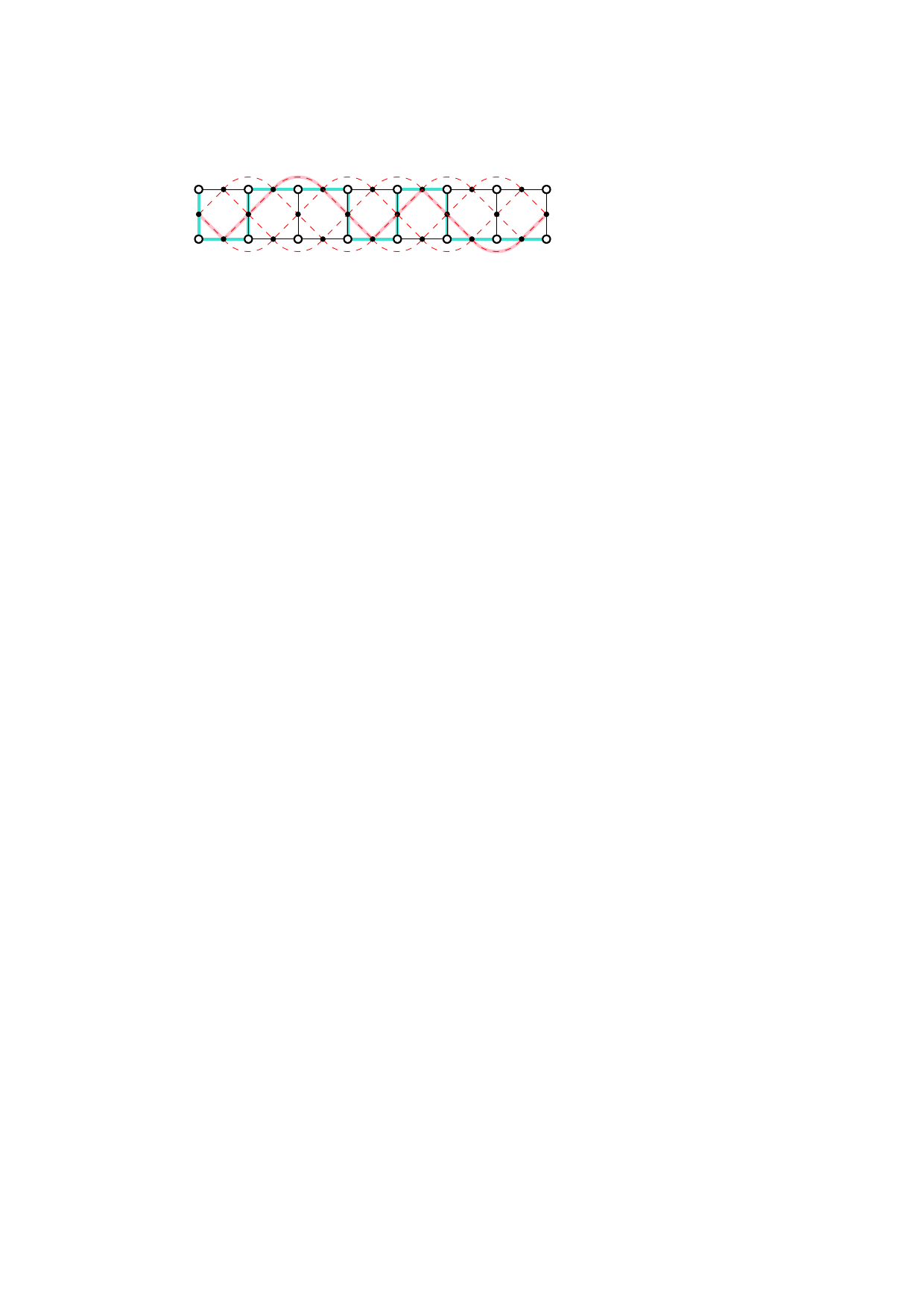}
\caption{A $n\times 2$ grid graph (with vertices shown in hollow and edges in black) $G$ with one path (in blue) starting from the top left corner go the right. The solid vertices and edges in dashed red define the companion graph $G'$. The corresponding path in $G'$ is shown in pink.}
\label{fig:grid}
\end{figure}

For the second claim, we take an $n\times n$ Hadamard matrix $H$ with $n$ as power of $2$. The elements in $H$ are $+1$ or $-1$. Each row has $n/2$ elements of $+1$ and $n/2$ elements of $-1$ and the rows are pairwise orthogonal. It is known~\cite{Chazelle2000-rd} that the matrix $A=\{a_{ij}\}=\frac{1}{2}(H+J)$ with $J$ as an $n\times n$ matrix of all $1$ has discrepancy at least $\Omega(\sqrt{n})$. 

Now we try to embed the matrix $A$ by paths on a $n\times 2$ grid graph $G$. Denote by $e_j$ as the $j$th vertical edge in $G$, $1\leq j\leq n$. For the $i$th row of $A$, we define set $X_i=\{e_j: a_{ij}=1\}$.
We then define two paths $P(X_i)$, $P'(X_i)$ on $G$. 
\begin{itemize}
    \item Path $P(X_i)$ starts from the top left corner of $G$ going to the right and the vertical edges $P(X_i)$ visits are precisely $X_i$. 
    \item Path $P'(X_i)$ starts from the bottom left corner of $G$ going to the right and, similar to $P(X_i)$, the vertical edges $P(X_i)$ visits are precisely $X_i$. 
\end{itemize}
Note that $P(X_i)$ and $P'(X_i)$ each contains edges $X_i$ (see \Cref{fig:signal-paths} for an illustration). Also $P(X_i)$ and $P'(X_i)$ do not share any horizontal edges, and collectively cover all horizontal edges in $G$.
In addition, we define two paths $P$ and $P'$ with $P$ starting from the top left corner and visiting all the top horizontal edges and $P'$ starting from the bottom left corner and visiting all bottom horizontal edges. We have $2n+2$ paths in total -- each row $i$ of the Hadamard matrix contributes $2$ paths, and we additionally use $P$ and $P'$.

\begin{figure}[htp]
\centering
\includegraphics[width=0.55\linewidth]{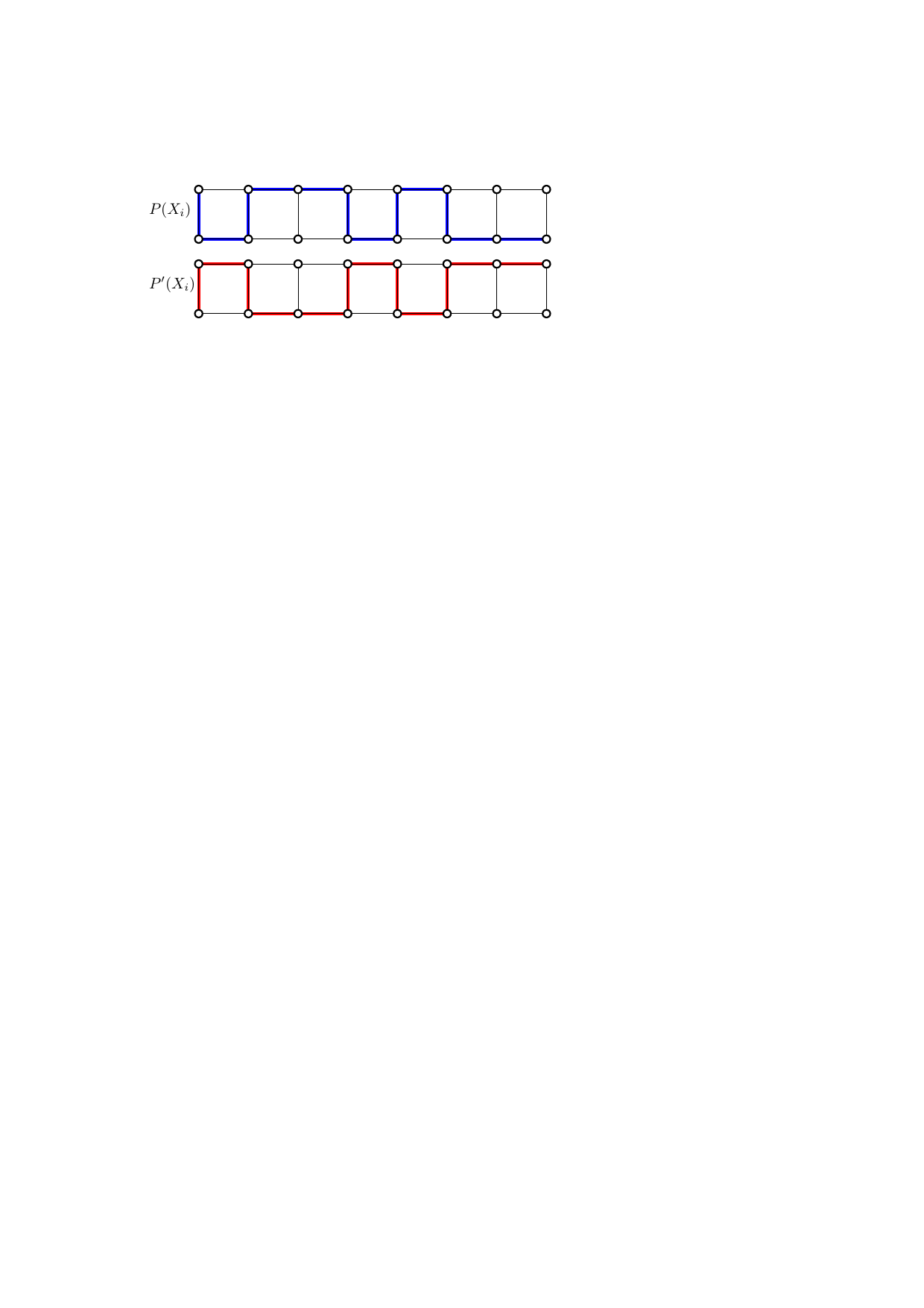}
\caption{A $n\times 2$ grid graph $G$ with two paths $P(X_i)$ and $P'(X_i)$. The $i$th row of matrix $A$, i.e., $A_{i} = (\frac{1}{2}(H+J))_{i}$, corresponds to $X_i = (1, 1, 0, 1, 1, 1, 0, 0)$.
}
\label{fig:signal-paths}
\end{figure}

For any $\{\pm 1\}$ coloring $f$ on edges in the grid graph $G$, define $D$ as the maximum absolute value of the sum of the colors of edges of each path, among all the $2n+2$ paths. We now argue that $D=\Omega(\sqrt{n})$.

First we define an $n$-dimensional vector $\chi \in \{+1, -1\}^n$ with $\chi_j=f(e_j)$, i.e., the color of the $j$th vertical edge in $G$. 
Since the discrepancy of matrix $A$ is $\Omega(\sqrt{n})$, there must be one row vector $a_i$ in $A$ such that $a_i \cdot \chi \geq c \sqrt{n}$ for some constant $c$. In other words, the sum of the colors of the edges in $X_i$ is $x=\sum_{e_j \in X_i} f(e_j)$ with $|x|\geq c\sqrt{n}$. Without loss of generality, we assume $x>0$, which in turn implies $x\geq c\sqrt{n}$.

Now we define \[z_1:=\sum_{e\in P(X_i)\setminus X_i} f(e) \quad \text{ and} \quad z_2:=\sum_{e\in P'(X_i)\setminus X_i} f(e).\]  
If $\max\{z_1, z_2\}\geq 0$, then we are done as the total sum of colors of either $P(X)$ or $P'(X)$ is at least $x\geq c\sqrt{n}$.
Otherwise, suppose we have $z_1<0$ and $z_2<0$. 
We now have from path $P(X_i)$, $D\geq x+z_1$, and from path $P'(X_i)$, $D\geq x+z_2$. Further, consider paths $P$ and $P'$, the total sum of colors of all the horizontal edges is $z_1+z_2$. Since both $z_1$ and $z_2$ are negative, there should be \[2D\geq \sum_{e\in P} f(e) + \sum_{e\in P'} f(e)  = -z_1-z_2.\] 
Summing up all three inequalities, we have $4D \geq 2x$. Thus $D\geq x/2\geq c\sqrt{n}/2$. This finishes the proof for edge discrepancy, and the vertex discrepancy bound can be obtained using the same trick as in the proof of claim 1.
\end{proof}

The above theorem provides lower bounds for discrepancy. Since hereditary discrepancy is at least as high as discrepancy, the lower bounds hold for hereditary discrepancy as well. 

We also want to remark that the paths used in the above theorem are $2$-approximate shortest paths. The shortest path from top left corner to the top right corner of the $n\times 2$ grid is of length $n-1$ while all paths used are of length at most $2n-2$. The shortest path from top left corner to bottom right corner is of length $n$ and all paths used in the construction are of length at most $2n-1$. Therefore when we relax from shortest paths to $2$-approximate shortest paths the discrepancy bounds substantially go up. If we allow weights on the edges we can make these paths to be $1+\eps$-approximate shortest paths for any $\eps>0$.

Grid graphs are a special family of planar graphs. Actually for grid graphs we can say a bit more on discrepancy of shortest paths. If we take \emph{shortest paths} on an unweighted grid graph (without even requiring consistency property and there could be exponentially many shortest path between two vertices), the discrepancy is $O(1)$. Specifically, for vertex discrepancy on a $k \times \ell$ grid graph of left bottom corner at the origin and the top right corner at coordinate $(k-1, \ell-1)$, if we give a color of $+1$ to all vertices of coordinate $(x, y)$ with even $x+y$ and a color of $-1$ to all other vertices, any shortest path visits a sequence of vertices with sum of coordinates alternating between even and odd values and thus has a total color of $O(1)$.  For edge discrepancy, for all horizontal edges we give color $+1$ ($-1$) if the left endpoint is at an even (odd) $x$-coordinate and the right endpoint is at an odd (even) $x$-coordinate. We do the same for vertical edges. Again any shortest path has a `staircase' shape and a total coloring of $O(1)$.

\section{Relation Between Discrepancy and Hereditary Discrepancy in General Graphs}
\label{apdx:disc-equal-herdisc}

In Theorem \ref{thm:undi-hdisc-lb}, we showed a construction of a path weighted graph $G$ whose system of unique shortest paths $\Pi$ satisfies $\herdisc_v(\Pi) \ge \Omega(n^{1/4}/\sqrt{\log(n)})$.
Here, we will observe that this result extends to discrepancy:
\begin{theorem} \label{thm:discvlb}
There are examples of $n$-node undirected weighted graphs $G$ with a unique shortest path between each pair of nodes in which this system of shortest paths $\Pi$ has
\[\discv(\Pi) \ge \Omega\left({n^{1/4}\over\sqrt{\log(n)}}\right).\]
\end{theorem}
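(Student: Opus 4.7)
The plan is to show that the graph construction from Theorem \ref{thm:undi-hdisc-lb} already witnesses the discrepancy lower bound, possibly after a light modification. The hereditary bound established there is obtained by applying Larsen's trace bound (Lemma \ref{lem:larsen}) to the path-incidence matrix $A$, and this yields $\disc(A|_{Y^\star}) = \Omega(n^{1/4}/\sqrt{\log n})$ for some witness subset $Y^\star \subseteq V$. If the construction happens to satisfy $Y^\star = V$---which is what one expects for trace-bound constructions that are ``tight'' on every row and column, as seems to be the case in Theorem \ref{thm:undi-hdisc-lb}---then $\discv(\Pi) = \hdiscv(\Pi)$ and the claim is immediate. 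The cleanest route is therefore to first verify this directly by inspecting the proof of Theorem \ref{thm:undi-hdisc-lb}.

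To handle the case $Y^\star \subsetneq V$, I would insert a twin-vertex gadget at every $v \in V \setminus Y^\star$. Concretely, split $v$ into two copies $v, v'$ joined by an edge of very small weight $\eps$, and reroute the formerly $v$-incident edges so that the unique shortest path that previously visited $v$ now visits both $v$ and $v'$ consecutively. After a generic perturbation of edge weights to restore uniqueness of shortest paths, every path in the new system that touches the gadget picks up the term $\chi(v)+\chi(v')$. Consequently, any $\pm 1$ coloring of the modified graph can be symmetrized, without increasing its discrepancy, to satisfy $\chi(v') = -\chi(v)$ on every twin pair, reducing the remaining freedom to a coloring of $Y^\star$. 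Applying Lemma \ref{lem:larsen} to the restricted matrix $A|_{Y^\star}$ then gives the desired bound, since the modification at most doubles the vertex count and only changes the $n^{1/4}$ factor by a constant.

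The main obstacle is executing the gadget insertion cleanly: unique shortest paths must be preserved throughout, the path-intersection structure on $Y^\star$ that drives the trace-bound calculation must be untouched, and the total vertex count must remain $\Theta(n)$. I expect a small generic-weight perturbation argument to suffice, as the original construction in Theorem \ref{thm:undi-hdisc-lb} already relies on such perturbations for uniqueness. If the direct route of verifying $Y^\star = V$ succeeds, the gadget construction is unnecessary; either way, the same graph (with at most a constant-factor blowup in $n$) serves as the witness for the theorem.
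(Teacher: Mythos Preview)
Your fallback twin-gadget argument has a genuine gap: the symmetrization step is unjustified. You assert that any coloring of the modified graph can be replaced by one with $\chi(v')=-\chi(v)$ on every twin pair ``without increasing its discrepancy,'' but this is not true in general. Even on a path $\pi$ that passes through both $v$ and $v'$, forcing the pair's contribution $\chi(v)+\chi(v')$ from $\pm 2$ to $0$ can \emph{raise} $|\chi(\pi)|$ whenever the remainder of $\pi$ carries the opposite sign; and the new graph also contains shortest paths that start or end at $v$ or at $v'$ and hence touch only one copy. So the adversary choosing $\chi$ may well achieve a strictly smaller maximum with a non-symmetric coloring than any symmetric one can, and the reduction to a coloring of $Y^\star$ does not go through. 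Your primary route---checking whether the witness set in Theorem~\ref{thm:undi-hdisc-lb} is already all of $V$---might succeed, but Lemma~\ref{lem:larsen} is a genuine hereditary bound whose proof selects a submatrix, so this is not automatic and would require reopening that argument.

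The paper sidesteps both issues by going in the opposite direction: rather than adding vertices to neutralize $V\setminus Y^\star$, it \emph{deletes} them. The key structural fact, cited from~\cite{Bodwin19}, is that unique-shortest-path systems are closed under taking induced subsystems: when a vertex $v$ is removed, one adds a shortcut edge $(u,x)$ of weight $w(u,v)+w(v,x)$ wherever $u,v,x$ were consecutive on some path, and the resulting paths $\Pi'$ are again unique shortest paths in a graph $G'$ on $n'\le n$ vertices. Then $\disc_v(\Pi')$ is exactly $\disc(A|_{Y^\star})\ge\Omega(n^{1/4}/\sqrt{\log n})$, and since $n'\le n$ the statement follows. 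This uses only the hereditary bound of Theorem~\ref{thm:undi-hdisc-lb} together with the closure lemma---no gadgets, no perturbations, and no need to inspect the internals of the trace-bound proof.
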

\begin{proof}
Let $G$ be the graph from Theorem \ref{thm:undi-hdisc-lb}, and let $\Pi$ be its system of unique shortest paths.
By definition of hereditary discrepancy, there exists an induced path subsystem $\Pi' \subseteq \Pi$ with
\[\disc_v(\Pi') \ge \Omega\left({n^{1/4}\over\sqrt{\log(n)}}\right).\]
Recall that, by induced subsystem, we mean that we may view the paths of $\Pi$ as abstract sequences of nodes, and then $\Pi'$ is obtained from $\Pi$ by deleting zero or more nodes and deleting all occurrences of those nodes from the middle of paths.
Thus the paths in $\Pi'$ are not still paths in $\Pi$, but they are paths in a different graph $G'$ on $n' \le n$ nodes.
It thus suffices to argue that all paths in $\Pi'$ are 

It thus suffices to argue that there is a graph $G'$ on $n' \le n$ nodes in which all paths in $\Pi'$ are unique shortest paths.
Indeed, this is well known, and is shown e.g.\ in \cite{Bodwin19} (c.f.\ Lemma 2.4.4 and 2.4.11).
To sketch the proof: suppose that a node $v$ is deleted from the initial system $\Pi$.
Consider each path $\pi \in \Pi$ that contains $v$ as an internal node, i.e., it has the form
\[\pi = (\dots, u, v, x, \dots).\]
When $v$ is removed, the path now contains the nodes $u, x$ consecutively, and so we must add $(u, x)$ as a new edge to $G'$ so that $\pi$ is a path in $G'$.
We judiciously set the edge weight to be $w(u, x) := w(u, v) + w(v, x)$.
The weighted length of the path $\pi$ does not change, and yet the distances of $G'$ majorize those of $G$, which implies that $\pi$ is still a unique shortest path in $G'$.
Inducting this analysis over each deleted vertex leads to the desired claim.
\end{proof}

\section{Application in Matrix Analysis}
\label{apdx:app-factorization-norm}
Hereditary discrepancy is intrinsically related to {\em factorization norm}, which has found applications in many ares of computer science, including but not limited to quantum channel capacity, communication complexity, etc. For any complex matrix $A \in \mathbb C^{m \times n}$, its factorization norm, denoted by $\gamma_2(A)$, is defined as the following optimization problem:
\begin{align}
\gamma_2(A) = \min \left\{  \|L\|_{2 \to \infty} \| R \|_{1 \to 2} : A = LR\right\}.
\label{eq:gamma2norm}    
\end{align}

Here, for $p,q \in \mathbb N_{\geq 0}$, 
\[\|{A}\|_{p \to q} = \max \frac{\| Ax\|_q }{\|x\|_p}.\]

One can write \cref{eq:gamma2norm} in a form of a semi-definite program (see Lee et al.~\cite{lee2008direct}) and also show that the Slater point exists. In particular, the primal and dual program coincides. An interesting question in matrix analysis is to estimate the factorization norm of different class of matrices. In a series of work, various authors have computed tight bounds on the factorization norm of certain class of matrices:
\begin{itemize}
    \item If $A$ is a unitary matrix, then $\gamma_2(A)=1$~\cite{schur1911remarks}.
    \item If $A \in \mathbb R^{n \times n}$ is a positive semidefinite matrix with entries $A_{ij}$~\cite{schur1911remarks}, then 
    \[\gamma_2(A) = \max_{1 \leq i \leq n} A_{ii}.\]
    \item Mathias~\cite{mathias1993hadamard} $A$ satisfies the property that $\sqrt{A^\top A} \bullet \mathbb I = \sqrt{AA^\top} \bullet \mathbb I = {\tr(A) \over n} \mathbb I$, then 
    \[\gamma_2(A) = {\mathsf{Tr}(A^\top A) \over n}.\]
    In particular, if $A \in \{0,1\}^n$ is a lower-triangular one $1$ matrix, then $
    \gamma_2(A) = \Theta(\log n)$~\cite{fichtenberger2023constant,henzinger2023almost, kwapien1970main}.
    \item If $A \in \mathbb R^{n \times n}$ is a lower-triangular Toeplitz matrix with entries decreasing either polynomially or exponentially, then 
    $\gamma_2(A) = \Theta(1)$~\cite{henzinger2023unifying}.
\end{itemize}

Our tight bound on hereditary discrepancy for consistent path on graphs allows us to give tight bound on the factorization norm for the corresponding incidence matrix. In particular, we use the following result: 
\begin{lemma}
[Matou{\v s}ek et al.~\cite{matouvsek2020factorization}]
\label{lem:herdisc_factorization}
    For any real $m \times n$ matrix $A \in \mathbb R^{m \times n}$, there exists absolute constants $0<c<C$ such that
\[c\frac{\gamma_2(A)}{\sqrt{\log(m)}} \leq \herdisc(A) \leq C\gamma_2(A)\log(m).
\]
\end{lemma}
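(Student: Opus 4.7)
The plan is to combine Banaszczyk's vector balancing theorem with an optimal factorization $A = LR$ where $a := \|L\|_{2\to\infty}$ and $b := \|R\|_{1\to 2}$ satisfy $ab = \gamma_2(A)$. For any column subset $J \subseteq [n]$, writing $A|_J \chi = L \sum_{j \in J} \chi_j r_j$ with $r_j$ the $j$-th column of $R$, I would apply Banaszczyk to the unit vectors $\{r_j/b\}_{j \in J}$ together with the target body $K := \{u : \|Lu\|_\infty \le t\}/b$. The required Gaussian measure bound $\gamma_k(K) \ge 1/2$ for $t = \Theta(ab\sqrt{\log m})$ is routine: since the rows $\ell_i$ of $L$ satisfy $\|\ell_i\|_2 \le a$, each coordinate $\langle \ell_i, g\rangle$ of $Lg$ for $g \sim N(0, I_k)$ is subgaussian with variance at most $a^2$, so the bound follows from standard Gaussian tails and a union bound over the $m$ rows. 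Banaszczyk then yields signs $\chi$ with $\|L\sum_j \chi_j r_j\|_\infty \le 5t = O(\gamma_2(A)\sqrt{\log m})$, which implies the stated $O(\gamma_2(A)\log m)$ bound.

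\textbf{Lower bound.} For $\herdisc(A) \ge c \gamma_2(A)/\sqrt{\log m}$ I would interpose an SDP relaxation. Define the hereditary vector discrepancy $\hdisc_{\mathrm{vec}}(A)$ by replacing each sign $\chi_j \in \{\pm 1\}$ with a unit vector $u_j$ in Euclidean space and minimizing $\max_i \|\sum_{j \in J} A_{ij} u_j\|_2$ over such assignments, then maximizing over column subsets $J$. Trivially $\hdisc_{\mathrm{vec}}(A) \le \herdisc(A)$ by embedding $\pm 1$ as unit vectors on the real line. The key equivalence to establish is $\gamma_2(A) = \Theta(\hdisc_{\mathrm{vec}}(A))$, from which $\gamma_2(A) \le O(\herdisc(A))$ follows immediately. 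To recover the sharper $\sqrt{\log m}$-lossy form, I would instead round an optimal SDP solution for $\hdisc_{\mathrm{vec}}(A)$ to a $\pm 1$ coloring using the same Banaszczyk machinery as above, giving $\herdisc(A) \le O(\sqrt{\log m})\cdot \hdisc_{\mathrm{vec}}(A) = O(\sqrt{\log m}) \gamma_2(A)$, and rearrange.

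\textbf{Main obstacle.} The chief technical step is the SDP-duality equivalence $\gamma_2(A) = \Theta(\hdisc_{\mathrm{vec}}(A))$. The easy direction substitutes an optimal factorization $A = LR$ into the SDP defining $\hdisc_{\mathrm{vec}}$, choosing the vectors $u_j$ proportional to the columns of $R$ and using the row bounds on $L$ to control $\|\sum A_{ij} u_j\|_2$. The harder direction begins with the standard PSD-block characterization $\gamma_2(A) = \inf\{\max_i \sqrt{X_{ii}} \cdot \max_j \sqrt{Y_{jj}}\}$ over PSD block matrices with off-diagonal block equal to $A$, takes its SDP dual, and matches it against the SDP for $\hdisc_{\mathrm{vec}}$ after eliminating the outer maximum over $J$ by LP duality on the per-$J$ constraints. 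This bookkeeping is the most delicate part; once done, the Banaszczyk rounding on either side enters as a black box via the Gaussian measure estimates above.
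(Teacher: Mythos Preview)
The paper does not prove this lemma at all: it is quoted verbatim as a result of Matou\v{s}ek, Nikolov, and Talwar and used as a black box. So there is no ``paper's own proof'' to compare against; your sketch is effectively a sketch of the cited paper's argument, and it is largely faithful to it. The upper bound via an optimal factorization $A=LR$ combined with Banaszczyk's vector balancing theorem is exactly the route taken there, and as you note it in fact yields the stronger $O(\gamma_2(A)\sqrt{\log m})$. For the lower bound, the identification $\gamma_2(A)=\Theta(\hdisc_{\mathrm{vec}}(A))$ via SDP duality, together with the trivial $\hdisc_{\mathrm{vec}}(A)\le\herdisc(A)$, is also the Matou\v{s}ek--Nikolov--Talwar route and already gives $\herdisc(A)\ge\Omega(\gamma_2(A))$, stronger than what the lemma states.

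There is, however, one genuine slip in your lower-bound paragraph. The sentence beginning ``To recover the sharper $\sqrt{\log m}$-lossy form\ldots'' is a logical error: rounding the SDP solution via Banaszczyk yields $\herdisc(A)\le O(\sqrt{\log m})\cdot\hdisc_{\mathrm{vec}}(A)$, which is just the \emph{upper} bound again and cannot be ``rearranged'' into a lower bound on $\herdisc(A)$. That sentence should simply be deleted; the preceding sentence already delivers the lower bound (indeed without any $\sqrt{\log m}$ loss). With that correction your outline is sound, and the only nontrivial technical debt remaining is the SDP-duality identity $\gamma_2(A)=\Theta(\hdisc_{\mathrm{vec}}(A))$, which you correctly flag as the main obstacle.
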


Combining \Cref{lem:herdisc_factorization} with our results, we have the following corollary:
\begin{corollary}
    Let $A_G$ be the incident matrix for unique shortest path system on an $n$ vertices graph $G$. Then if $G$ is bipartite, planar, or any general graph, then $\gamma_2(A_G) = \widetilde \Theta(n^{1/4})$.
\end{corollary}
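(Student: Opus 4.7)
The plan is to apply \Cref{lem:herdisc_factorization} as a black box, using the hereditary discrepancy bounds already established in the paper to transfer them into bounds on $\gamma_2(A_G)$. The key observation is that \Cref{lem:herdisc_factorization} sandwiches $\herdisc(A)$ between $\gamma_2(A)/\sqrt{\log m}$ and $\gamma_2(A) \log m$ up to absolute constants, and so $\gamma_2$ and $\herdisc$ agree up to a $\operatorname{polylog}(m)$ factor. Since here $A_G$ has at most $m = \binom{n}{2}$ rows (one per unique shortest path) over $n$ columns, $\log m = O(\log n)$, so a polylog in $m$ is absorbed into the $\widetilde{\Theta}(\cdot)$ notation.

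First, I would establish the upper bound $\gamma_2(A_G) = \widetilde{O}(n^{1/4})$. The main upper bound of the paper asserts $\herdisc(A_G) = \widetilde{O}(n^{1/4})$ for undirected graphs (and in particular for bipartite and planar graphs, as their shortest-path systems are special cases). Rearranging the left inequality of \Cref{lem:herdisc_factorization} gives
\[
\gamma_2(A_G) \;\leq\; \frac{\sqrt{\log m}}{c}\cdot \herdisc(A_G) \;=\; \widetilde{O}(n^{1/4}).
\]

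Next, for the lower bound $\gamma_2(A_G) = \widetilde{\Omega}(n^{1/4})$, I would invoke the main lower-bound constructions of the paper, which exhibit bipartite and planar graphs whose consistent shortest-path systems have $\herdisc(A_G) = \widetilde{\Omega}(n^{1/4})$. Rearranging the right inequality of \Cref{lem:herdisc_factorization} yields
\[
\gamma_2(A_G) \;\geq\; \frac{\herdisc(A_G)}{C \log m} \;=\; \widetilde{\Omega}(n^{1/4}).
\]
Combining the two directions gives $\gamma_2(A_G) = \widetilde{\Theta}(n^{1/4})$ in all three settings.

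There is no real obstacle; the work of the paper is in pinning down $\herdisc(A_G)$ tightly, and once that is done, the factorization norm follows immediately from \Cref{lem:herdisc_factorization}. The only minor subtlety is verifying that we are applying \Cref{lem:herdisc_factorization} with the right dimension in the $\log m$ factor: one should note that $\log m = \Theta(\log n)$ since the number of shortest paths is polynomial in $n$, which ensures the polylog slack in the sandwich lemma is compatible with the $\widetilde{\Theta}$ notation used in our hereditary discrepancy bounds.
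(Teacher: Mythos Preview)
Your proposal is correct and matches the paper's approach exactly: the paper simply states that the corollary follows by ``combining \Cref{lem:herdisc_factorization} with our results,'' and you have spelled out precisely that combination, including the observation that $\log m = \Theta(\log n)$ so the polylog slack is absorbed into the $\widetilde{\Theta}$ notation.
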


\section{Differentially Private All Sets Range Queries} \label{apdx:asrq}
Here we introduce the DP-ASRQ problem and show its connection to the DP-APSD problem.

Given an undirected graph $G$, the problem of All Sets Range Queries (ASRQ) considers each edge associated with a certain attribute, and the range is the set of edges along a shortest path. Two type of queries are considered here: the \emph{bottleneck} query returns the largest/smallest attribute in a range; while the \emph{counting} query returns the summation of all attributes. 

At a schematic level, the ASRQ problem with counting queries is very similar to the APSD problem: the graph topology is public and the edge attributes are considered private. Only that the shortest path structure is dictated by the edge weights to be protected in the APSD problem, however, irrelevant to the edge attributes in the ASRQ problem. This subtle difference has consequential caveat in the algorithm design: the graph topology can be used, for example, to construct an exact hopset first then apply perturbations to the edge attributes in the ASRQ problem; nevertheless, approach of this kind violates protecting the sensitive information in the APSD problem, since adversarial inference can be made on edge weights when the graph topology information is used. This observation also notes that the APSD problem is strictly harder than the ASRQ problem: recall that the best additive error upper bound of the APSD problem is still $\Oish(n^{1/2})$, while the other has $\Oish(n^{1/4})$~\cite{deng2023differentially}. Therefore, plugging in our new hereditary lower bound essentially closes the gap for the ASRQ problem.

\begin{corollary}[Formal version of \Cref{thm:intro-dp-apsd-lb}]
\label{cor:dp-asrq-lb}
 Given an $n$-node undirected graph, for any $\beta \in (0,1)$ and any $\varepsilon, \delta>0$, no $(\varepsilon, \delta)$-DP algorithm for ASRQ has additive error of $o(n^{1/4})$ with probability $1-\beta$.
\end{corollary}

\begin{definition}[Neighboring weights~\cite{sealfon2016shortest}]
\label{def:neighboring-weights}
    For a graph $G = (V,E)$, let $w, w':E\rightarrow \mathbb{R}^{\geq 0}$ be two weight functions that map any $e \in E$ to a non-negative real number, we say $w, w'$ are neighboring, denoted as $w \sim w'$ if $
        \sum_{e\in E}|w(e)-w'(e)| \leq 1.$
\end{definition}

\begin{definition}[Differentially Private Range Queries]
\label{def:dp-asrq}
Let $(\R=(X, \S), f)$ be a system of range queries and $w, w': X \rightarrow \mathbb{R}^{\geq 0} $ be neighboring attribute functions.  Furthermore, let $\mathcal{A}$ be an algorithm that takes $(\R, f , w)$ as input. Then $\mathcal{A}$ is $(\varepsilon, \delta)$-differentially private on $G$ if, for all pairs of neighboring attribute functions $w, w'$ and all sets of possible outputs $\mathcal{C}$, we have 
$\Pr[\mathcal{A}(\R, f , w)\in \mathcal{C}] \leq e^{\varepsilon}\cdot \Pr[\mathcal{A}(\R, f , w') \in \mathcal{C} ]+\delta.$
If $\delta=0$, we say $\mathcal{A}$ is $\varepsilon$-differentially private on $G $. 
\end{definition}

To complete this section, we give the formal definition of the ASRQ problem above.  The definition of neighboring attributes follows \Cref{def:neighboring-weights}. The lower bound proof of \Cref{cor:dp-asrq-lb} simply imitates the APSD problem, because the reduction from the linear query problem still holds despite the difference between two problems. The proof is omitted to avoid redundancy.

\end{document}